\documentclass[12pt]{article}

\usepackage[margin = 1in]{geometry}
\usepackage{amsgen,amsmath,amstext,amsbsy,amsopn,amssymb,amsthm}
\usepackage[dvips]{graphicx}
\usepackage[colorlinks=true,linkcolor=blue,citecolor=blue,urlcolor=blue]{hyperref}
\usepackage{color}
\usepackage{xcolor}
\usepackage[round]{natbib}
\usepackage[normalem]{ulem}
\usepackage{dsfont}
\usepackage{float}
\usepackage{subcaption}
\usepackage{setspace}
\usepackage[titlenumbered,ruled,linesnumbered]{algorithm2e}
\usepackage{algpseudocode}
\usepackage{booktabs}
\usepackage{multirow}
\usepackage{makecell} 

\usepackage{mdframed} 
\usepackage{enumitem} 
\usepackage{letltxmacro}

\floatstyle{plain}
\newfloat{guidelinebox}{htbp}{loc}
\floatname{guidelinebox}{Box}

\newcommand{\one}{\mathds{1}}
\newcommand{\cee}{\text{CEE}}
\newcommand{\mee}{\text{MEE}}
\newcommand{\ate}{\text{ATE}}
\newcommand{\rate}{\text{rATE}}

\newcommand{\rank}{\text{rank}}
\newcommand{\aaa}{\text{AA}}
\newcommand{\bA}{\bar{A}}
\newcommand{\ba}{\bar{a}}
\newcommand{\RR}{\mathbb{R}}
\newcommand{\PP}{\mathbb{P}}
\newcommand{\II}{\mathbb{I}}
\newcommand{\tp}{\tilde{p}}
\newcommand{\pto}{\stackrel{p}{\to}}
\newcommand{\dto}{\stackrel{d}{\to}}
\newcommand{\w}{\text{w}}

\newcommand{\tD}{\tilde{D}}
\newcommand{\cH}{\mathcal{H}}
\newcommand{\tcH}{\tilde{\mathcal{H}}}
\newcommand{\tL}{\tilde{L}}

\newcommand{\tc}{\tilde{c}}
\newcommand{\cT}{\mathcal{T}}
\newcommand{\cP}{\mathcal{P}}

\newcommand{\aspn}{\text{ASPN}}
\newcommand{\spnc}{\text{SPNC}}


\newcommand{\revisionadd}[1]{#1}
\newenvironment{revisionaddblock}{}{}
\newcommand{\revisiondel}[1]{}

\usepackage[capitalize, nameinlink, sort]{cleveref}
\crefname{thm}{Theorem}{Theorems}
\Crefname{thm}{Theorem}{Theorems}
\crefname{cor}{Corollary}{Corollary}
\Crefname{cor}{Corollary}{Corollary}
\crefname{lem}{Lemma}{Lemmas}
\Crefname{lem}{Lemma}{Lemmas}
\crefname{asu}{Assumption}{Assumptions}
\Crefname{asu}{Assumption}{Assumptions}
\crefname{rmk}{Remark}{Remarks}
\Crefname{rmk}{Remark}{Remarks}
\crefname{defn}{Definition}{Definitions}
\Crefname{defn}{Definition}{Definitions}
\crefname{thmlisti}{Theorem}{Theorems}
\Crefname{thmlisti}{Theorem}{Theorems}
\crefname{asulisti}{Assumption}{Assumptions}
\Crefname{asulisti}{Assumption}{Assumptions}
\crefname{guidelinebox}{box}{boxes} 
\Crefname{guidelinebox}{Box}{Boxes} 

\newlist{thmlist}{enumerate}{1}
\setlist[thmlist]{label=(\roman*), ref=\thethm\,(\roman*)}

\newlist{asulist}{enumerate}{1}
\setlist[asulist]{label=(\roman*), ref=\theasu\,(\roman*)}

\usepackage{thmtools}
\declaretheorem[
    name=Theorem,
    Refname={Theorem,Theorems}]{thm}
\declaretheorem[
    name=Lemma,
    Refname={Lemma,Lemmas},
    sibling=thm]{lem}

\declaretheorem[
    name=Assumption,
    Refname={Assumption,Assumptions}]{asu}
\declaretheorem[
    name=Remark,
    Refname={Remark,Remarks}]{rmk}

\usepackage{xpatch}
\xapptocmd\normalsize{%
 \abovedisplayskip=4pt plus 2pt minus 2pt
 \belowdisplayskip=4pt plus 2pt minus 2pt
}{}{}

\title{Micro-randomized Trials with Categorical Treatments and Binary Proximal Outcome: Causal Effect Estimation and Sample Size Calculation}
\author{Jeremy Lin, Tianchen Qian\thanks{University of California, Irvine. Email: t.qian@uci.edu}}
\date{Department of Statistics, University of California, Irvine}

\def\spacingset#1{\renewcommand{\baselinestretch}%
{#1}\small\normalsize} \spacingset{1}

\begin{document}

\maketitle

\spacingset{1.2}

\begin{abstract}

Micro-randomized trials (MRTs) provide a framework for evaluating the marginal and moderated effects of mobile health (mHealth) interventions. In many applications, treatments take the form of categorical variables with multiple levels, such as different message contents or delivery strategies. Many scientifically meaningful longitudinal outcomes in mHealth studies are binary, such as whether a participant opens an app, engages with content, or completes a target behavior following a decision point at which treatment is randomized. This paper focuses on MRTs with categorical treatments and binary proximal outcomes. We define the causal excursion effect, propose an estimator called EMEE-catA, and derive a sample size formula for comparing categorical treatment levels that controls the type I error rate and guarantees power under working assumptions. We conduct extensive simulation studies to evaluate the operating characteristics of the proposed sample size formula, including robustness to violations of these assumptions. We further provide practical guidance for implementing the proposed approach to ensure adequate power in real-world MRTs. The methods are illustrated using data from the Drink Less MRT.

\noindent \textbf{Keywords:} availability, binary outcome, categorical treatment, causal excursion effect, micro-randomized trial, sample size calculation
\end{abstract}

\newpage

\tableofcontents

\newpage

\spacingset{1.9}


\section{Introduction}
\label{sec:introduction}

Support for healthy behavior change can be delivered through mobile technologies, such as smartphones and wearable devices, as part of mobile health (mHealth) interventions. These interventions often consist of reminders in the form of push notifications, text messages, or audible alerts delivered at randomized decision points to facilitate behavior change. Micro-randomized trials (MRTs) are a common experimental design for developing and optimizing mHealth interventions \citep{dempsey2015randomised, liao2016sample,qian2020micro}. A defining feature of an MRT is the repeated randomization of participants among intervention options at multiple decision points; this repeated randomization enables causal inference for marginal and moderated treatment effects, allowing researchers to learn whether, when, and in what contexts an intervention option is effective. The key estimand in such analyses is the causal excursion effect (CEE) \citep{boruvka2018assessing, qian2021estimating}, which is defined on the proximal outcome, a time-varying outcome measured shortly after each micro-randomization. For a binary treatment, such as message versus no message, the CEE at a decision point contrasts the proximal outcome under delivering the message with the proximal outcome under withholding it. CEEs are used to inform which intervention components to keep, modify, or drop in the next iteration of the intervention \citep{klasnja2019efficacy,nahum2021translating,phillips2026pilot,golbus2026impact}.

In an MRT, intervention options denote the full set of actions that may be delivered at a given decision point. In practice, the number of intervention options often exceeds two, reflecting the use of different behavioral theories as well as variations in message framing or content. For example, in the Drink Less MRT, an mHealth intervention aimed at reducing harmful alcohol consumption, the intervention options consisted of messages framed in multiple ways, as well as the option of no push notification \citep{bell2020notifications}. In the MARS MRT, participants were micro-randomized among three options: no prompt, a prompt recommending a brief self-regulatory strategy, and a prompt recommending a more effortful strategy \citep{nahumshani2021mars}. In the MiWaves MRT, which evaluated a cannabis reduction app for emerging adults, participants were micro-randomized \revisiondel{to no message or to one of six message types}\revisionadd{among seven options: no message and six message types} that varied in length and in the response the message asked for \citep{coughlin2024miwaves}. In this paper, we refer to such intervention options as categorical treatments. Evaluating differential effects among these options is critical for optimizing intervention design.

Our previous work proposed an estimator for the causal effects of categorical treatments in MRTs with a continuous proximal outcome \citep{lin2025micro}. However, binary proximal outcomes are also very common in MRTs and often the primary focus, because they naturally capture engagement or adherence, the target behavior of many mHealth interventions. For example, in the Drink Less MRT, researchers were primarily interested in whether the participant opened the app following a prompt. In MARS, the primary proximal outcome is whether the individual engaged in a self-regulatory strategy following a decision point \citep{nahumshani2021mars}.

In this paper, we focus on MRTs with categorical treatments and a binary proximal outcome. We extend the Estimator for Marginal Excursion Effects (EMEE) method proposed by \citet{qian2021estimating} to accommodate categorical treatments for MRTs with binary outcomes. We call the proposed estimator EMEE-catA, where catA indicates categorical treatments. We establish the asymptotic normality of the estimator even when the nuisance outcome regression model is misspecified. This robustness is particularly important in MRTs, where the high-dimensional nature of the observed history can make correct model specification challenging.

Beyond estimation, we develop a sample size formula for detecting a pre-specified differential effect among treatment options, such as a difference between the effects of two active options. More generally, the formula accommodates any hypothesis that can be written as a linear combination of the marginal causal excursion effects of the active options, including a simultaneous test that several options have no effect. We show that, under a set of working assumptions, the proposed sample size formula achieves the desired power while controlling the type I error rate. Through extensive simulation studies, we demonstrate that violations of certain working assumptions do not affect power; for assumptions whose violations do impact power, we carefully examine and discuss the resulting implications. We also provide practical guidance for applying the sample size formula in practice. This work extends the existing literature on sample size calculations for MRTs, which previously focused on binary treatment options, by enabling more granular comparisons among multiple treatment levels \citep{liao2016sample, cohn2023sample}.

The remainder of the paper is organized as follows. \Cref{sec:definition} introduces notation and defines the causal excursion effect for categorical treatments. \revisiondel{section 3}\revisionadd{\Cref{sec:estimator}} presents the proposed estimator for the causal excursion effect. \Cref{sec:sample-size-formula} describes the sample size formula and provides practical guidance for its use. \Cref{sec:application} presents case studies illustrating the estimator and sample size methodology using data from the Drink Less MRT. \Cref{sec:discussion} concludes with a discussion. Due to space constraints, comprehensive simulation studies evaluating the estimator and sample size formula are provided in Sections~\ref{sec:simulation-estimator} and~\ref{sec:simulation-sample-size-formula} of the Supplementary Material.

\section{Definition and Assumptions}
\label{sec:definition}

\subsection{Notation}

Consider an MRT that includes data from $n$ individuals with $T$ decision points where treatments are randomized. Throughout this paper, omission of the subscript $i$ implies that variables correspond to observations from a typical individual. Let $A_t$ denote the treatment assignment at decision point $t$. $A_t$ can take categorical value $A_t \in \{0, 1, \cdots, K\}$, with 0 being the reference level (e.g., no treatment). Let $X_t$ denote the information collected for the individual after decision point $t-1$ and up to decision point $t$; in particular, $X_1$ includes baseline covariates. Let $I_t \in X_t$ be the availability indicator, where $A_t = 0$ deterministically when $I_t = 0$ (for example, if the individual is not available). The overbar notation is used to denote a sequence of variables up to a decision point; for instance, $\bA_t = (A_1, \ldots, A_t)$. Information observed up to decision point $t$ is denoted by $H_t = (X_1, A_1, X_2, A_2, \cdots, X_{t-1}, A_{t-1}, X_t) = (\bar{X}_t, \bA_{t-1})$.
The randomization probability for $A_t$ may depend on $H_t$ and is denoted by $p_t(k | H_t) = P(A_t = k \mid H_t)$ for $0 \leq k \leq K$. The observed data for a typical individual are represented by $O = (X_1, A_1, \cdots, X_T, A_T, X_{T + 1})$. We assume that the data from different individuals are independent and identically distributed draws from an unknown distribution $\mathcal{P}$. We consider settings where the proximal outcome $Y_{t,\Delta}$ for the treatment at decision point $t$ may be measured over a time window of subsequent $\Delta$ decision points; here $\Delta$ is a fixed positive integer. Specifically, let $Y_{t,\Delta} = y(X_{t+1}, A_{t+1}, \cdots, X_{t + \Delta -1}, A_{t + \Delta -1}, X_{t + \Delta})$ for a given function $y(\cdot)$. We will use $Y_t$ to denote $Y_{t,\Delta = 1}$, the most commonly used proximal outcome in practice. Throughout this paper, all expected values are taken with respect to $\mathcal{P}$, unless stated otherwise.

For an arbitrary function $f(\cdot)$ of the generic observed data $O$, we define $\mathbb{P}_n f(O)$ as the sample mean $\frac{1}{n} \sum_{i = 1}^n f(O_i)$, where $O_i$ denotes the $i$-th individual's data. We use $\bar{0}_m$ to denote the $1 \times m$ row vector with all entries equal to 0. We use $\RR^{m \times n}$ to denote the space of $m\times n$ real matrices and $\RR^m$ to denote the space of $m \times 1$ real column vectors. We use $\one(\cdot)$ to denote the indicator function. We use $\II_p$ to denote the $p\times p$ identity matrix. For a positive integer $n$, we use $[n]$ to denote the set $\{1,2,\ldots,n\}$. For a vector $\alpha$ and a vector-valued function $f(\alpha)$, let $\dot{f}(\alpha) := \partial f(\alpha) / \partial \alpha^T$ denote the matrix where the $(i,j)$-th entry is the partial derivative of the $i$-th entry of $f$ with respect to the $j$-th entry of $\alpha$. We use $\otimes$ to denote the Kronecker product between two matrices (or vectors): for $B \in \RR^{d_1 \times d_2}$ and $C \in \RR^{d_3 \times d_4}$, $B \otimes C \in \RR^{d_1d_3 \times d_2d_4}$ is defined as
\begin{align*}
        B \otimes C = \begin{bmatrix}
        b_{1, 1}C &b_{1, 2}C  &\cdots &b_{1, d_2}C\\
        b_{2, 1}C &b_{2, 2}C  &\cdots &b_{2, d_2}C\\
        \vdots  &\vdots   &\vdots &\vdots\\
        b_{d_1, 1}C &b_{d_1, 2}C  &\cdots &b_{d_1, d_2}C\\
    \end{bmatrix}, 
\end{align*}
where $b_{i,j}$ is the $(i,j)$-th entry of $B$. Finally we define $0/0 = 0$.

\subsection{Potential outcomes and causal excursion effect}
To define treatment effects, we adopt the potential outcomes framework \citep{rubin1974estimating, robins1986new}. For an individual, let $X_t(\ba_{t-1})$ be the information that would have been observed had that individual been assigned to treatment sequence $\ba_{t-1}$. The potential outcomes for a typical individual are $\{X_1, A_1, X_2(a_1), A_2, \cdots, A_T, X_{T+1}(\ba_T): \text{for all } 0 \leq a_s \leq K, s \in [T] \}$.
Denote by $Y_{t, \Delta}(\ba_{t + \Delta -1})$ the potential outcome of $Y_{t, \Delta}$. The potential outcome of the history information $H_t$ is denoted by $H_t(\ba_{t-1}) = \{X_1, A_1, X_2(A_1), A_2, X_3(\ba_2), \cdots X_t(\ba_{t-1})\}$. For any $k \in [K]$ and $t \in [T]$, we define the Causal Excursion Effect (CEE) of treatment $k$ at time $t$ using the log relative risk scale:
\begin{align}
    \cee_{tk}(S_t) & = \log\frac{E\{Y_{t,\Delta}(\bA_{t-1}, k, \bar{0}_{\Delta - 1})\mid S_t(\bA_{t-1}), I_t(\bA_{t-1}) = 1\}}{E\{Y_{t,\Delta}(\bA_{t-1}, 0, \bar{0}_{\Delta - 1})\mid S_t(\bA_{t-1}), I_t(\bA_{t-1}) = 1\}} \label{eq:cee-def}
\end{align}
where $\bar{0}_{\Delta - 1}$ represents a length $(\Delta - 1)$ vector of zeros. Expression \eqref{eq:cee-def} represents the contrast between the expected outcome under two excursions: getting treatment $k$ at time $t$ and no treatment for the next $\Delta -1$ time points, versus no treatment at time $t$ and no treatment for the next $\Delta - 1$ time points. In both excursions, the treatment assignment up to time $t$ ($\bA_{t-1}$) is random and follows the treatment protocol of the micro-randomized trial. In \eqref{eq:cee-def}, $S_t \subset H_t$ is a set of effect modifiers of interest, and researchers' different choice of $S_t$ leads to different interpretation of CEE. For example, setting $S_t = \emptyset$ assesses a marginal effect that averages over all possible moderators. Setting $S_t = X_t$ assesses effect moderation by current covariates $X_t$.

\begin{rmk}[Interpretation and practical relevance of the CEE]
    \label{rmk:cee-interpretation}
    \normalfont
    We comment on two aspects of definition \eqref{eq:cee-def}. First, the expectations marginalize over the treatment history $\bA_{t-1}$, which follows the randomization protocol of the trial; the CEE therefore depends on the MRT randomization policy \citep{boruvka2018assessing, dempsey2020stratified, huch2025data}. This dependence is useful for intervention development: the CEE measures the effect of an excursion from the trial policy beginning at decision point $t$, so its sign and magnitude at a given value of $S_t$ indicate whether delivering treatment $k$ more or less often in that context would improve the proximal outcome, relative to the policy the trial used \citep{klasnja2015micro, nahumshani2018jitai, golbus2021microrandomized, liu2023microrandomized}. Note that it is not comparing assigning treatment $k$ at all times vs. assigning treatment 0 at all times. Second, the expectations condition on the decision point being available ($I_t(\bA_{t-1}) = 1$). When an individual is unavailable, delivering treatment is infeasible or inappropriate, so a treatment contrast at unavailable decision points is neither nonparametrically identifiable nor of scientific interest; conditioning on availability restricts the estimand to the decision points at which intervening is a valid option \citep{klasnja2015micro, boruvka2018assessing, seewald2019practical, liao2016sample}.
\end{rmk}

\subsection{Identification of parameters}

To express the causal excursion effect using the observed data, we make the following assumptions.
\begin{asu}
    \label{asu:causal-assumptions} 
    \normalfont
    \begin{itemize}
        \item[(a)] (SUTVA.) The observed data are equal to the potential outcome under the observed treatment assignment, and one's potential outcomes are not affected by others' treatment assignments. Specifically, $X_t = X_t(\bA_{t-1})$ and $Y_{t, \Delta} = Y_{t,\Delta}(\bA_{t +\Delta-1})$ for all $t \in [T]$.
        \item[(b)] (Positivity.) $P(A_t = k \mid H_t, I_t = 1) > 0$ almost surely for all $0 \leq k \leq K$ and $t \in [T]$.
        \item[(c)] (Sequential ignorability.) For $t \in [T]$, the potential outcomes $\{ X_{t+1}(\ba_t), \cdots, X_{T+1}(\ba_T) : \text{for all } 0 \leq a_s \leq K, s \in [T] \}$ are independent of $A_t$ conditional on $H_t$.
    \end{itemize}
\end{asu}

In an MRT, $A_t$ is sequentially randomized according to known treatment probabilities, $\{p_{t}(k|H_t): 0 \leq k \leq K \}$, which guarantees positivity and sequential ignorability. SUTVA may be violated if the treatment assigned to one individual influences the response of others, such as in a social network. This paper does not consider such possibilities. Under \Cref{asu:causal-assumptions}, CEE in \eqref{eq:cee-def} can be expressed in terms of the observed data distribution as
\begin{align}
    \cee_{tk}(S_t) & = \log \frac{E\bigg[E\bigg\{Y_{t,\Delta} \prod_{j = t + 1}^{t + \Delta - 1} \frac{\one(A_j = 0)}{p_j(0|H_j)} \bigg| A_t = k, H_t\bigg\} \bigg| S_t, I_t = 1 \bigg]}{E\bigg[ E \bigg\{Y_{t,\Delta} \prod_{j = t + 1}^{t + \Delta - 1} \frac{\one(A_j = 0)}{p_j(0|H_j)} \bigg| A_t = 0, H_t \bigg\} \bigg| S_t, I_t = 1 \bigg]} \label{eq:cee-identification}
\end{align}
Equation \eqref{eq:cee-identification} follows immediately from Lemma A.1 in \citet{qian2021estimating}.


\section{Estimating CEE for Categorical Treatment}
\label{sec:estimator}

We consider estimating a finite-dimensional unknown parameter $\beta$ in a parametric model for CEE. Specifically, suppose
\begin{align}
    \cee_{tk}(S_t) = f_t(S_t)^T \beta_k \text{ for all } t \in [T] \text{ and } k \in [K], \label{eq:CEE-parametric-model}
\end{align}
where $f_t(\cdot)$ is a prespecified vector-valued function and $\beta_k \in \RR^p$. The $p\times T$ matrix $(f_1,f_2,\ldots,f_T)$ is required to be of rank $p$ in order for $\beta_k$ to be identified. Let $\beta = (\beta_1^T, \ldots, \beta_K^T)^T \in \RR^{Kp}$. This model allows for nonlinear effects: e.g., $f_t(S_t)$ could include basis functions of $t$.

We propose an estimator for $\beta$ that generalizes the estimator of marginalized causal excursion effect (EMEE) in \citet{qian2021estimating} to the categorical treatment setting. We refer to this estimator as EMEE-catA. For each $t$, let $\tp_t(k | S_t)$ be a function of $(k,S_t)$ such that $\tp_t(k | S_t) > 0$ for $0 \leq k \leq K$ and $\sum_{k=0}^K \tp_t(k | S_t) = 1$; we later discuss the choice of $\tp_t(k | S_t)$. Define $J_t := \frac{\tp_t(A_t|S_t)}{p_t(A_t|H_t)} \prod_{j = t + 1}^{t + \Delta - 1} \frac{\one(A_j = 0)}{p_j(0|H_j)}$. Let $C_k(A_t) := \one(A_t = k) - \tp_t(k | S_t)$. Let $g_t(H_t)^T\alpha$ be a working model for $E(Y_{t,\Delta} \mid H_t, I_t = 1, A_t = 0)$.
The proposed estimator $\hat\beta$ is obtained by solving for $(\hat\alpha,\hat\beta)$ that satisfies $\PP_n m(\hat\alpha,\hat\beta) = 0$, where
\begin{align}
    m(\alpha,\beta) &:= 
    \sum_t^T I_t J_t \exp\bigg\{-\sum_{k = 1}^K C_k(A_t) f_t(S_t)^T\beta_k \bigg\} \nonumber\\
    &\times\bigg[Y_{t,\Delta} - \exp \bigg\{\sum_{k = 1}^K C_k(A_t) f_t(S_t)^T \beta_k + g_t(H_t)^T \alpha\bigg\} \bigg] 
    \begin{bmatrix}
        g_t(H_t) \\
        C_1(A_t) f_t(S_t)\\
        \vdots\\
        C_K(A_t) f_t(S_t)
    \end{bmatrix}. \label{eq:ee}
\end{align}
The first term in $J_t$ is similar to a stabilized inverse probability weight in marginal structural models \citep{robins2000marginal}, which is included because of the marginal aspect of \eqref{eq:cee-def}. The numerator probabilities, $\tp_t(k|S_t)$, do not affect the consistency of $\hat\beta$ but should be chosen as close to $p_t(k|H_t)$ as possible for efficiency. For example, one may fit a multinomial logistic regression with response $A_t$ and predictor $S_t$ to predict $\tp_t(k|S_t)$. The second term in $J_t$ is an inverse probability weight due to the $\bar{0}_{\Delta-1}$ in the potential outcomes in \eqref{eq:cee-def}. The centering of $A_t$ in $C_k(A_t)$ produces orthogonality between the estimation of $\beta$ and the estimation of the nuisance parameter $\alpha$, and this leads to the robustness of $\hat\beta$ even when $g_t(H_t)^T\alpha$ is misspecified (see \Cref{thm:CAN}). This robustness property is desirable because $H_t$ can be high-dimensional in an MRT with a large number of time points, making it very difficult to model $E \{Y_{t,\Delta} \mid H_t, I_t =1, A_t = 0 \}$ correctly. It is useful to note when the weight is trivial: if $\Delta = 1$ and the randomization probability depends on the history only through $S_t$ (for example, when the randomization probability is constant), then we may set $\tp_t(k|S_t) = p_t(k|H_t)$, which gives $J_t \equiv 1$ so that no weighting is needed. The weight cannot be made equal to 1 in two situations. When the randomization probability depends on elements of $H_t$ beyond $S_t$, the first term of $J_t$ ensures that $\hat\beta$ targets the effect conditional on the lower-dimensional $S_t$ rather than on the full history $H_t$. When $\Delta > 1$, the second term in $J_t$ accounts for the requirement of no treatment during the subsequent $\Delta - 1$ decision points in the excursions being contrasted. We establish the asymptotic property of $\hat\beta$ in \Cref{thm:CAN}, which is proven in Section~\ref{sec:proof-thm-CAN} of the Supplementary Material.

\begin{thm}[Asymptotic normality.]
    \label{thm:CAN}
    Suppose the parametric CEE model \eqref{eq:CEE-parametric-model} and \Cref{asu:causal-assumptions} hold.
    Suppose that randomization probabilities $\{p_t(k | H_t)\}_{0 \leq k \leq K, t \in [T]}$ are known. Suppose $\beta^0$ is the true value of $\beta$ under the data-generating distribution $\mathcal{P}$. Let $\dot{m}$ be the derivative matrix of $m(\alpha, \beta)$ with respect to $(\alpha, \beta)$. Let $(\hat\alpha, \hat\beta)$ be the solution to $\mathbb{P}_n m(\alpha, \beta) = 0$. Under regularity conditions, we have the following.
    \begin{itemize}
        \item[(i)] There exists $\alpha'\in\RR^q$ such that $(\hat\alpha,\hat\beta) \pto (\alpha',\beta^0)$ and $\sqrt{n}(\hat\beta - \beta^0) \dto N(0, M^{-1}\Sigma M^{-1,T})$, where
        \begin{align*}
            M & := \sum_{t=1}^T E ( I_t J_t U_t Y_{t,\Delta} D_t D_t^T ), \qquad U_t := e^{-\sum_{k = 1}^K C_k(A_t) f_t(S_t)^T \beta_k^0}, \\
            \Sigma & := \sum_{t=1}^T \sum_{s=1}^T E \{ I_t I_s J_t J_s r_t(\alpha',\beta^0) r_s(\alpha',\beta^0) D_t D_s^T \}, \\
            r_t(\alpha,\beta) & := e^{-\sum_{k = 1}^K C_k(A_t) f_t(S_t)^T\beta_k }Y_{t,\Delta} - e^{ g_t(H_t)^T \alpha}, \\
            D_t & := (C_1(A_t)f_t(S_t)^T, C_2(A_t)f_t(S_t)^T, \ldots, C_K(A_t)f_t(S_t)^T )^T.
        \end{align*}
        \item[(ii)] $\hat{M}^{-1} \hat\Sigma \hat{M}^{-1, T}$ is a consistent estimator for the asymptotic variance $M^{-1}\Sigma M^{-1,T}$, with $\hat{M} := \sum_{t=1}^T \PP_n ( I_t J_t \hat U_t Y_{t,\Delta} D_t D_t^T )$ and $\hat\Sigma := \sum_{t=1}^T \sum_{s=1}^T \PP_n \{ I_t I_s J_t J_s r_t(\hat\alpha,\hat\beta) r_s(\hat\alpha,\hat\beta) D_t D_s^T \}$, where $\hat U_t := e^{-\sum_{k = 1}^K C_k(A_t) f_t(S_t)^T \hat\beta_k}$.
        \item[(iii)] When $\{\tp_t(k | S_t)\}_{0 \leq k \leq K, t \in [T]}$ is estimated either parametrically or nonparametrically, the theorem conclusion still holds with the following modification: $\tp_t(k | S_t)$ replaced by its estimator $\hat\tp_t(k | S_t)$ in $\hat{M}^{-1} \hat\Sigma \hat{M}^{-1, T}$, and $\tp_t(k | S_t)$ replaced by the $L_2$-limit of $\hat\tp_t(k | S_t)$ in $M^{-1}\Sigma M^{-1,T}$.
    \end{itemize}
\end{thm}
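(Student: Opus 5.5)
The argument is a standard Z-estimation (M-estimation) argument for the stacked estimating function $m(\alpha,\beta)$ in \eqref{eq:ee}. Its central ingredient is an unbiasedness identity that holds at $\beta=\beta^0$ for every value of $\alpha$. Write $\epsilon_t := e^{-\sum_k C_k(A_t)f_t(S_t)^T\beta_k^0}Y_{t,\Delta} = U_tY_{t,\Delta}$. Then the $\beta$-block of $m(\alpha,\beta^0)$ equals $\sum_t I_tJ_t\{\epsilon_t - e^{g_t(H_t)^T\alpha}\}D_t$.

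\textbf{Step 1 (unbiasedness for arbitrary $\alpha$).} I will show that $E[\sum_t I_tJ_t\{U_tY_{t,\Delta} - e^{g_t^T\alpha}\}D_t]=0$ for all $\alpha$.

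First I integrate out $A_{t+1},\dots,A_{t+\Delta-1}$ using the product of inverse-probability weights in $J_t$. Combined with \Cref{asu:causal-assumptions} and Lemma A.1 of \citet{qian2021estimating}, as in \eqref{eq:cee-identification}, this gives $E[J_tY_{t,\Delta}\mid H_t,A_t=a,I_t=1] = \frac{\tp_t(a|S_t)}{p_t(a|H_t)}\mu_t(a,H_t)$. Here $\mu_t(a,H_t)$ is the mean of the weighted outcome under excursion $a$.

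Next I average over $A_t\sim p_t(\cdot|H_t)$. This replaces $p_t$ by $\tp_t$, so the expectation becomes $E[\sum_a \tp_t(a|S_t)\{e^{-\sum_k C_k(a)f_t^T\beta^0_k}\mu_t(a,H_t) - e^{g_t^T\alpha}\}D_t(a)\mid S_t,I_t=1]$, after conditioning on $S_t$.

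The $\alpha$-term vanishes because $\sum_a\tp_t(a|S_t)C_k(a)=0$; this is exactly the centering in $C_k$.

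For the $Y$-term, let $\nu_t(a,S_t):=E[\mu_t(a,H_t)\mid S_t,I_t=1]$. The model \eqref{eq:CEE-parametric-model} says $\nu_t(k,S_t)=\nu_t(0,S_t)e^{f_t^T\beta_k^0}$. Moreover $C_j(a)=\one(a=j)-\tp_t(j|S_t)$, so $\sum_jC_j(a)f_t^T\beta^0_j = f_t^T\beta_a^0 - c(S_t)$, with $\beta_0^0:=0$ and $c(S_t)$ free of $a$. Hence $e^{-\sum_k C_k(a)f_t^T\beta^0_k}\nu_t(a,S_t)=e^{c(S_t)}\nu_t(0,S_t)$ is constant in $a$. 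The same centering identity then kills the $Y$-term as well.

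\textbf{Step 2 (consistency).} Let $\alpha'$ be the solution of the population $\alpha$-block equation $E[m_\alpha(\alpha,\beta^0)]=0$. By Step 1, $(\alpha',\beta^0)$ solves $E\,m=0$. Under regularity conditions (compact parameter space, identifiability/uniqueness of the root, Glivenko--Cantelli class, dominated moments), the standard Z-estimator consistency theorem gives $(\hat\alpha,\hat\beta)\pto(\alpha',\beta^0)$.

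\textbf{Step 3 (asymptotic normality).} I Taylor expand $0=\PP_n m(\hat\alpha,\hat\beta)$ around $(\alpha',\beta^0)$. This gives $\sqrt n\,(\hat\theta-\theta^0) = -\{E\dot m\}^{-1}\sqrt n\,\PP_n m(\alpha',\beta^0)+o_p(1)$.

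The key structural fact is that $E\dot m$ is block lower-triangular in the sense that $E[\partial m_\beta/\partial\alpha]=0$ at $(\alpha',\beta^0)$. Indeed, $\partial m_\beta/\partial\alpha = -\sum_t I_tJ_te^{g_t^T\alpha}D_tg_t^T$, which has mean zero by the same centering argument used for the $\alpha$-term in Step 1.

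Therefore the $\beta$-block of the inverse is $(E\,\partial m_\beta/\partial\beta)^{-1}$, and the $\alpha$-component of the influence function drops out. Differentiating $m_\beta=\sum_tI_tJ_t\{e^{-\sum C_kf^T\beta_k}Y - e^{g^T\alpha}\}D_t$ in $\beta$ gives $-\sum_tI_tJ_tU_tY_{t,\Delta}D_tD_t^T$, whose expectation is $-M$. The influence function of $\hat\beta$ is thus $M^{-1}\sum_tI_tJ_tr_t(\alpha',\beta^0)D_t$. The CLT then yields the sandwich covariance $M^{-1}\Sigma M^{-1,T}$, which proves (i).

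\textbf{Step 4 (parts (ii) and (iii)).} Part (ii) follows from consistency of $(\hat\alpha,\hat\beta)$, continuity, and a uniform law of large numbers.

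For part (iii), I plug in $\hat\tp_t$. The main obstacle is showing that estimating $\tp_t$ does not change the limit. Step 1 holds for any $\tp_t$ that is a function of $S_t$, so the estimating equation is unbiased for any fixed limit $\tp^*$. Moreover, the derivative of $E\,m_\beta$ with respect to $\tp$ at $\beta^0$ is zero, again by the $a$-invariance shown in Step 1.

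To make this rigorous, I would treat $\tp$ as an additional nuisance. Parametrically, I would stack its score equations and show that the cross-derivative block vanishes. Nonparametrically, I would use a Donsker/stochastic-equicontinuity argument together with $L_2$ convergence, so that the remainder $\PP_n\{m(\cdot;\hat\tp)-m(\cdot;\tp^*)\}$ is $o_p(n^{-1/2})$ thanks to this orthogonality.
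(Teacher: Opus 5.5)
Your proposal is correct and follows essentially the same route as the paper's proof. The common steps are: unbiasedness of the $\beta$-block for every $\alpha$, obtained from the centering in $C_k$ together with the CEE model after conditioning on $(S_t, I_t=1)$ (your constant-in-$a$ argument is a compact version of the explicit algebra in the paper's Lemma A.1); then standard Z-estimation theory, the sandwich variance, plug-in consistency, and unbiasedness for arbitrary $\tp$ in part (iii).

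Your Step 3 is slightly sharper than the paper's. The paper asserts that both off-diagonal blocks of $E\dot m$ vanish, but only $E(\partial m_\beta/\partial\alpha^T)=0$ holds in general. The other block, $-\sum_t E\{I_tJ_tU_tY_{t,\Delta}g_t(H_t)D_t^T\}$, can be nonzero when the treatment effect is moderated by components of $g_t(H_t)$ that are not in $S_t$. As you note, the one vanishing block is all that is needed for the variance of $\hat\beta$.
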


\begin{rmk}
    \normalfont
    When the parametric model for CEE \eqref{eq:CEE-parametric-model} is correct, the choice of $\{\tp_t(k | S_t)\}_{0 \leq k \leq K, t \in [T]}$ only affects the asymptotic variance of $\hat\beta$ and does not affect the consistency of $\hat\beta$. However, when model \eqref{eq:CEE-parametric-model} is misspecified, different choices of $\{\tp_t(k | S_t)\}_{0 \leq k \leq K, t \in [T]}$ lead to different probability limits of $\hat\beta$. For example, for the immediate effect ($\Delta = 1$), suppose that for each $k \in [K]$ the analysis model asserts a constant marginal effect $\cee_{tk} (\emptyset) = \beta_k$ for all $t \in [T]$, while the true $\cee_{tk}(\emptyset)$ is not constant in $t$. In this case, if for each $0 \leq k \leq K$ we set $\tp_t(k|S_t)$ to be a constant over $t$, then each $\hat\beta_k$ converges in probability to (recall $Y_t := Y_{t,\Delta = 1}$)
    \begin{align*}
        \beta_k'  = \log \frac{\sum_{t = 1}^T E\{E(Y_t \mid H_t, A_t = k)\mid I_t = 1 \} E(I_t)}{\sum_{t = 1}^T E\{E(Y_t \mid H_t, A_t = 0)\mid I_t = 1 \} E(I_t)},
    \end{align*}
    which further simplifies to 
    \begin{align*}
        \beta_k'  = \log \frac{\sum_{t = 1}^T E(Y_t \mid I_t = 1, A_t = k) E(I_t)}{\sum_{t = 1}^T E(Y_t \mid I_t = 1, A_t = 0) E(I_t)}
    \end{align*}
    if the randomization probability does not depend on $H_t$, i.e., $p_t(k|H_t) = p_t(k)$, so that $A_t$ is independent of $H_t$ given $I_t = 1$.
\end{rmk}

\begin{rmk}[Connection to GEE and M-estimation]
    \label{rmk:gee-mestimation}
    \normalfont
    The estimator $(\hat\alpha, \hat\beta)$ solving $\PP_n m(\alpha,\beta) = 0$ is an M-estimator and \Cref{thm:CAN} follows from standard M-estimator theory \citep{stefanski2002calculus}. The estimating function \eqref{eq:ee} sums contributions across decision points without modeling the dependence among them, analogous to a GEE under working independence \citep{zeger1986longitudinal}, and analogously the repeated measures are accounted for in the standard error estimate so no within-individual correlation structure needs to be modeled. Two features distinguish \eqref{eq:ee} from a GEE. First, the centered treatment indicators $C_k(A_t)$ ensure that consistency of $\hat\beta$ requires only the CEE model \eqref{eq:CEE-parametric-model} to be correct, while the nuisance model $g_t(H_t)^T\alpha$ may be arbitrarily misspecified. A GEE fit does not have this robustness property as it would in general require the conditional mean model to be correctly specified. Second, the weight $J_t$ accounts for the marginal, excursion-based definition of the estimand.
\end{rmk}



\section{Sample Size Formula with Categorical Treatment}
\label{sec:sample-size-formula}

\subsection[Hypothesis, Test Statistic, and Rejection Region that Controls the Type I Error Rate]{Hypothesis, Test Statistic, and Rejection Region that Controls the Type I Error Rate}
\label{subsec:hypothesis-test-statistic-rejection-region}

For sample size calculation, we focus on the most common setting in planning an MRT, where the interest is in the immediate, marginal CEE (i.e., $\Delta = 1$ and $S_t = \emptyset$ in \eqref{eq:cee-def}), and the randomization probability at each decision point is either constant or dependent only on $t$. In this setting, we define $p_t(k) := P(A_t = k \mid I_t = 1) \equiv P(A_t = k \mid H_t, I_t = 1)$, and we define the marginal excursion effect (MEE) as the CEE with $\Delta = 1$ and $S_t = \emptyset$ (recall $Y_t := Y_{t,\Delta=1}$):
\begin{align}
    \mee_k(t) & = \log\frac{E\{Y_{t}(\bA_{t-1}, k)\mid  I_t(\bA_{t-1}) = 1\}}{E\{Y_{t}(\bA_{t-1}, 0)\mid I_t(\bA_{t-1}) = 1\}}\quad \text{for } t \in [T], k \in [K].\label{eq:mee-def}
\end{align}
A positive $\mee_k(t)$ indicates that the treatment level $k$ at time $t$ is effective compared to no treatment (assuming that a larger $Y_t$ is better).

Let $\mee_{1:K}(t) = (\mee_1(t), \mee_2(t), \ldots, \mee_K(t))^T$. We consider testing for 
\begin{align}
    \cH_0: L \times \mee_{1:K}(t) = 0 \text{ for all } t \in [T] \text{ v.s. } \cH_1: L \times \mee_{1:K}(t) \neq 0 \text{ for some } t \in [T], \label{eq:null-and-alternative}
\end{align}
where $L \in \RR^{v \times K}$ for some integer $v$. The researcher can specify different $L$ to represent different scientific hypotheses. For example, if we set $v = 2$ and $L$ to be a $2 \times K$ matrix with $(1,1)$ and $(2,2)$ entries being $1$ and all other entries being 0, then this gives $\cH_0: \mee_1(t) = \mee_2(t) = 0$ for all $t \in [T]$, a simultaneous test for null effect (compared to the reference level) of both treatment options 1 and 2. If we set $v = 1$ and $L$ be a $1 \times K$ matrix with $(1,1)$-th entry being $1$, $(1,2)$-th entry being $-1$, and all other entries being 0, then this gives $\cH_0: \mee_1(t) = \mee_2(t)$ for all $t \in [T]$, a test for the equivalence of treatment levels 1 and 2.

Nonparametrically testing for all deviations from $L \times \mee_{1:K}(t) = 0$ for each $t$ will result in low power for detecting specific  deviations. Instead we consider a working model for $\mee_k(t)$ that is parametric in $t$. Specifically, we consider the working assumption 
\begin{align}
    \mee_k(t) = f_t^T\beta_k \text{ for } t \in [T] \text{ and } k \in [K], \label{eq:working-assumption-on-MEE}
\end{align}
where $\beta_k \in \RR^p$ and $f_t$ is a predefined $p$-dimensional vector-valued function that only depends on $t$. We focus on calculating sample size for detecting target alternatives that satisfy \eqref{eq:working-assumption-on-MEE}. The particular form of \eqref{eq:working-assumption-on-MEE} that is scientifically plausible is typically decided in consultation with the scientific team. For instance, if the scientific team anticipates that the intervention's effect will not substantially change over time, a constant-in-time $\mee_k(t)$ with $f_t = 1$ would be an appropriate choice. If the scientific team anticipates that the intervention's effect might start near zero, increase gradually early in the study, and potentially decrease later, a quadratic-in-time $\mee_k(t)$ with $f_t = (1, t, t^2)^T$ could be appropriate. We note that although the sample size formula will be derived under the parametric working assumption \eqref{eq:working-assumption-on-MEE}, we show via extensive simulation studies that the sample size formula can yield adequate power even under certain violations of the working assumption (see \Cref{subsec:practical-guideline} and Section~\ref{sec:simulation-sample-size-formula} of the Supplementary Material).

Under the working assumption \eqref{eq:working-assumption-on-MEE}, we derive in Section~\ref{subsec:proof-L-tilde-matrix} of the Supplementary Material that the null and the alternative hypotheses in \eqref{eq:null-and-alternative} are equivalent to
\begin{align}
    \tcH_0: \tL \beta = 0 \quad \text{v.s.} \quad \tcH_1: \tL \beta \neq 0, \label{eq:null-and-alternative-tilde}
\end{align}
where $\tL:= L \otimes \II_p$ and $\beta := (\beta_1^T, \ldots, \beta_K^T)^T$. We construct a Wald-type test statistic based on the estimator $\hat\beta$ proposed in \Cref{sec:estimator}. Specifically, let $g_t$ be a predefined $q$-dimensional vector-valued function that only depends on $t$, and let \revisiondel{$g_t^T\alpha$}\revisionadd{$\exp(g_t^T\alpha)$} be a working model for $E(Y_t \mid I_t = 1, A_t =0)$. The function $g_t$ should be chosen such that $p_t(k) f_t$ lies in the linear span of $g_t$ for each $k \in [K]$; for example, include $f_t$ as a subvector of $g_t$ when the randomization probabilities are constant in $t$. Let $(\hat\alpha, \hat\beta)$ be the estimator that solves $\PP_n m(\alpha,\beta) = 0$, with the definition of $m(\alpha,\beta)$ modified by replacing $f_t(S_t)$, $g_t(H_t)$, $\tp_t(k|S_t)$, $p_t(k|H_t)$, and $J_t$ by $f_t$, $g_t$, $p_t(k)$, $p_t(k)$, and 1, respectively.
The Wald-type test statistic is defined as
\begin{align}
        \cT = n (\tL\hat\beta)^T (\tL\hat{M}^{-1} \hat{\Sigma} \hat{M}^{-1, T} \tL^T)^{-1} (\tL \hat\beta), \label{eq:def-test-stat}
\end{align}
with $\hat{M}$ and $\hat\Sigma$ defined in \Cref{thm:CAN}(ii) but with the same modifications as those for $m(\alpha,\beta)$.

Under \eqref{eq:working-assumption-on-MEE} and $\tcH_0$, it follows from \Cref{thm:CAN} that the large sample distribution of $\cT$ is $\chi^2_l$, a chi-squared distribution with \revisiondel{$l = \rank(L)$}\revisionadd{$l = \rank(\tL)=p\cdot\rank(L)$} degrees of freedom. Thus, a hypothesis test that uses the critical value of the chi-squared distribution will asymptotically control the type I error rate at the nominal level. To correct the downward bias of the sandwich estimator $\hat{M}^{-1} \hat{\Sigma} \hat{M}^{-1, T}$ when the sample size $n$ is small, we instead use the critical value from a \revisiondel{scaled $F$-distributions}\revisionadd{scaled $F$ distribution} because $\frac{n - q - l}{l (n - q - 1)} \cT$ approximately follows $F_{l, n - q - l}$, an $F$-distribution with degrees of freedom $(l, n - q - l)$ \citep{pan2002small}. Therefore, the rejection region of a test with significance level $\eta$ is 
\begin{align}
    \bigg\{ \cT : \frac{n - q - l}{l (n - q - 1)} \cT > F^{-1}_{l, n - q -l} (1 - \eta) \bigg\}, \label{eq:rejection-region}
\end{align}
where $F^{-1}_{l, n - q - l}(1 - \eta)$ is the $(1 - \eta)$-quantile of $F_{l, n - q -l}$. We further incorporated another small sample correction in \citet{mancl2001covariance} by replacing $\hat{\Sigma}$ in \eqref{eq:def-test-stat} with an adjusted version using the ``hat'' matrix.


\subsection{Sample Size Formula that Guarantees Power Under Working Assumptions}
\label{subsec:sample-size-formula}

Suppose the goal is to find the sample size of the MRT to have at least $(1-b)$ power under some target alternative hypothesis.

Consider the target alternative with a parametric model on $\mee_k(t)$ that is compatible with \eqref{eq:working-assumption-on-MEE}: for given $\beta := (\beta_1^T,\ldots,\beta_K^T)$ where each $\beta_k \in \RR^p$, let 
\begin{align}
    \cH_1^*(\beta): \mee_k(t) = f_t^T \beta_k \text{ for } t \in [T] \text{ and } k \in [K]. \label{eq:target-alternative}
\end{align}

Under $\cH_1^*(\beta)$, it follows from \Cref{thm:CAN} that the test statistic $\cT$ follows approximately $\chi^2_l(\lambda(n))$, a non-central chi-squared distribution with \revisiondel{$l = \rank(L)$}\revisionadd{$l = \rank(\tL)=p\cdot\rank(L)$} degrees of freedom and non-centrality parameter $\lambda(n) := n(\tL\beta)^T (\tL M^{-1} \Sigma M^{-1, T} \tL^T)^{-1} (\tL\beta)$.

Here, $M$ and $\Sigma$ are defined in \Cref{thm:CAN}(i) with the same modifications as those for $m(\alpha,\beta)$ described above display \eqref{eq:def-test-stat}. To improve small sample performance, we again use the $F$-distribution approximation: the scaled test statistic $\frac{n - q - l}{l (n - q - 1)} \cT$ approximately follows $F_{l, n - q - l;\lambda(n)}$, a non central $F$-distribution with degrees of freedom $(l, n - q - l)$ and non-centrality parameter $\lambda(n)$ \citep{pan2002small}. In order to have at least $1-b$ power under $\cH_1^*(\beta)$, the sample size $n$ must satisfy 
\begin{align*}
    P\bigg\{\frac{n-q-l}{l(n-q-1)} \cT > F^{-1}_{l, n-q-l}(1 - \eta)\bigg\} \geq 1-b, \text{ where }
    \frac{n-q-l}{l(n - q - 1)} \cT \sim F_{l, n - q - l;\lambda(n)}.
\end{align*}
Therefore, the required sample size is the smallest integer $n$ such that 
\begin{align}
    1 - F_{l, n-q-l;\lambda(n)}\Big\{ F^{-1}_{l, n-q-l}(1 - \eta)\Big\}\geq 1 - b. \label{eq:ss_formula}
\end{align}

Calculating $n$ from \eqref{eq:ss_formula} requires knowing $\lambda(n)$, but $\lambda(n)$ depends on the data generating distribution $\cP$ through $M$ and $\Sigma$, which are typically unknown during trial planning. To make it feasible to compute $n$ from \eqref{eq:ss_formula}, we make the following working assumptions (WA) about $\cP$. For completeness, the working assumption \eqref{eq:working-assumption-on-MEE} is also included below as (WA-a).
\begin{itemize}
    \item[(WA-a)] (Parametric MEE.) For each $k\in[K]$, there exists $\beta_k \in \RR^p$ such that $\mee_k(t) = f_t^T\beta_k$ for $t \in [T]$, where $f_t \in \RR^p$ is known.
    \item[(WA-b)] (Known success probability under no treatment.) There exists $\alpha^0 \in \RR^q$ such that $E(Y_t \mid I_t = 1, A_t = 0) = e^{g_t^T \alpha^0}$ for $t \in [T]$, where $g_t \in \RR^q$ is known. 
    \item[(WA-c)] (Known availability probability.) Suppose $E(I_t)= \tau(t)$ for $t \in [T]$, where the value of $\tau(t)$ is known for $t \in [T]$.
    \item[(WA-d)] (No serial correlation in the outcome.) Suppose that for every $(t,s)$ pair with $1 \leq s < t \leq T$, $E(Y_t \mid  I_s = 1, I_t = 1, A_t, A_s, Y_{s} )$ is constant with respect to $Y_s$.  
    \item[(WA-e)] (Exogenous availability process.) Suppose $I_t$ is independent of prior treatment or prior outcomes, i.e $I_t \perp \{A_s, Y_s : 1 \leq s < t\}$ for all $t \in [T]$.
\end{itemize}

Under these working assumptions, we derived a tractable sample size formula presented in \cref{alg:ss-calculator}, where the non-centrality parameter $\lambda$ has an explicit form. \Cref{thm:sample-size-formula} establishes the type I error rate control and power guarantee of the sample size formula. The derivation of the tractable sample size formula and the proof of \Cref{thm:sample-size-formula} is in Section~\ref{sec:proof-thm-sample-size-formula} of the Supplementary Material.

\normalem 
\begin{algorithm}[htbp]
    \caption{Sample size calculator for MRT with categorical treatment}
    \label{alg:ss-calculator}
    \SetKwInOut{Input}{Input}
    \SetKwInOut{Output}{Output}
    \Input{$K$: total number of active treatment options (excluding the reference level $0$); \newline
    $T$: total number of decision points per participant; \newline
    $p_t(k) = P(A_t = k \mid I_t = 1)$ for $t \in [T], k \in [K]$: randomization probability; \newline
    $\tau(t) = E(I_t)$ for $t \in [T]$: probability of being available; \newline
    $f_t \in \RR^p$ for $t \in [T]$: the vector in the parametric working model for $\mee_k(t)$; \newline
    $\beta_k \in \RR^p$ for $k\in [K]$: the coefficients for $f_t$ in the target alternative \eqref{eq:target-alternative}; \newline
    $g_t \in \RR^q$ for $t \in [T]$: the vector in the parametric working model for $E(Y_t \mid A_t = 0,I_t = 1)$; \newline
    $\alpha \in \RR^q$: the coefficients for $g_t$ in that working model, i.e., $\spnc(t) = \exp(g_t^T\alpha)$; \newline
    $L \in \RR^{v \times K}$: the linear contrast matrix in defining $\cH_0$ and $\cH_1$ in \eqref{eq:null-and-alternative}; \newline
    $\eta$: desired type I error rate; \newline
    $1-b$: desired power.}
    \revisionadd{$\tL \gets L\otimes \II_p$} \\
    \revisionadd{$l \gets \text{rank}(\tL)$} \\
    $\beta \gets (\beta_1^T, \ldots, \beta_K^T)^T$ \\
    Compute $M$ and $\Sigma$ matrix \\
    $\text{finished} \gets \text{False}$ \\
    $n \gets 9$ \Comment{Some small $n$ to start with} \\
    \While{not finished}{
        $n \gets n+1$ \\
        \revisionadd{$\lambda \gets n (\tL\beta)^T\{\tL M^{-1}\Sigma M^{-1,T}\tL^T\}^{-1}(\tL\beta)$} \\
        \If{ $F_{l, n-q-l;\lambda}\{F_{l, n-q-l}^{-1}(1-\eta)\} \leq b$ }{$\text{finished} \gets \text{True}$ \Comment{Found smallest $n$ to satisfy \eqref{eq:ss_formula}}}
    }
    \Output{$n$}
\end{algorithm}
\ULforem 

\begin{thm}[Type I error rate control and power guarantee under working assumptions.]
    \label{thm:sample-size-formula}
    Suppose the randomization probability at each decision point is either constant or dependent only on $t$. Consider the testing procedure for $\cH_0$ vs. $\cH_1$ in \eqref{eq:null-and-alternative} based on test statistic \eqref{eq:def-test-stat} and rejection region \eqref{eq:rejection-region}.
    \begin{itemize}
        \item[(i)] Suppose (WA-a) holds. Then the testing procedure has type I error rate approximately $\eta$.
        \item[(ii)] Suppose (WA-a)--(WA-e) hold, and suppose that for each $k \in [K]$, $p_t(k) f_t$ lies in the linear span of $g_t$ viewed as vector-valued functions of $t$ (when $p_t(k)$ does not depend on $t$, a simple sufficient condition is that each coordinate of $f_t$ is a coordinate of $g_t$). With $n$ calculated from \cref{alg:ss-calculator}, the testing procedure has power at least $1-b$ approximately under the target alternative $\cH_1^*(\beta)$ in \eqref{eq:target-alternative}, assuming $\tL \beta \neq 0$.
    \end{itemize}
\end{thm}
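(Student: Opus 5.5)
For part (i), we will show that \Cref{thm:CAN} applies to the simplified estimator. Under (WA-a), the CEE model \eqref{eq:CEE-parametric-model} holds with $S_t=\emptyset$, $f_t(S_t)=f_t$, and $\Delta=1$. Because the randomization probabilities depend only on $t$, we may take $\tp_t(k|S_t)=p_t(k)$, so $J_t\equiv1$ and the modified estimating function is a special case of \eqref{eq:ee}.

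Next we verify the equivalence of $\cH_0$ and $\tcH_0$ (Section~\ref{subsec:proof-L-tilde-matrix}). One has $L\,\mee_{1:K}(t)=(L\otimes f_t^T)\beta$, and $\sum_k L_{jk}f_t^T\beta_k = f_t^T(\sum_k L_{jk}\beta_k)$. Since $(f_1,\ldots,f_T)$ has rank $p$, this vanishes for all $t$ if and only if $\sum_k L_{jk}\beta_k=0$ for every $j$, that is, $\tL\beta=0$.

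\Cref{thm:CAN}(i)--(ii) then gives $\sqrt n(\tL\hat\beta-\tL\beta^0)\dto N(0,\tL M^{-1}\Sigma M^{-1,T}\tL^T)$ together with a consistent variance estimator. By Slutsky and the continuous mapping theorem, $\cT\dto\chi^2_l$ under $\tcH_0$, where $l=\rank(\tL)$. For rank-deficient $L$ we would pass to a full-row-rank basis of the row space, or use a generalized inverse. The scaled-$F$ critical value converges to the $\chi^2_l$ quantile as $n\to\infty$, so the type I error is approximately $\eta$. No nuisance assumption is needed here, by the robustness to misspecification of $g_t^T\alpha$ in \Cref{thm:CAN}.

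For part (ii), the same argument under $\cH_1^*(\beta)$ with $\tL\beta\neq0$ gives a noncentral $\chi^2_l(\lambda(n))$ approximation, where $\lambda(n)$ uses the true $M$ and $\Sigma$. Since the sample size rule \eqref{eq:ss_formula} selects $n$ so that the noncentral-$F$ power is at least $1-b$, the power guarantee follows once we show that the $M$ and $\Sigma$ computed in \cref{alg:ss-calculator} from the inputs $(p_t,\tau,f_t,\beta,g_t,\alpha)$ equal the true ones. We carry this out as follows.

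First, we show $\alpha'=\alpha^0$. Under (WA-b), and using the span condition $p_t(k)f_t\in\mathrm{span}(g_t)$, we verify that $E\{m(\alpha^0,\beta^0)\}=0$. The key identity is
\[
E\{I_t U_t Y_t \mid I_t=1,A_t=k\}=e^{-\sum_j C_j(k)f_t^T\beta_j}\,e^{g_t^T\alpha^0+f_t^T\beta_k},
\]
which equals $e^{g_t^T\alpha^0}$ times $e^{\sum_j p_t(j)f_t^T\beta_j}$. The resulting factor $e^{\sum_j p_t(j) f_t^T\beta_j}$ is absorbed into the baseline through the span condition. This is the delicate step: we expect that the correct bookkeeping requires interpreting $g_t^T\alpha$ as modelling the centered baseline, and we will check that the $\alpha$-block and $\beta$-block equations both vanish.

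Second, we compute $M=\sum_t\tau(t)E(U_tY_tD_tD_t^T\mid I_t=1)$ explicitly. Conditioning on $A_t=k$ and using $E(Y_t\mid I_t=1,A_t=k)=e^{g_t^T\alpha^0+f_t^T\beta_k}$ reduces it to a finite sum over $k$ weighted by $p_t(k)$. The diagonal terms of $\Sigma$ are handled the same way, using $E(Y_t^2\mid\cdot)=E(Y_t\mid\cdot)$ since $Y_t$ is binary. The cross terms $s<t$ vanish: by (WA-d) and (WA-e), conditional on $(I_s,I_t,A_s,A_t,Y_s)$ the residual $r_t$ has conditional mean $E(r_t\mid I_t=1,A_t)$, which we will show is zero at $(\alpha^0,\beta^0)$ for each $A_t$. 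Combined with independence of $A_t$ from the past, this gives $E(I_sI_tr_sr_tD_sD_t^T)=0$.

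The main obstacle is this last step, establishing $E(r_t\mid I_t=1,A_t=k)=0$ for every $k$ rather than only on average. If that fails, we would instead show that the cross term factorizes as $E(\cdot)\,E(D_t)$-type products that are killed by the centering $E\{C_k(A_t)\}=0$.

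Finally, because power is monotone in $\lambda$ and the computed $\lambda$ equals the true one, the power at the output $n$ is at least $1-b$ approximately.
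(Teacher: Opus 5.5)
Your part (i) and the overall plan for part (ii) match the paper, but part (ii) rests on a false claim: the limit $\alpha'$ of $\hat\alpha$ is \emph{not} $\alpha^0$. Your own key identity shows this. Write $u_t:=\exp\{\sum_{j=1}^K p_t(j)f_t^T\beta_j^0\}$. The identity gives $e^{-\sum_j C_j(k)f_t^T\beta_j^0}E(Y_t\mid I_t=1,A_t=k)=u_te^{g_t^T\alpha^0}$ for every $k$. Hence $E\{r_t(\alpha^0,\beta^0)\mid I_t=1,A_t=k\}=(u_t-1)e^{g_t^T\alpha^0}$, and the $\alpha$-block of $E\{m(\alpha^0,\beta^0)\}$ equals $\sum_t\tau(t)(u_t-1)e^{g_t^T\alpha^0}g_t$, which is nonzero in general.

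The span condition does not give $\alpha'=\alpha^0$. It gives the existence of $\alpha'$ with $g_t^T\alpha'=g_t^T\alpha^0+\sum_k p_t(k)f_t^T\beta_k^0$, i.e.\ $e^{g_t^T\alpha'}=u_te^{g_t^T\alpha^0}$. This root of the $\alpha$-equation is the limit by uniqueness (\Cref{lem:thm2-proofuse2a}). You do not need to reinterpret the input $\alpha$: the algorithm still takes $\alpha^0$ with $\spnc(t)=e^{g_t^T\alpha^0}$, and $u_t$ enters explicitly through $M=\sum_t\tau(t)u_te^{g_t^T\alpha^0}P_t\otimes(f_tf_t^T)$ and $\Sigma=\sum_t\tau(t)u_t^2e^{g_t^T\alpha^0}Q_t\otimes(f_tf_t^T)$.

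This matters because $\Sigma$ in \Cref{thm:CAN} is evaluated at $\alpha'$. There, $E(r_t^2\mid I_t=1,A_t=k)=\mathrm{Var}(U_tY_t\mid I_t=1,A_t=k)$, which equals $u_t^2e^{g_t^T\alpha^0}(e^{-f_t^T\beta_k^0}-e^{g_t^T\alpha^0})$ for $k\in[K]$ and $u_t^2e^{g_t^T\alpha^0}(1-e^{g_t^T\alpha^0})$ for $k=0$; this is exactly what produces $Q_t$. Residuals at $\alpha^0$ instead add $\tau(t)(u_t-1)^2e^{2g_t^T\alpha^0}P_t\otimes(f_tf_t^T)$ to each diagonal block. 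So your final step, that the computed $\lambda$ equals the true one, would fail.

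Your cross-term argument has a separate problem. (WA-d) only says that $E(Y_t\mid I_s=1,I_t=1,A_t,A_s,Y_s)$ does not depend on $Y_s$. It does not remove dependence on $A_s$, and the working assumptions allow delayed effects of $A_s$ on $Y_t$ (the paper notes that $\spnc(t)$ depends on them). So your claim that, given $(I_s,I_t,A_s,A_t,Y_s)$, the residual $r_t$ has mean $E(r_t\mid I_t=1,A_t)$ is unjustified. Writing $h_{ts}(A_t,A_s):=E(Y_t\mid I_s=1,I_t=1,A_t,A_s)$, the conditional mean $E(r_t\mid I_s=I_t=1,A_s,A_t)=U_th_{ts}(A_t,A_s)-e^{g_t^T\alpha'}$ need not vanish.

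The paper puts the burden on the earlier residual instead. Conditioning on $Y_s$ and using (WA-d) gives $E(r_tr_s\mid I_s=I_t=1,A_s,A_t)=E(r_t\mid I_s=I_t=1,A_s,A_t)\,E(r_s\mid I_s=I_t=1,A_s,A_t)$. Then (WA-e) and the randomization of $A_t$ give $E\{r_s(\alpha',\beta^0)\mid I_s=I_t=1,A_s,A_t\}=E\{r_s(\alpha',\beta^0)\mid I_s=1,A_s\}=0$ for every value of $A_s$ (\Cref{lem:waf-assump}). That property, zero conditional mean at every treatment level, is exactly the obstacle you flagged. It holds at $\alpha'$, not at $\alpha^0$, and once you use it the cross terms vanish without any centering fallback.
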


We discuss the implications of each of these working assumptions.

(WA-a) assumes that the researcher knows the functional form of $\mee_k(t)$ as a function of $t$. This is equivalent to the working assumption \eqref{eq:working-assumption-on-MEE} or the target alternative \eqref{eq:target-alternative}.

(WA-b) assumes that the researcher knows the functional form of the  success probability under no treatment, i.e., the success probability null curve $\spnc(t) := E(Y_t \mid A_t =0, I_t = 1)$ as a function of $t$. Note that $E(Y_t \mid A_t =0, I_t = 1)$ averages over the past treatments ($A_1, A_2, \ldots, A_{t-1}$) and past outcomes ($Y_1, Y_2, \ldots, Y_{t-1}$), and thus $\spnc(t)$ depends on the magnitude of delayed effects and serial correlation in the outcome.

(WA-c) assumes that for each decision point, the researcher knows the probability of a participant being available. If (WA-e) is violated, $\tau(t)$ would then reflect the dependence of $I_t$ on previous treatments ($A_s$) and outcomes ($Y_s$)  for all $1 \leq s < t$.

(WA-d) assumes that the outcome is independent of previous outcomes when conditioned on previous treatments. This assumption is introduced to enable an analytic sample size formula. Although this assumption is generally unrealistic in most mobile health applications where outcomes for the same participant measured close in time are often correlated, our simulation studies will show that the sample size formula performs well even when (WA-d) is violated.

(WA-e) assumes that the availability indicator $I_t$ is independent of prior treatments and outcomes. The plausibility of this assumption depends on the study context. For example, (WA-e) is reasonable if a decision point is unavailable due to a technical glitch unrelated to the participant’s behavior or treatment. However, (WA-e) is violated if availability is influenced by participant burden, i.e., if a decision point is more likely to become unavailable under treatment delivery in recent decision points. Our simulation studies will demonstrate that the sample size formula performs well even when (WA-e) is violated. Note that (WA-e) is satisfied in MRTs without availability constraints because $I_t$ will always be 1.

\subsection{Performance of the Sample Size Formula under Working Assumption Violations and Practical Guidelines}
\label{subsec:practical-guideline}

We performed extensive simulation studies to assess the performance of the sample size formula under ideal conditions when all working assumptions hold, and under scenarios where some working assumptions are violated. In this section, we summarize the performance of the formula in the simulation studies and provide practical guidelines for using the sample size formula. Detailed simulation results can be found in Section~\ref{sec:simulation-sample-size-formula} of the Supplementary Material.

First, we provide definitions necessary for summarizing the simulation findings. For given $k\in[K]$, define the average treatment effect of treatment level $k$ as 

\begin{align}
    \ate_k := \frac{\sum_{t = 1}^T E(Y_t \mid I_t = 1, A_t=k) E(I_t)}{ \sum_{t = 1}^T E(Y_t \mid I_t = 1, A_t=0) E(I_t)}, \label{eq:ate_k}
\end{align}
the ratio between two $\ate$ of treatment level $j$ and $k$ as
\begin{align}
     \rate_{jk} := \frac{\ate_k}{\ate_j}, \label{eq:ratio-ate}
\end{align}
the average success probability under the null (ASPN) as 
\begin{align}
    \aspn := \frac{\sum_{t = 1}^T  \spnc(t) E(I_t)}{ \sum_{t = 1}^T E(I_t)} \equiv \frac{\sum_{t = 1}^T  E(Y_t \mid A_t = 0, I_t = 1) E(I_t)}{ \sum_{t = 1}^T E(I_t)}, \label{eq:aspn}
\end{align}
and the average availability (AA) as 
\begin{align}
    \aaa := \frac{1}{T}\sum_{t = 1}^T \tau(t) \equiv \frac{1}{T}\sum_{t = 1}^T E(I_t). \label{eq:AA}
\end{align}

\revisiondel{$\ate_k$ is the MEE of treatment level $k$ (compared to treatment level 0) averaged over time and weighted by availability.} \revisionadd{$\ate_k$ is the ratio of the availability-weighted mean outcome under treatment level $k$ to the corresponding mean outcome under treatment level 0.} $\rate$ denotes the ratio between average treatment effect ($\ate$s) between treatment levels $j$ and $k$, capturing on average how different the effects of the two active treatments are. $\aspn$ is $\spnc(t) := E(Y_t \mid A_t =0, I_t = 1)$ averaged over time and weighted by availability. $\aaa$ is $E(I_t)$ averaged over time. $\ate_k$, $\aspn$, and $\aaa$ summarize the magnitude of $\mee_k(t)$, $\spnc(t)$, and $\tau(t)$, respectively. We will use ``pattern'' to refer to how $\mee_k(t)$, $\spnc(t)$, and $\tau(t)$ vary over $t$ independently of their magnitude. As we will see, the magnitude and the pattern of $\mee_k(t)$, $\rate_{jk}$, $\spnc(t)$, and $\tau(t)$ impact the performance of the sample size formula in different ways.

We distinguish between two versions of each quantity: one corresponding to the true data-generating distribution (denoted by a superscript $*$) and the other corresponding to the input used in the sample size formula (denoted by a superscript ``w'' for ``working''). For instance, $\ate_k^*$ represents \eqref{eq:ate_k} with expectations calculated according to the true data-generating distribution $\cP$ (which is unknown outside of simulations), and $\ate_k^\w$ represents \eqref{eq:ate_k} with expectations calculated using the input to the sample size formula, assuming all working assumptions hold. Using this notation, (WA-a) corresponds to $\mee_k^\w(t)$ and $\mee_k^*(t)$ being of the same functional form, (WA-b) corresponds to $\spnc^\w(t)$ and $\spnc^*(t)$ being of the same functional form, and (WA-c) corresponds to $\tau^\w(t) = \tau^*(t)$.

Through the simulation studies presented in Section~\ref{sec:simulation-sample-size-formula} of the Supplementary Material, we found that the type I error rate is always at the desired level under arbitrary working assumption violations. The sample size formula yields adequately powered MRTs when all the working assumptions hold. The power performance under various working assumption violations is listed in \Cref{table:sample-size-performance}.

We discuss practical guidelines for setting the inputs to the sample size calculator based on the simulation results. In order to ensure adequate power, it is recommended that the researcher (i) correctly specify the magnitude of MEE (i.e., $\ate_k^\w = \ate_k^*$) and the average availability (i.e., $\aaa^\w = \aaa^*$); (ii) use a constant $\mee_k(t)$ input (i.e., set $f_t = 1$); (iii) use a constant $\spnc(t)$ (e.g., set $g_t = 1$). It helps to be conservative about $\ate$, $\aspn$, and $\aaa$, i.e., use the lower end of a range of conjectured values, if such values are either based on prior studies or from domain knowledge. Once these are satisfied, violating the working assumptions in other ways does not hurt the power. We further summarize the practical guidelines in \Cref{box:practical-guideline} for ease of reference.


\begin{table}[htbp]
    \caption{Sample size formula performance when working assumptions (WA) are violated. $\downarrow$ under-powered; $\uparrow$ over-powered; $\rightarrow$ adequately powered; $\downarrow^*$ under-powered for some generative models; $\uparrow^*$ over-powered for some generative models.}
    \label{table:sample-size-performance}
    \centering
    \begin{tabular}{@{}lllc@{}}
    \toprule
    WA violated                 & \multicolumn{2}{l}{Detail about the violation}                                & Power             \\
    \midrule
    \multirow{4}{*}{(WA-a)}     & \multirow{2}{*}{$\mee_k(t)$ magnitude incorrect}   & $\rate^\w > \rate^*$      & $\downarrow$      \\
                                &                                                   & $\rate^\w < \rate^*$      & $\uparrow$        \\
    \cmidrule(l){2-4}
                                & \multirow{2}{*}{$\mee_k(t)$ pattern incorrect}     & $\mee_k^\w(t)$ constant   & $\rightarrow$     \\
                                &                                                   & $\mee_k^\w(t)$ linear     & $\downarrow^*$    \\
    \midrule
    \multirow{4}{*}{(WA-b)}     & \multirow{2}{*}{$\spnc(t)$ magnitude incorrect}    & $\aspn^\w > \aspn^*$      & $\downarrow$     \\
                                &                                                   & $\aspn^\w < \aspn^*$      & $\uparrow$       \\
    \cmidrule(l){2-4}
                                & \multirow{2}{*}{$\spnc(t)$ pattern incorrect}      & $\spnc^\w(t)$ constant    & $\rightarrow$     \\
                                &                                                   & $\spnc^\w(t)$ non-constant & $\downarrow$  \\
    \midrule
    \multirow{4}{*}{(WA-c)}     & \multirow{2}{*}{$\tau(t)$ magnitude incorrect}     & $\aaa^\w > \aaa^*$        & $\downarrow$      \\
                                &                                                   & $\aaa^\w < \aaa^*$        & $\uparrow$        \\
    \cmidrule(l){2-4}
                                & \multirow{2}{*}{$\tau(t)$ pattern incorrect}       & $\tau^\w(t)$ constant     & $\rightarrow$     \\
                                &                                                   & $\tau^\w(t)$ non-constant  & $\rightarrow$   \\
    \midrule
    \multirow{3}{*}{(WA-d)}     & \multicolumn{2}{l}{serial correlation but is accounted for ($\aspn^\w = \aspn^*$)} & $\rightarrow$     \\
    \cmidrule(l){2-4}
                                & \multirow{2}{*}{($\aspn^\w \neq \aspn^*$)} & positive serial correlation & $\uparrow$    \\
                                &                                                   & negative serial correlation & $\downarrow$    \\
    \midrule
    \multirow{7}{*}{(WA-e)}     & \multirow{3}{*}{correct $\tau(t)$ magnitude}       & $Y_{t-1}$ decreases $\tau(t)$ & $\downarrow$  \\
                                &                                                   & $Y_{t-1}$ increases $\tau(t)$ & $\uparrow$    \\
                                &                                                   & $A_{t-1}$ affects $\tau(t)$   & $\rightarrow$ \\
    \cmidrule(l){2-4}
                                & \multirow{4}{*}{$\tau(t)$ magnitude incorrect}     & $Y_{t-1}$ decreases $\tau(t)$ & $\downarrow$  \\
                                &                                                   & $Y_{t-1}$ increases $\tau(t)$ & $\uparrow$    \\
                                &                                                   & $A_{t-1}$ decreases $\tau(t)$ & $\downarrow$ \\
                                &                                                   & $A_{t-1}$ increases $\tau(t)$ & $\uparrow^*$  \\
    \bottomrule
    \end{tabular}
\end{table}

\begin{guidelinebox}[htbp]
    \caption{Practical guidelines for using the sample size formula.}
    \label{box:practical-guideline}
    \begin{mdframed}[linewidth=1pt, roundcorner=10pt, backgroundcolor=gray!10]
    \spacingset{1}
    \noindent \textbf{For $1-b$, $\eta$, $T$, $p_t(k)$, $L$:}
    \begin{itemize}[leftmargin=2em]
      \item Specify according to the MRT design and the scientific question of interest.
    \end{itemize}

    \noindent \textbf{For $f_t$ and $\beta_k$:}
    \begin{itemize}[leftmargin=2em]
      \item Use a constant pattern ($f_t = 1$) unless strong prior knowledge about a specific form;
      \item \revisiondel{Set $\beta_k = \ate_k$ for a conjectured $\ate_k$ value;} \revisionadd{For a conjectured $\ate_k$ value, set $\beta_k = \log(\ate_k)$;}
      \item If having a range of conjectured $\ate_k$ values, use the lower bound to be conservative.
    \end{itemize}

    \noindent \textbf{For $g_t$:} (recall \revisiondel{$\spnc(t) = g_t^T\alpha$}\revisionadd{$\spnc(t) = \exp(g_t^T\alpha)$})
    \begin{itemize}[leftmargin=2em]
      \item Use a constant pattern ($g_t = 1$)
      \item \revisiondel{Set $\alpha = \aspn$ for a conjectured $\aspn$ value} \revisionadd{For a conjectured $\aspn$ value, set $\alpha = \log(\aspn)$}
      \item If having a range of conjectured $\aspn$ values, use the lower bound to be conservative.
    \end{itemize}

    \noindent \textbf{In addition, for MRT with availability considerations:}
    \begin{itemize}[leftmargin=2em]
      \item Use a constant $\tau(t)$ pattern unless strong prior knowledge about a specific form;
      \item Set $\tau(t) = \aaa$ for a conjectured $\aaa$ value;
      \item If having a range of conjectured $\aaa$ values, use the lower bound to be conservative.
    \end{itemize}

    \noindent \textbf{In addition, for MRT with imbalanced treatment assignment:} (i.e., if $p_t(0),p_t(1),\ldots,p_t(K)$ are not all equal)
    \begin{itemize}[leftmargin=2em]
      \item Increase the sample size by 10--20\% if the variability in the outcome may be affected by prior treatment.
    \end{itemize}
    \end{mdframed}
\end{guidelinebox}

\spacingset{1.9}


\section{Data Example}
\label{sec:application}

We illustrate the use of CEE estimator and the calculation of sample size using the Drink Less MRT, a 2021 study aimed at optimizing the Drink Less smartphone app to help people reduce alcohol consumption \citep{bell2023notifications}. In this study, 349 participants were randomized every day at 8 pm for 30 days, resulting in 10,470 decision points. At each decision point, participants were randomized among three options:  no notifications (with probability 0.4), receiving a standard message (with probability 0.3), or receiving notification randomly selected from a bank of new messages (with probability 0.3). All participants were always considered available at all decision points throughout the study, i.e., $I_t = 1$ for all $t$. The primary outcome of interest is whether the user opened the app (yes or no) in the hour following the 8 pm decision point. We have $\Delta = 1$ (recall that $\Delta$ is the time window length over which the proximal outcome is defined). A self-reported baseline of Alcohol Use Disorders Identification Test (AUDIT) is taken for each participant stratified into three different risk zones: hazardous (8-15), harmful (16-19), and at risk for alcohol dependence (20-40).

\subsection{Illustration of the CEE estimator}
\label{subsec:application-estimator}

We applied the proposed estimator to separately estimate (a) the marginal CEE (by setting $S_t = \emptyset$ and $f_t(S_t) = 1$ in \eqref{eq:CEE-parametric-model}), (b) the CEE moderated by the decision point index (by setting $S_t = \emptyset$ and $f_t(S_t) = (1,t)^T$), and (c) the CEE moderated by the stratified baseline AUDIT score (by setting \revisiondel{$S_t = \texttt{AUDIT}_t$}\revisionadd{$S_t = \texttt{AUDIT}$} and $f_t(S_t) = (1,\texttt{AUDIT\_DEPENDENT}, \texttt{AUDIT\_HARMFUL})^T$). Here,
$\texttt{AUDIT\_DEPENDENT}$ and $\texttt{AUDIT\_HARMFUL}$ are binary indicators taking the value 1 if the participant's AUDIT score 
falls within the at risk for alcohol dependence range (score within 20-40) or in the harmful range (score within 16-19), respectively.

\begin{align}
    & \text{(a) } \cee_{ta}(\emptyset) = \beta_a^{\text{marginal}} \quad \text{ for } a = 1,2; \nonumber\\
    & \text{(b) } \cee_{ta}(\emptyset) = \beta_a^{\text{t-int}} + \beta_a^{\text{t-slp}}\cdot t \quad \text{ for } a = 1,2; \label{eq:application-analysis}\\
    & \text{(c) } \cee_{ta}(\texttt{AUDIT}) = \beta_a^{\text{int}} + \beta_a^{\text{dependent}} \cdot \texttt{AUDIT\_DEPENDENT} \nonumber \\
    & \qquad \qquad + \beta_a^{\text{harmful}} \cdot \texttt{AUDIT\_HARMFUL}  \quad \text{ for } a = 1,2.  \nonumber
\end{align}

Here, ``int'' stands for intercept and ``slp'' stands for slope. $t$ starts at 0 instead of 1 for the interpretability of $\beta_a^{\text{t-int}}$.  For all analyses, we included the following control variables in $g_t(H_t)$: the decision point index $t$, age, gender, employment type, AUDIT score, and the lagged proximal outcome, that is, whether the app was opened following the preceding decision point. In analysis (c), the AUDIT score enters through the same category indicators that define the moderator. Because $\Delta = 1$ and the randomization probabilities are constant, we set $\tp_t(k|S_t) = p_t(k)$ for each $k$, so that $J_t \equiv 1$ and no weighting is needed.
\Cref{tab:application-Drink-Less-analysis} presents the estimated parameters, standard errors, 95\% confidence intervals (CIs), and $p$-values.

For analysis (a), on average, delivering notifications from the bank of newly constructed messages \revisiondel{increased the probability of opening the Drink Less app within the hour by 3.27 times}\revisionadd{made the probability of opening the Drink Less app within the hour 3.27 times as large} ($= e^{1.18}$, 95\% CI $[2.66, 4.02]$, $p < 0.001$), compared with providing no suggestions. Similarly, users who received standard notifications were \revisiondel{3.62 times more likely}\revisionadd{3.62 times as likely} to open the app within the hour ($= e^{1.29}$, 95\% CI $[2.93, 4.46]$, $p < 0.001$) compared with those who received no suggestions. We found no evidence of a marginal difference between delivering notifications from standard messages and from the bank of new messages ($p = 0.139$). As a benchmark, a log-link GEE fit of the same model with a working independence correlation structure gives point estimates that agree with these to within 0.001 on the log relative risk scale. Section~\ref{sec:simulation-estimator} of the Supplementary Material reports simulations in which GEE is biased for the CEE while the proposed estimator is not, and scenarios in which the two approaches agree.

For analysis (b), at the beginning of the study, delivering notifications from the bank of new messages \revisiondel{increased the probability of opening the Drink Less app by 3.38 times}\revisionadd{made the probability of opening the Drink Less app 3.38 times as large} ($= e^{1.22}$, 95\% CI $[2.44, 4.68]$, $p < 0.001$), compared with users who received no suggestions. Delivering notifications from standard messages initially \revisiondel{increased the probability by 4.02 times}\revisionadd{made the probability 4.02 times as large} ($= e^{1.39}$, 95\% CI $[2.88, 5.61]$, $p < 0.001$). We found no evidence of a difference in initial effect between the two notification types ($p = 0.109$). Furthermore, we found no evidence of a time trend in either effect ($p = 0.804$ for bank-of-new-message notifications and $p = 0.409$ for standard-message notifications): that is, no evidence that the effect of either notification type at a given decision point, marginalized over the treatment history up to that point, changed over the course of the study. This is distinct from the effect of remaining on a single notification type for the full trial duration, an estimand that the CEE framework does not target.

For analysis (c), we examined the effects across different AUDIT score categories. Among users in the Hazardous category, receiving notifications from the bank of new messages \revisiondel{increased the probability of opening the app by 3.27 times}\revisionadd{made the probability of opening the app 3.27 times as large} ($= e^{1.18}$, 95\% CI $[2.35, 4.54]$, $p < 0.001$), whereas standard notifications \revisiondel{increased the probability by 4.03 times}\revisionadd{made the probability 4.03 times as large} ($= e^{1.39}$, 95\% CI $[2.96, 5.50]$, $p < 0.001$). This within-category difference favoring standard messages is statistically significant: standard-message notifications \revisiondel{increased the probability of opening the app by 1.23 times}\revisionadd{made the probability of opening the app 1.23 times as large} ($= e^{0.210}$, 95\% CI $[1.00, 1.52]$, $p = 0.048$) relative to bank-of-new-message notifications. For users in the Harmful category, notifications from the bank of new messages \revisiondel{increased the probability by 3.40 times}\revisionadd{made the probability 3.40 times as large} ($= e^{1.22}$, 95\% CI $[2.25, 5.14]$, $p < 0.001$), which was greater than the 2.94-fold increase ($= e^{1.08}$, 95\% CI $[1.89, 4.57]$, $p < 0.001$) from standard notifications. For users in the \revisiondel{At-risk of alcohol dependent category}\revisionadd{at risk for alcohol dependence category}, standard notifications \revisiondel{increased the probability by 3.69 times}\revisionadd{made the probability 3.69 times as large} ($= e^{1.30}$, 95\% CI $[2.50, 5.43]$, $p < 0.001$), whereas notifications from the bank of new messages \revisiondel{increased it by 3.19 times}\revisionadd{made it 3.19 times as large} ($= e^{1.16}$, 95\% CI $[2.19, 4.63]$, $p < 0.001$). We found no evidence of a difference between the effects of standard-message and bank-of-new-message notifications for individuals with AUDIT scores in the at risk for alcohol dependence or harmful categories ($p = 0.247$ and $p = 0.268$, respectively).

\begin{table}[htbp]
    \caption{Drink Less analysis result for \Cref{subsec:application-estimator}. The estimates correspond to parameters defined in the three models in \eqref{eq:application-analysis}. The standard errors are the small-sample adjusted standard errors of \citet{mancl2001covariance}, and the confidence intervals and $p$-values are the corresponding $t$-based quantities. \label{tab:application-Drink-Less-analysis}}
    \begin{tabular}[t]{llrrcr}
        \toprule
        & Parameter & Estimate & SE & 95\% CI  & $p$-value\\
        \midrule
        \multirow{3}{*}{\makecell{\textbf{Model (a):} \\ marginal CEE}} & $\beta_1^{\text{marginal}}$    & 1.184 & 0.105 & (\phantom{$-$}0.977, \phantom{$-$}1.392) &$<0.001$\\
        & $\beta_2^{\text{marginal}}$     & 1.286 & 0.107 & (\phantom{$-$}1.076, \phantom{$-$}1.495) &$<0.001$\\
        & $\beta_2^{\text{marginal}} - \beta_1^{\text{marginal}}$     & 0.101 & 0.068 & ($-$0.033, \phantom{$-$}0.236) & 0.139\\
        \midrule
        \multirow{6}{*}{\makecell{\textbf{Model (b):} \\ CEE moderated \\ by the decision \\ point index}} & $\beta_1^{\text{t-int}}$   & 1.217 & 0.166 &(\phantom{$-$}0.892, \phantom{$-$}1.543) & $<0.001$\\
        & $\beta_1^{\text{t-slp}}$   & $-$0.003 & 0.011 &($-$0.024, \phantom{$-$}0.019) &0.804\\
        & $\beta_2^{\text{t-int}}$   & 1.392 & 0.170 &(\phantom{$-$}1.058, \phantom{$-$}1.725) &$<0.001$\\
        & $\beta_2^{\text{t-slp}}$   & $-$0.009 & 0.011 &($-$0.030, \phantom{$-$}0.012) & 0.409\\
        & $\beta_2^{\text{t-int}} - \beta_1^{\text{t-int}}$ &0.174 &0.108 &($-$0.039, \phantom{$-$}0.387) & 0.109\\
        & $\beta_2^{\text{t-slp}} - \beta_1^{\text{t-slp}}$ & $-$0.006 & 0.008 &($-$0.021, \phantom{$-$}$0.009$) & 0.412 \\
        \midrule
        \multirow{6}{*}{\makecell{\textbf{Model (c):} \\ CEE moderated \\ by AUDIT score}} & $\beta_1^{\text{int}}$                       & 1.184 & 0.167 &(\phantom{$-$}0.856, \phantom{$-$}1.513) & $<0.001$\\
        & $\beta_1^{\text{int}} + \beta_1^{\text{harmful}}$             & 1.225 & 0.210 &(\phantom{$-$}0.812, \phantom{$-$}1.637) & $<0.001$\\
        & $\beta_1^{\text{int}} + \beta_1^{\text{dependent}}$             &1.159 & 0.190 &(\phantom{$-$}0.785, \phantom{$-$}1.532) & $<0.001$\\
        & $\beta_2^{\text{int}}$                       & 1.395 & 0.158 &(\phantom{$-$}1.084, \phantom{$-$}1.706) &$<0.001$\\
        & $\beta_2^{\text{int}} + \beta_2^{\text{harmful}}$             & 1.078 & 0.224 &(\phantom{$-$}0.637, \phantom{$-$}1.520) &$<0.001$\\
        & $\beta_2^{\text{int}} + \beta_2^{\text{dependent}}$             & 1.305 & 0.197 &(\phantom{$-$}0.918, \phantom{$-$}1.692) &$<0.001$\\
        & $\beta_2^{\text{int}} - \beta_1^{\text{int}}$ & 0.210 & 0.106 &(\phantom{$-$}0.002, \phantom{$-$}0.419) & 0.048\\
        &\makecell{$\{(\beta_2^{\text{int}} + \beta_2^{\text{harmful}})$ \\ $-(\beta_1^{\text{int}} + \beta_1^{\text{harmful}})\}$} & $-$0.147 & 0.132 &($-$0.407, \phantom{$-$}0.113) & 0.268 \\
        &\makecell{$\{(\beta_2^{\text{int}} + \beta_2^{\text{dependent}})$ \\ $-(\beta_1^{\text{int}} + \beta_1^{\text{dependent}})\}$} & 0.146 & 0.126 &($-$0.102, \phantom{$-$}0.395) & 0.247 \\
        \bottomrule
    \end{tabular}
\end{table}

\subsection{Illustration of the Sample Size Formula}
\label{subsec:application-sample-size}

We illustrate the use of the proposed sample size formula by determining the required sample size for a hypothetical micro-randomized trial (MRT). Suppose the hypothetical MRT resembles the Drink Less study in both intervention and trial design. The MRT includes two active treatment levels and a no-treatment level, with constant randomization probabilities $p_{0t} = 0.4$ and $p_{1t} = p_{2t} = 0.3$ for all $t \in [T]$. Each participant is enrolled for $T = 30$ decision points. For this illustration, we initially assume that participants are always available for randomization, i.e., $\tau(t) = 1$; this assumption will be relaxed later. In practice, these design parameters should be determined in consultation with domain experts.

We aim to detect a statistically significant difference between the two active treatment levels. According to the MRT design, we set $K = 2$, $T = 30$, $p_{0t} = 0.4$, $p_{1t} = p_{2t} = 0.3$, and $\tau(t) = 1$. We specify $L = (1, -1)$, $\eta = 0.05$, and $1 - b = 0.8$. \Cref{fig:DL1} shows the contour plot of the required sample size $n$ as a function of $\text{ATE}_1$ and $\text{ATE}_2$ (with $g_t = 1$, $\aspn = 0.036$, and all other parameters fixed). The sample size depends on both $\text{ATE}_1$ and $\text{ATE}_2$, rather than solely on their difference, as the slopes of the contour lines are different than 1. The marked point reads the contour at $\ate_1 = 3.27$ and $\ate_2 = 3.62$, the average treatment effects estimated in \Cref{subsec:application-estimator}, and gives $n = 1197$; this is the Drink Less design illustration developed at the end of this subsection. The required sample size is also sensitive to $\aspn$ (\Cref{fig:DL4}).

Next, we examine how $n$ varies under different patterns of $\mee(t)$. We use $\theta_{f1}$ and $\theta_{f2}$ to parameterize the relationship between two linear MEEs. Specifically, $\theta_{f1}$ parameterizes the slope of $\mee_1(t)$, while $\theta_{f2}$ parameterizes the difference between the slopes of $\mee_1(t)$ and $\mee_2(t)$. Two scenarios are considered in this setting: the first involves parallel linear MEEs with different slopes, parameterized by $\theta_{f1}$ and setting $\theta_{f2} = 0$; the second involves non-parallel MEEs, where $\theta_{f2} \neq 0$. Detailed parameterizations of these cases are provided in \Cref{box:detail-gm-0-part2} of the Supplementary Material. \Cref{fig:DL2} and \Cref{fig:DL3} illustrate how these parameterizations influence the required sample size $n$.

In the first scenario with parallel MEEs, specifying a linear MEE (i.e., setting $f_t = (1, t)^T$ rather than $f_t = 1$) leads to a larger required sample size than the constant-MEE setting: over the entire range of $\theta_{f1}$, the curve in \Cref{fig:DL2} lies above the dashed horizontal line. In the second scenario with non-parallel MEEs, the required sample size is largest when the two MEEs are close to parallel. Each of the three curves in \Cref{fig:DL3} peaks near $\theta_{f2} = 0$ and decreases \revisiondel{monotonically }as $|\theta_{f2}|$ grows. Because $\ate_1$ and $\ate_2$ are held fixed across the panel, $\theta_{f2}$ changes only how the difference between the two treatment effects is distributed over the decision points, and a difference that varies over time is detected with fewer participants than one that is constant. The required sample size nonetheless stays above the constant-MEE value over most of the range, falling below it only once $|\theta_{f2}|$ exceeds about $0.8$.

We further investigate how $\text{SPNC}(t)$ and the magnitude of $\aspn$ influence the required sample size $n$. Under a constant $\text{SPNC}(t)$, the required sample size decreases monotonically as the magnitude of $\aspn$ increases (\Cref{fig:DL4}). When $\text{SPNC}(t)$ follows a linear or quadratic trajectory rather than remaining constant, the required sample size is further reduced (\Cref{fig:DL5}). These results indicate that assuming a constant $\text{SPNC}(t)$ leads to a more conservative (i.e., larger) estimate of the required sample size.

Finally, we investigate how the expected availability trajectory $\tau(t)$ influences the required sample size. Under constant $\tau(t)$ (\Cref{fig:DL6}), the required $n$ decreases as the average availability (AA) increases as expected. When assuming linear or periodic availability patterns, the required $n$ remains nearly unchanged (\Cref{fig:DL7}), indicating robustness of the sample size to the specific temporal pattern of $\tau(t)$.

To provide a concrete interpretation of the proposed sample size calculator, we consider its application to the hypothetical Drink Less MRT discussed in the previous section, using the parameter estimates from \Cref{tab:application-Drink-Less-analysis}. According to the results from Model (b), we found no evidence that either treatment effect varies over time. Therefore, we assume a constant MEE pattern. The average treatment effects (ATEs) for the two interventions are obtained by exponentiating the corresponding coefficient $\beta$ from Model (a). In our application, we set $\ate_1 = 3.27$ $(= e^{1.18})$ and $\ate_2 = 3.62$ $(= e^{1.29})$. Following the discussion in the previous paragraph, we assume a constant success probability under the null, $\spnc(t)$, which yields a more conservative sample size requirement. We set the average success probability under the null to $\aspn = 0.036$, the observed proportion of decision points at which the app was opened when no notification was delivered. Because a smaller $\aspn$ requires a larger sample size, this is also the more conservative of the plausible choices, in line with the guidance in \Cref{box:practical-guideline}. By design, each participant is assumed to be available at every decision point; therefore, we assume a constant availability pattern with magnitude 1 ($\aaa = 1$) and constant $\tau(t)$. Substituting these quantities into the proposed sample size formula, we conclude that using our sample size calculator, the Drink Less study would require a total of 1197 participants to achieve the prespecified target power under the assumed constant-MEE specification for detecting the difference between the two treatment effects. This result provides a concrete illustration of how the proposed sample size calculator translates model-based effect size estimates into actionable design recommendations and guides the planning of future MRTs with time-varying treatment effects.

\begin{figure}
    \centering
    \begin{subfigure}[b]{0.3\textwidth}
        \centering
        \caption{}
        \label{fig:DL1}
        \includegraphics[width=\textwidth]{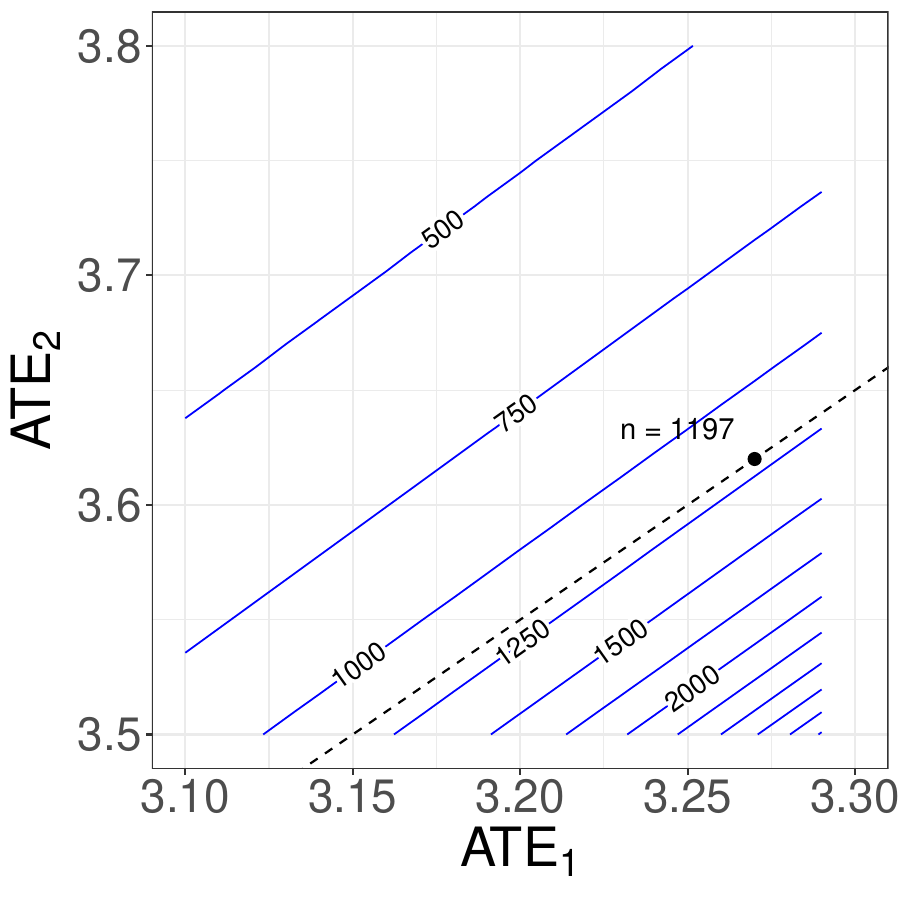}
    \end{subfigure}
    \begin{subfigure}[b]{0.3\textwidth}
        \centering
        \caption{}
        \label{fig:DL2}
        \includegraphics[width=\textwidth]{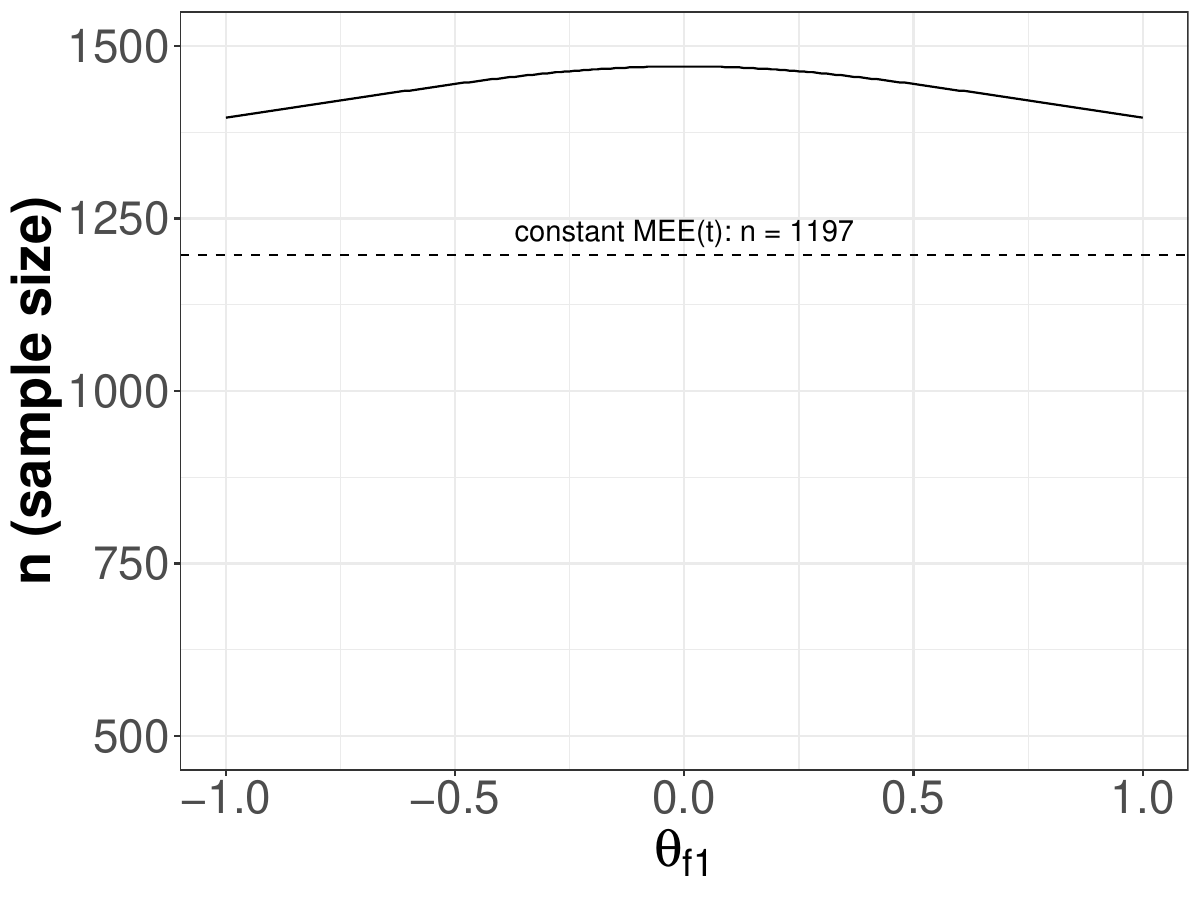}
    \end{subfigure}
    \begin{subfigure}[b]{0.33\textwidth}
        \centering
        \caption{}
        \label{fig:DL3}
        \includegraphics[width=\textwidth]{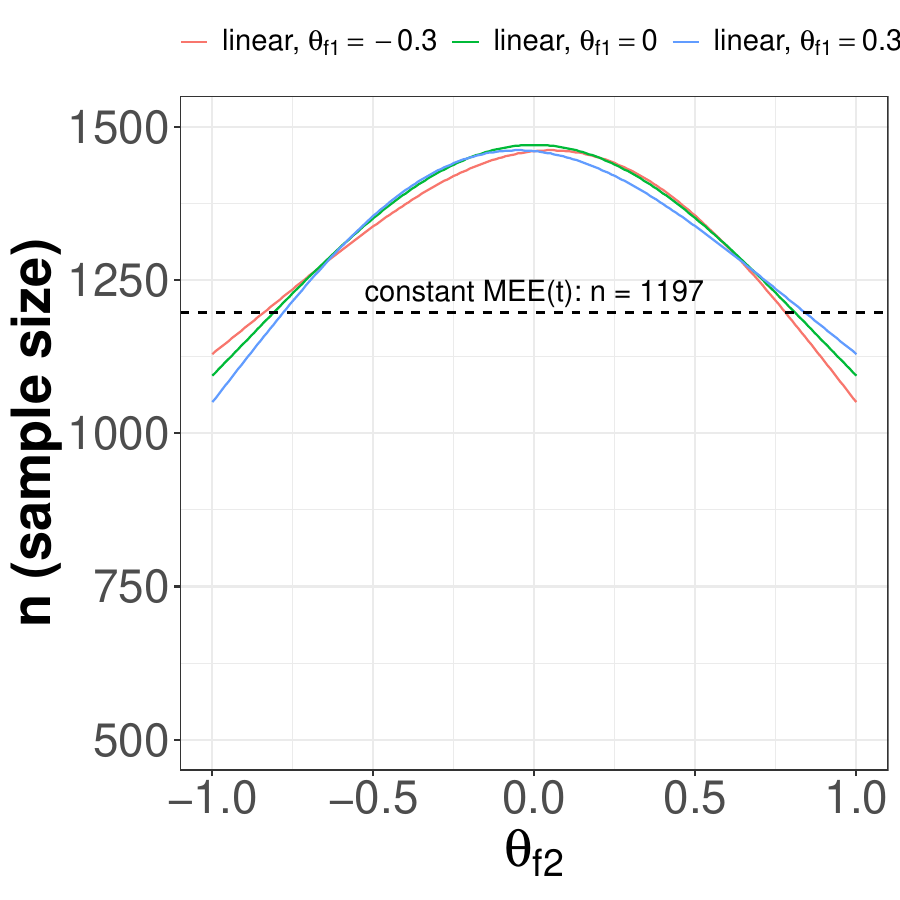}
    \end{subfigure}
    \caption{The dependence of the output sample size $n$ on various inputs of $\mee$ for the data example illustration in \cref{subsec:application-sample-size}. \textbf{Panel (a):} $n$ varies between a pair of $\ate_1$ and $\ate_2$. The point marks the estimated $(\ate_1, \ate_2) = (3.27, 3.62)$ and the dashed line the values with the same difference between the two effects. \textbf{Panel (b):} $n$ depends on the slope of two parallel $\mee$ parameterized by $\theta_{f1}$. \textbf{Panel (c):} $n$ depends on the difference in the slopes of $\mee_1(t)$ and $\mee_2(t)$, parameterized by $\theta_{f2}$}
\end{figure}

\begin{figure}
    \centering
    \begin{subfigure}[b]{0.4\textwidth}
        \centering
        \caption{}
        \label{fig:DL4}
        \includegraphics[width=\textwidth]{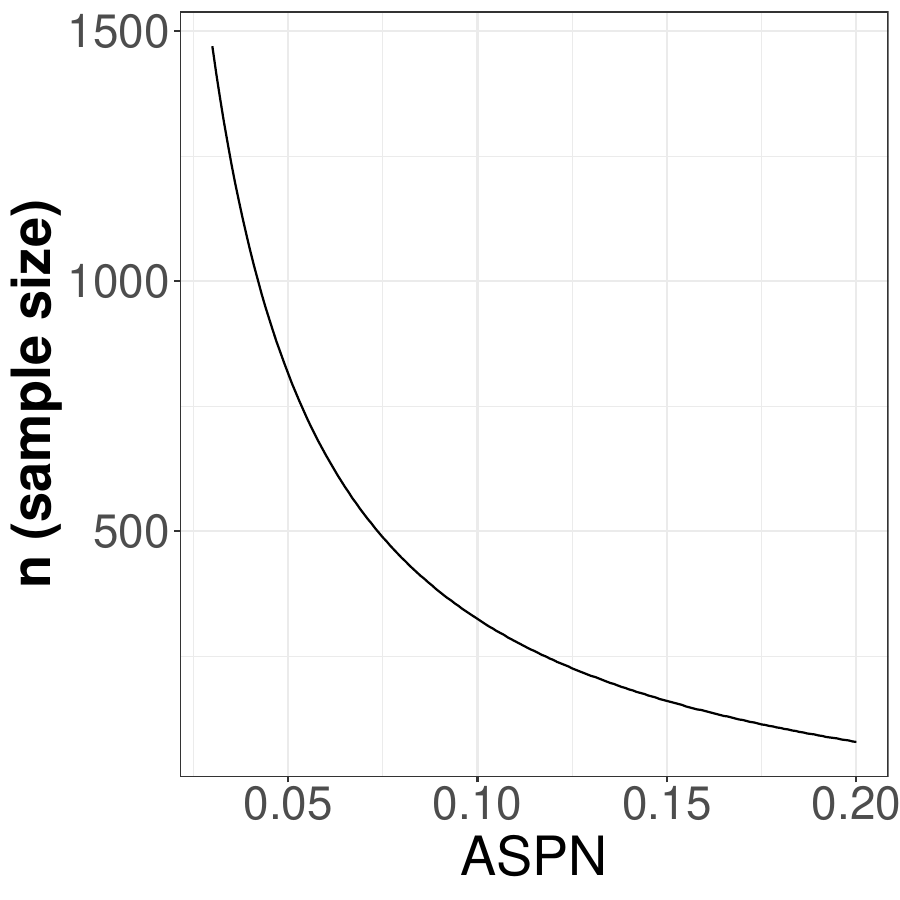}
    \end{subfigure}
    \begin{subfigure}[b]{0.45\textwidth}
        \centering
        \caption{}
        \label{fig:DL5}
        \includegraphics[width=\textwidth]{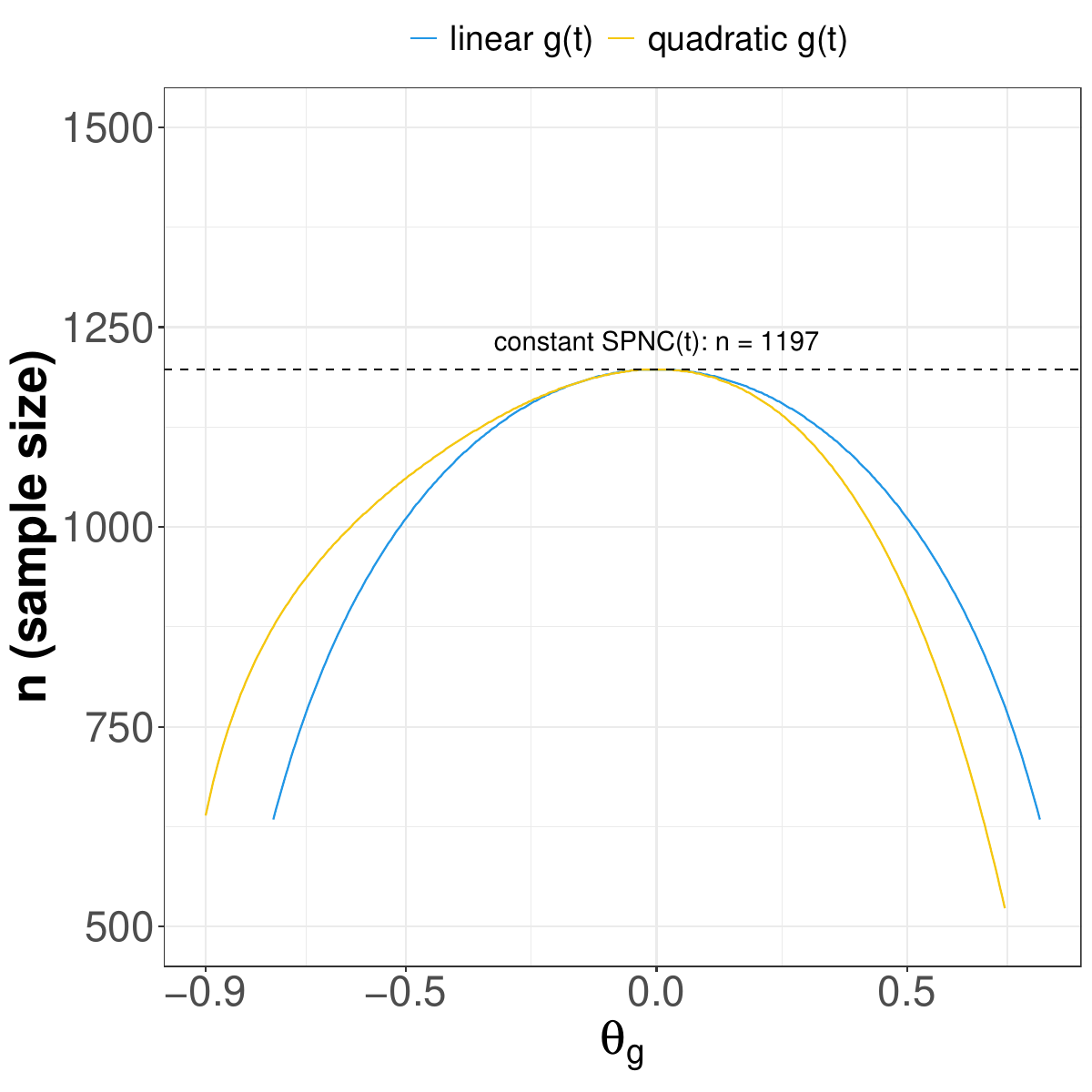}
    \end{subfigure}

    \caption{The dependence of the output sample size $n$ on various inputs of \spnc{} and \aspn{} for the data example illustration in \cref{subsec:application-sample-size}. \textbf{Panel (a):} $n$ varies monotonically with $\aspn$. \textbf{Panel (b):} $n$ depends on the shape of \spnc(t) parameterized by $\theta_g$. Constant $\spnc(t)$ provides the most conservative sample size.}  
\end{figure}

\begin{figure}
    \centering
    \begin{subfigure}[b]{0.35\textwidth}
        \centering
        \caption{}
        \label{fig:DL6}
        \includegraphics[width=\textwidth]{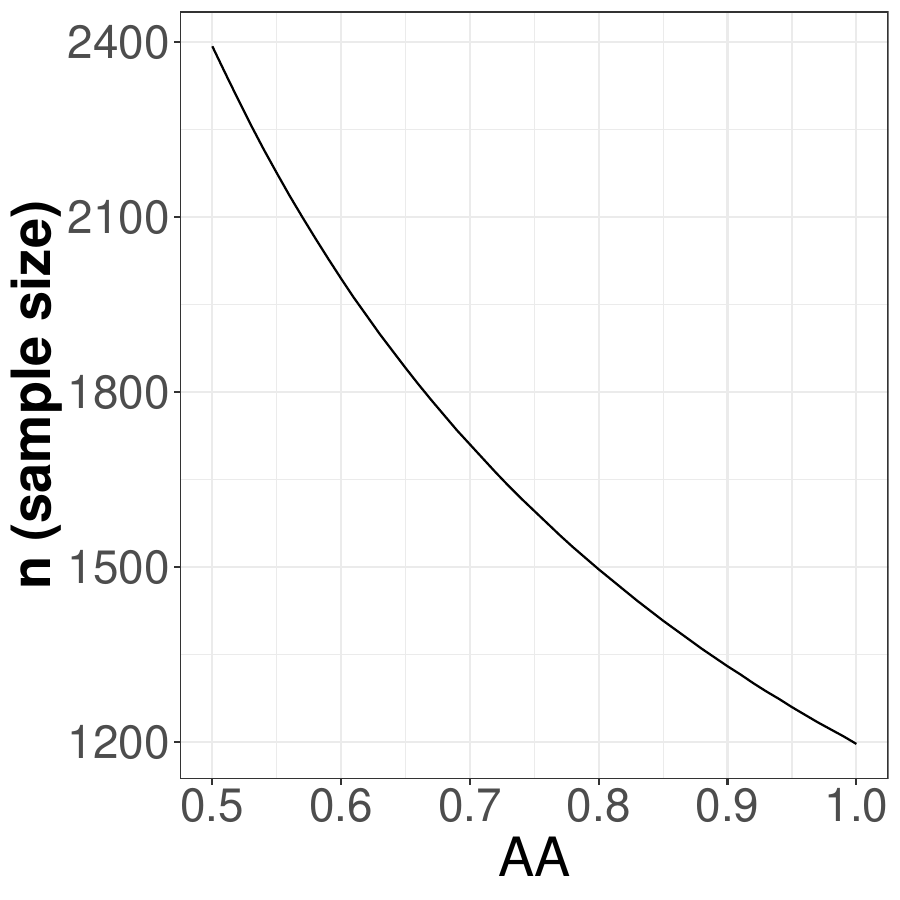}
    \end{subfigure}
    \hfill
    \begin{subfigure}[b]{0.6\textwidth}
        \centering
        \caption{}
        \label{fig:DL7}
        \includegraphics[width=\textwidth]{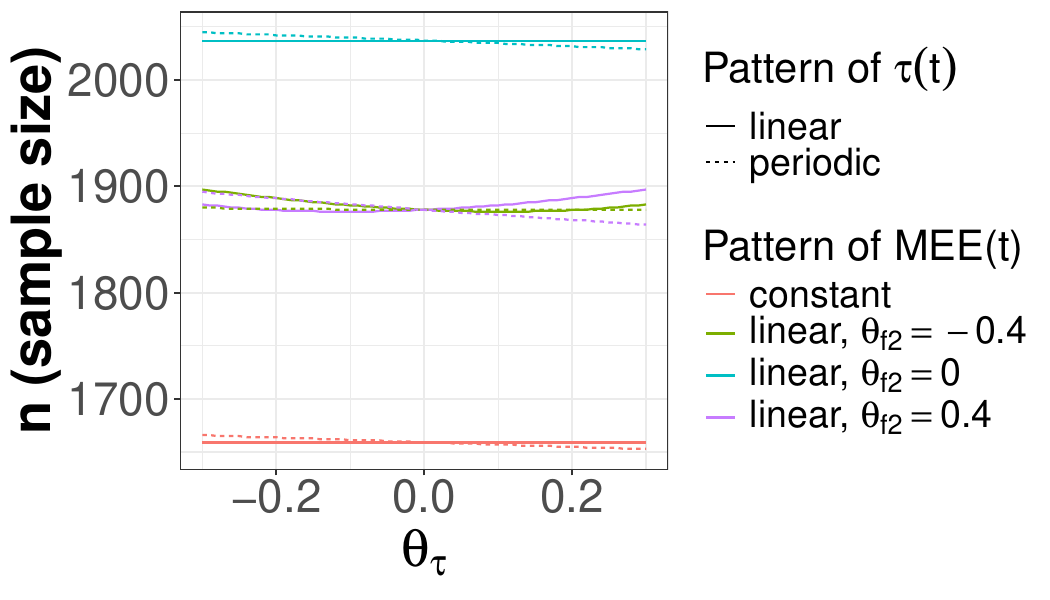}
    \end{subfigure}

    \caption{The dependence of the output sample size $n$ on various inputs of availability for the data example illustration in \cref{subsec:application-sample-size}. \textbf{Panel (a):} $n$ varies monotonically with the average expected availability, $\aaa$. \textbf{Panel (b):} $n$ essentially does not depend on the pattern of the availability over time, $\tau(t)$.}
\end{figure}

\section{Discussion}
\label{sec:discussion}

We considered micro-randomized trials in which, at each decision point, participants are assigned to one of several categorical treatment options, with a binary outcome of interest. For this setting, we generalized the causal excursion effect framework and extended the corresponding estimator of marginal excursion effect for binary outcomes \citep{qian2021estimating} to obtain EMEE-catA. In addition, we developed a sample size formula that supports a broad class of contrasts among categorical treatment levels. Under a set of working assumptions, the proposed formula is shown to control the type I error rate and guarantee power. Using extensive simulation experiments that intentionally violate these working assumptions, we found that the type I error rate is consistently controlled, while power is typically but not uniformly maintained. To support practical implementation, we summarized in \Cref{box:practical-guideline} guidance for selecting inputs to the sample size formula that yield adequate power in most applied settings, even when the working assumptions do not strictly hold. The proposed estimator and sample size methodology were demonstrated using data from the Drink Less micro-randomized trial.

In formulating the sample size expression, we focused on null hypotheses of the form \(\tL \beta = 0\). Although the framework naturally extends to null hypotheses of the form \(\tL \beta = \tc\) for a fixed vector \(\tc\), interpreting such hypotheses can be nontrivial, depending on the specific structure of \(\tL\) and \(\tc\). In particular, certain choices of \(\tc\) do not correspond to a parsimonious null hypothesis on the original scale of the time-varying marginal excursion effects \(\mee_{1:K}(t)\); that is, there may be no vector $c$ such that \(\tL \beta = \tc\) is equivalent to \(L \times \mee_{1:K}(t) = c\). For this reason, we restricted attention in this paper to hypotheses of the form \(\tL \beta = 0\).

Several directions for future work remain. First, incorporating flexible nonparametric and machine learning–based approaches for modeling nuisance functions may improve the efficiency of causal effect estimation. Second, extending the methodology to support sample size calculations for detecting effect moderation represents an important direction for further research. Third, developing sample size tools that accommodate more complex probability assignment policies beyond simple randomization would further enhance the applicability of the proposed framework. Finally, extending the framework to other outcome and treatment types is of interest. For time-to-event proximal outcomes, excursion effects could be defined on a hazard or restricted-mean scale; for ordinal treatments, the contrast matrix $L$ could encode hypotheses that respect the ordering of treatment levels; and for continuous treatments such as dose, the CEE could be formulated as a dose-response curve, with corresponding sample size methods derived under working models for these effects.

\section*{Supplementary Material}
The Supplementary Material includes theoretical proofs of \cref{thm:CAN} and \cref{thm:sample-size-formula}, simulation studies evaluating the consistency and robustness of the proposed estimator, and simulation studies evaluating the performance of the sample size formula under working assumption violations. 

\section*{Conflict of Interest}
None declared.

\section*{Data Availability}
R code implementing the proposed estimator and the sample size calculator, together with the code that reproduces the simulation studies and the Drink Less analysis and figures of \Cref{sec:application}, is available at \url{https://github.com/JeremyJosephLin/causal_excursion_mult_trt_binary_outcome}. The Drink Less data are publicly available at \url{https://osf.io/w3szp}, and the analysis code builds the analysis data set from that file directly.

\section*{Funding}
Not applicable. 

\section*{Author Contribution}
J.L. developed the methodology and theory, conducted the simulation studies and the data analysis, and drafted the manuscript. T.Q. conceived and supervised the project, contributed to the methodology and theory, and revised the manuscript.

\bibliography{references}
\bibliographystyle{plainnat}


\newpage

\appendix

\numberwithin{equation}{section}
\numberwithin{thm}{section}
\numberwithin{defn}{section}
\renewcommand{\thelem}{\thethm}
\renewcommand{\thecor}{\thethm}
\setcounter{table}{0}
\setcounter{figure}{0}
\setcounter{guidelinebox}{0}
\renewcommand{\thetable}{S.\arabic{table}}
\renewcommand{\thefigure}{S.\arabic{figure}}
\renewcommand{\theguidelinebox}{S.\arabic{guidelinebox}}
\def\theHtable{S.\arabic{table}}
\def\theHfigure{S.\arabic{figure}}
\def\theHguidelinebox{S.\arabic{guidelinebox}}
\def\theHthm{\thesection.\arabic{thm}}
\def\theHlem{\thesection.\arabic{thm}}
\def\theHcor{\thesection.\arabic{thm}}
\def\theHdefn{\thesection.\arabic{defn}}

\crefname{appendix}{Section}{Sections}
\Crefname{appendix}{Section}{Sections}
\crefname{subappendix}{Section}{Sections}
\Crefname{subappendix}{Section}{Sections}
\crefname{subsubappendix}{Section}{Sections}
\Crefname{subsubappendix}{Section}{Sections}

\phantomsection
\addcontentsline{toc}{section}{Supplementary Material}

\begin{center}
    {\Large \bf Supplementary Material} \\[0.8em]
    {\large for ``Micro-randomized Trials with Categorical Treatments and Binary
    Proximal Outcome: Causal Effect Estimation and Sample Size Calculation''} \\[0.8em]
    {Jeremy Lin, Tianchen Qian}
\end{center}

\spacingset{1.9}


\section{Technical Details for \texorpdfstring{\Cref{sec:estimator}}{Section 3} on Asymptotic Normality}
\label{sec:proof-thm-CAN}

\subsection{Lemmas}

\begin{lem}
    \label{lem:thm1-proofuse}
    Fix arbitrary $t \in [T]$. Let
    \begin{align*}
        W_t := \prod_{j = t + 1}^{t + \Delta - 1} \frac{\one(A_j = 0)}{p_j(0|H_j)}.
    \end{align*}
    For any $(\alpha,\beta)$, we have
    \begin{align}
        & \sum_{l=0}^K E\bigg[ e^{-\sum_{k = 1}^K C_k(A_t) f_t(S_t)^T\beta_k } \bigg(Y_{t,\Delta} - e^{\sum_{k = 1}^K C_k(A_t) f_t(S_t)^T \beta_k + g_t(H_t)^T \alpha}\bigg) \nonumber \\
        &~~~~~~~ \times J_t C_K(A_t) f_t(S_t) ~\Big|~ H_t, I_t = 1, A_t = l \bigg] ~ p_t(l | H_t) \nonumber \\
        = ~& e^{\sum_{k = 1}^K \tp_t(k|S_t) f_t(S_t)^T \beta_k } ~ \tp_t(K|S_t) f_t(S_t) \nonumber \\
        & \times \bigg[\Big\{1 - \tp_t(K|S_t)\Big\} \Big\{e^{-f_t(S_t)^T \beta_K}E(Y_{t,\Delta} W_t \mid H_t, I_t = 1, A_t = K) - E(Y_{t,\Delta} W_t \mid H_t, I_t = 1, A_t = 0) \Big\} \nonumber \\
        & \quad - \sum_{l=1}^{K-1} \tp_t(l|S_t) \Big\{e^{-f_t(S_t)^T \beta_l}E(Y_{t,\Delta} W_t \mid H_t, I_t = 1, A_t = l)- E(Y_{t,\Delta} W_t \mid H_t, I_t = 1, A_t = 0)\Big\} \bigg]. \label{eq:lem1}
    \end{align}
\end{lem}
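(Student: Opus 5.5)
The proof is a direct computation: condition on $A_t = l$, so that every treatment-dependent factor becomes deterministic, and then sum over $l$ using $\sum_{l=0}^K \tp_t(l|S_t) = 1$.

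Fix $t$, $H_t$ with $I_t = 1$, and $l \in \{0,\ldots,K\}$, and set $\beta_0 := 0$. On the event $A_t = l$ we have $C_k(A_t) = \one(l = k) - \tp_t(k|S_t)$. Hence
\[
\sum_{k=1}^K C_k(l) f_t(S_t)^T\beta_k = f_t(S_t)^T\beta_l - \sum_{k=1}^K \tp_t(k|S_t) f_t(S_t)^T\beta_k .
\]
The exponential prefactor therefore equals $e^{\sum_{k} \tp_t(k|S_t) f_t(S_t)^T\beta_k}\, e^{-f_t(S_t)^T\beta_l}$. The first factor of $J_t$ equals $\tp_t(l|S_t)/p_t(l|H_t)$, so it cancels the outer $p_t(l|H_t)$. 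What remains inside the conditional expectation is $W_t$ multiplied by the bracket; every other factor is $H_t$-measurable once $A_t = l$ is fixed.

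Write $\mu_l := E(Y_{t,\Delta} W_t \mid H_t, I_t = 1, A_t = l)$. The key auxiliary fact is
\[
E(W_t \mid H_t, I_t = 1, A_t = l) = 1 .
\]
This follows by iterated expectations, peeling off one factor at a time from $j = t+\Delta-1$ down to $j = t+1$. Each step uses $E\{\one(A_j = 0) \mid H_j\} = p_j(0|H_j)$, together with the fact that $(H_t, A_t)$ is contained in $H_j$. Positivity (\Cref{asu:causal-assumptions}(b)) makes the ratios well defined, and the convention $0/0 = 0$ handles the unavailable case.

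With this fact, the term of the bracket that involves $\alpha$ contributes
\[
e^{g_t(H_t)^T\alpha} f_t(S_t) \sum_{l=0}^K \tp_t(l|S_t)\, C_K(l) = e^{g_t(H_t)^T\alpha} f_t(S_t)\bigl\{\tp_t(K|S_t) - \tp_t(K|S_t)\bigr\} = 0 .
\]
This is the orthogonality the centering is designed to produce, and it is why $\alpha$ is absent from the right side of \eqref{eq:lem1}.

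The remaining term is
\[
e^{\sum_k \tp_t(k|S_t) f_t(S_t)^T\beta_k} f_t(S_t) \sum_{l=0}^K \tp_t(l|S_t)\, e^{-f_t(S_t)^T\beta_l}\, \mu_l\, C_K(l).
\]
Substitute $C_K(l) = \one(l = K) - \tp_t(K|S_t)$. The sum becomes
\[
\tp_t(K|S_t)\Big[ e^{-f_t(S_t)^T\beta_K}\mu_K - \sum_{l=0}^K \tp_t(l|S_t)\, e^{-f_t(S_t)^T\beta_l}\mu_l \Big].
\]
Separate the $l = K$ and $l = 0$ terms and replace $\tp_t(0|S_t)$ by $1 - \tp_t(K|S_t) - \sum_{l=1}^{K-1}\tp_t(l|S_t)$. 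The $\mu_0$ contributions then distribute exactly into the differences $e^{-f_t(S_t)^T\beta_l}\mu_l - \mu_0$ for $l = K$ (with coefficient $1 - \tp_t(K|S_t)$) and for $l = 1, \ldots, K-1$ (with coefficients $-\tp_t(l|S_t)$). This is precisely \eqref{eq:lem1}.

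The only step needing care is the identity $E(W_t \mid H_t, A_t = l) = 1$, specifically the ordering of the conditioning in the iterated expectation. The rest is bookkeeping with the substitution for $\tp_t(0|S_t)$, which I would check by matching the coefficient of each $\mu_l$ on both sides.
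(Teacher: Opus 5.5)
Your proposal is correct and is essentially the paper's proof: condition on $A_t=l$, cancel $p_t(l|H_t)$ against the first factor of $J_t$, use $E(W_t \mid H_t, I_t=1, A_t=l)=1$, and rearrange using $\sum_{l=0}^K \tp_t(l|S_t)=1$. The difference is only in bookkeeping. You treat all $l$ uniformly via $\beta_0:=0$ and remove the $e^{g_t(H_t)^T\alpha}$ term up front through $\sum_{l}\tp_t(l|S_t)C_K(l)=0$, whereas the paper splits the sum into $V_1$ ($l=0$), $V_2$ ($1\le l\le K-1$), and $V_3$ ($l=K$), and cancels the $\alpha$-terms after recombining; also, the $0/0$ convention is never actually invoked, since $p_j(0|H_j)=1$ whenever $I_j=0$.
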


\begin{proof}[Proof of \cref{lem:thm1-proofuse}]
    We prove the lemma by doing direct algebra similar to proof in \citep{cohn2023sample, lin2025micro}. To reduce notation burden, within this proof we are using these following short-hand notation: $Y := Y_{t,\Delta}$, $C_k := C_k(A_t)$, $f := f_t(S_t)$, $f\beta_k := f_t(S_t)^T\beta_k$, $g\alpha := g_t(H_t)^T\alpha$, $\tp(k) := \tp_t(k | S_t)$, $p(k) := p_t(k | H_t)$, $W := W_t$, $E_k(\cdot) := E(\cdot \mid H_t, I_t = 1, A_t = k)$. In the derivation below, we will repeatedly use the fact that $E_k(W) = 1$ for all $0 \leq k \leq K$, which follows from the law of iterated expectations. With the short-hand notation, the left-hand side of \eqref{eq:lem1} becomes
    \begin{align}
    \sum_{l=0}^K  E_l\bigg[ e^{-\sum_{k = 1}^K C_kf\beta_k }
    \bigg(Y - e^{\sum_{k = 1}^K C_k f\beta_k + g\alpha} \bigg) \frac{\tp(l)}{p(l)} W C_K f\bigg] p(l) = V_1 + V_2 + V_3, \label{eq:lem1-proofuse1}
    \end{align}
    where
    \begin{align}
        V_1 & :=  E_0\bigg[ e^{-\sum_{k = 1}^K C_kf\beta_k }
    \bigg(Y - e^ {\sum_{k = 1}^K C_k f\beta_k + g\alpha} \bigg) \frac{\tp(0)}{p(0)} W C_K f\bigg] p(0)\nonumber \\
        V_2 & := \sum_{l=1}^{K-1} E_l\bigg[ e^{-\sum_{k = 1}^K C_kf\beta_k }
    \bigg(Y - e^ {\sum_{k = 1}^K C_k f\beta_k + g\alpha} \bigg) \frac{\tp(l)}{p(l)} W C_K f\bigg] p(l) \nonumber \\
        V_3 & := E_K\bigg[ e^{-\sum_{k = 1}^K C_kf\beta_k}
    \bigg(Y - e^{\sum_{k = 1}^K C_k f\beta_k + g\alpha} \bigg) \frac{\tp(K)}{p(K)} W C_K f\bigg] p(K). \nonumber
    \end{align}
    By direct algebra we have
    \begin{align}
        V_1 & = E_0 \bigg[ e^{ \sum_{k = 1}^K \tp(k)f \beta_k}\bigg( Y -e^{ -\sum_{k = 1}^K \tp(k)f \beta_k + g\alpha}\bigg) \tp(0)\{-\tp(K)\} W f \bigg] \nonumber \\
        & = - \bigg[ e^{ \sum_{k = 1}^K \tp(k)f \beta_k} E_0(YW) - e^{ g\alpha} \bigg] \tp(0) \tp(K) f. \label{eq:lem1-proofuse-v1}
    \end{align}
    We also have
    \begin{align}
        V_2 & = \sum_{l=1}^{K-1} E_l\bigg[ e^{-\sum_{k = 1}^K C_kf\beta_k}
    \bigg(Y - e^{\sum_{k = 1}^K C_k f\beta_k + g\alpha} \bigg) \frac{\tp(l)}{p(l)} W C_K f\bigg] p(l) \nonumber \\
            &=\sum_{l=1}^{K-1} E_l\bigg[ \bigg(e^{-\sum_{k = 1}^K C_kf\beta_k } YW -  e^{g\alpha} W \bigg) \tp(l) \{-\tp(K)\} f\bigg] \nonumber \\
        & = -\bigg[e^{\sum_{k = 1}^K \tp(k)f\beta_k}
        \sum_{l=1 }^{K-1}\tp(l)E_l\bigg(e^{ -f\beta_l} YW\bigg)
        {}-{} e^{g\alpha} (1 - \tp(0) - \tp(K))\bigg]\tp(K)f
        \label{eq:lem1-proofuse-v2}
    \end{align}
Finally, we have
    \begin{align}
        V_3 = \bigg[e^{\sum_{k = 1}^K \tp(k)f\beta_k -f\beta_K}
        E_K\bigg(YW\bigg)
        - e^{g\alpha}\bigg]\tp(K) \left\{1 -\tp(K)\right\}f. \label{eq:lem1-proofuse-v3}
    \end{align}
Putting together \eqref{eq:lem1-proofuse-v1}, \eqref{eq:lem1-proofuse-v2}, and \eqref{eq:lem1-proofuse-v3}, we have that the left hand side of \eqref{eq:lem1} equals
    \begin{align}
        & ~~~~ V_1 + V_{2} + V_3 \nonumber \\
        & = e^{\sum_{k = 1}^K \tp(k)f\beta_k} \tp(K) f \bigg[-E_0(YW) \tp(0) - \sum_{l=1}^{K-1} E_l (e^{-f\beta_l}Y W) \tp(l) + E_K(e^{-f\beta_K}YW) \{1-\tp(K)\}\bigg] \nonumber \\
        & ~~~~ + \tp(K) f \bigg[ e^{g\alpha} \tp(0) - e^{g\alpha} \{1 - \tp(K)\} + e^{g\alpha} \{1 - \tp(0) - \tp(K)\} \bigg]. \label{eq:lem1-proofuse-4}
    \end{align}
    The first bracket in \eqref{eq:lem1-proofuse-4} is
    \begin{align}
        & ~~~~ -E_0(YW) \tp(0) - \sum_{l=1}^{K-1} E_l (e^{-f\beta_l}Y W) \tp(l) + E_K(e^{-f\beta_K}YW) \{1-\tp(K)\} \nonumber \\
        & = -E_0(YW) \left\{ 1 - \tp(K) - \sum_{l=1}^{K-1} \tp(l)\right\} - \sum_{l=1}^{K-1} E_l (e^{-f\beta_l}Y W) \tp(l) + E_K(e^{-f\beta_K}YW) \{1-\tp(K)\} \nonumber \\
        & = - \sum_{l=1}^{K-1} \tp(l) \{e^{-f\beta_l}E_l(YW) - E_0(YW)\} + \{1-\tp(K)\} \{e^{-f\beta_K}E_K(YW) - E_0(YW)\}.
        \label{eq:lem1-proofuse-5}
    \end{align}
    The second bracket in \eqref{eq:lem1-proofuse-4} is:
    \begin{align}
        e^{g\alpha} \tp(0) - e^{g\alpha} \{1 - \tp(K)\} + e^{g\alpha} \{1 - \tp(0) - \tp(K)\} = 0. \label{eq:lem1-proofuse-6}
    \end{align}
  
    Putting \eqref{eq:lem1-proofuse-5} and \eqref{eq:lem1-proofuse-6} into \eqref{eq:lem1-proofuse-4} yields \eqref{eq:lem1}. This completes the proof.
\end{proof}

\subsection{Proof of \texorpdfstring{\cref{thm:CAN}}{Theorem 1}}

The proof follows from a standard application of the estimating equation theory \citep[e.g., Section 5 of][]{van2000asymptotic}. The proof consists of five main steps. In Step 1, we show the unbiasedness of the estimating function $m(\alpha,\beta)$. In Step 2, we establish the asymptotic normality of $(\hat\alpha,\hat\beta)$. In Step 3, we derive the explicit form of the asymptotic variance for $\hat\beta$. In Step 4, we will show the consistency of the asymptotic variance estimator. Finally, in Step 5, we will show that the theorem conclusion holds when $\tp_t(k|S_t)$ is estimated. 

In the proof, we use the following definition.
\begin{align}
    r_t(\alpha,\beta) := e^{-\sum_{k = 1}^K C_k(A_t) f_t(S_t)^T\beta_k }Y_{t,\Delta} - e^{ g_t(H_t)^T \alpha}, \label{eq:rt-def-more-complex}
\end{align}
and
\begin{align}
    \tD_t :=
    \begin{bmatrix}
        g_t(H_t) \\
        C_1(A_t) f_t(S_t)\\
        \vdots\\
        C_K(A_t) f_t(S_t)
    \end{bmatrix}
    \quad \text{and} \quad
    D_t :=
    \begin{bmatrix}
        C_1(A_t) f_t(S_t)\\
        \vdots\\
        C_K(A_t) f_t(S_t)
    \end{bmatrix}. \nonumber
\end{align}

\textbf{Step 1:} We show that there exists some $\alpha' \in \RR^q$ such that $E \{m(\alpha',\beta^0)\} = 0$ with $m(\alpha,\beta)$ being the estimating function defined in \eqref{eq:ee} and $\beta^0$ being the true parameter value of $\beta$. We show that for arbitrary $\alpha$ and for all $t \in [T]$ the following holds
\begin{align}
    E \bigg[ I_t \bigg\{e^{-\sum_{k = 1}^K C_k(A_t) f_t(S_t)^T\beta_k } \bigg(Y_{t,\Delta} - e^{\sum_{k = 1}^K C_k(A_t) f_t(S_t)^T \beta_k + g_t(H_t)^T \alpha}\bigg)\bigg\}J_t C_K(A_t) f_t(S_t) \bigg] = 0. \label{eq:thm1-proofuse1}
\end{align}
Using iterated expectations we have
\begin{align}
    & ~~~~ E \bigg[ I_t \bigg\{e^{-\sum_{k = 1}^K C_k(A_t) f_t(S_t)^T\beta_k } \bigg(Y_{t,\Delta} - e^{\sum_{k = 1}^K C_k(A_t) f_t(S_t)^T \beta_k + g_t(H_t)^T \alpha}\bigg)\bigg\}J_t C_K(A_t) f_t(S_t) \bigg] \nonumber \\
    & = E \Bigg( I_t \sum_{l=0}^K E \bigg[\bigg\{e^{-\sum_{k = 1}^K C_k(A_t) f_t(S_t)^T\beta_k } \bigg(Y_{t,\Delta} - e^{\sum_{k = 1}^K C_k(A_t) f_t(S_t)^T \beta_k + g_t(H_t)^T \alpha}\bigg)\bigg\} \nonumber \\
    & \qquad \qquad  \qquad  \qquad \times J_t C_K(A_t) f_t(S_t) ~\Big|~ H_t, I_t = 1, A_t = l \bigg] p_t(l|H_t) \Bigg) \nonumber \\
    & = E \Bigg( I_t  e^{\sum_{k = 1}^K \tp_t(k|S_t) f_t(S_t)^T \beta_k }\tp_t(K|S_t) f_t(S_t) \nonumber \\
        & \qquad \times \bigg[\Big\{1 - \tp_t(K|S_t)\Big\} \Big\{e^{-f_t(S_t)^T \beta_K}E(Y_{t,\Delta} W_t \mid H_t, I_t = 1, A_t = K) - E(Y_{t,\Delta} W_t \mid H_t, I_t = 1, A_t = 0) \Big\} \nonumber \\
        & \qquad \qquad - \sum_{l=1}^{K-1} \tp_t(l|S_t) \Big\{e^{-f_t(S_t)^T \beta_l}E(Y_{t,\Delta} W_t \mid H_t, I_t = 1, A_t = l)- E(Y_{t,\Delta} W_t \mid H_t, I_t = 1, A_t = 0)\Big\} \bigg]\Bigg), \label{eq:thm1-proofuse2}
\end{align}
where \eqref{eq:thm1-proofuse2} follows from \cref{lem:thm1-proofuse}. Then because of the parametric CEE model \eqref{eq:CEE-parametric-model}, it follows immediately from an iterated expectation given $(S_t, I_t)$ that \eqref{eq:thm1-proofuse2} equals 0. Moreover, assuming that there exists a unique $\alpha'\in \RR^q$ such that the following holds
\begin{align}
    \sum_{t=1}^T E \bigg[  I_t \bigg\{e^{-\sum_{k = 1}^K C_k(A_t) f_t(S_t)^T\beta_k } \bigg(Y_{t,\Delta} - e^{\sum_{k = 1}^K C_k(A_t) f_t(S_t)^T \beta_k + g_t(H_t)^T \alpha}\bigg)\bigg\}J_t g_t(H_t)\bigg] = 0. \label{eq:thm1-proofuse3}
\end{align}
Then $E \{m(\alpha',\beta^0)\} = 0$, which completes Step 1. The same argument applied with $C_K(A_t)$ replaced by $C_k(A_t)$ for $k \in [K]$, relabeling which treatment level plays the role of $K$, shows that all $K$ blocks of the $\beta$-estimating equations are unbiased.

\textbf{Step 2:} The asymptotic normality of $(\hat\alpha, \hat\beta)$ follows immediately from Theorems 5.9 and 5.21 of \citet{van2000asymptotic}:
\begin{align}
    \sqrt{n} \left( \begin{bmatrix} \hat\alpha \\ \hat\beta \end{bmatrix} - \begin{bmatrix} \alpha' \\ \beta^0 \end{bmatrix} \right) \dto N\bigg(0, \Big[ E \big\{\dot{m}(\alpha', \beta^0) \big\} \Big]^{-1} \Big[ E \big\{ m(\alpha', \beta^0) m(\alpha', \beta^0)^T \big\} \Big] \Big[ E \big\{\dot{m}(\alpha', \beta^0) \big\} \Big]^{-1,T} \bigg). \label{eq:thm1-proofuse5}
\end{align}

\textbf{Step 3:} Derive the explicit form of the asymptotic variance for $\hat\beta$, i.e., the lower $Kp \times Kp$ principal submatrix of the asymptotic variance in \eqref{eq:thm1-proofuse5}. First, for simpler notation purposes, we define $U_t = e^{-\sum_{k = 1}^K C_k(A_t) f_t(S_t)^T \beta_k}$, and $\Lambda_t = \text{diag}(e^{g_t^T\alpha} 1_q , U_t Y_{t,\Delta} 1_{Kp})$.
By definition, $m(\alpha,\beta) = \sum_{t=1}^T I_t J_t\tD_t  r_t(\alpha,\beta)$, which implies that
\begin{align}
    \dot{m}(\alpha,\beta) = \sum_{t=1}^T I_t J_t \tD_t \dot{r}_t(\alpha,\beta) = - \sum_{t=1}^T I_t J_t \tD_t \tD_t^T \Lambda_t. \nonumber
\end{align}
Therefore,
\begin{align}
    E \big\{\dot{m}(\alpha', \beta^0) \big\} & = - \sum_{t=1}^T E(I_t J_t \tD_t \tD_t^T \Lambda_t)\nonumber \\
    & = - \sum_{t=1}^T E\left\{I_t J_t \begin{bmatrix} e^{g_t^T\alpha'} g_t(H_t)g_t(H_t)^T & U_t Y_{t,\Delta} g_t(H_t) D_t^T \\ e^{g_t^T\alpha'}D_t g_t(H_t)^T & U_t Y_{t,\Delta} D_t D_t^T \end{bmatrix} \right\}. \label{eq:thm1-proofuse6}
\end{align}
Using an argument similar to \eqref{eq:thm1-proofuse2}, we can show that the off-diagonal blocks of \eqref{eq:thm1-proofuse6} have expectation 0. This implies that 
\begin{align}
    E \big\{\dot{m}(\alpha', \beta^0) \big\} = - \begin{bmatrix} \sum_{t=1}^T E \{ I_t J_t e^{g_t(H_t)^T\alpha'} g_t(H_t)g_t(H_t)^T \} & 0 \\ 0 &  \sum_{t=1}^T E ( I_t J_t U_t Y_{t,\Delta}D_t D_t^T ) \end{bmatrix}, \label{eq:thm1-proofuse7}
\end{align}
which implies that only the lower $Kp \times Kp$ principal submatrix of $\{ m(\alpha', \beta^0) m(\alpha', \beta^0)^T \}$ matters when calculating the asymptotic variance of $\hat\beta$. By the definition of $m(\alpha,\beta)$, this submatrix has the form
\begin{align}
    \sum_{t=1}^T \sum_{s=1}^T E \{ I_t I_s J_t J_s r_t(\alpha',\beta^0) r_s(\alpha',\beta^0) D_t D_s^T \}. \label{eq:thm1-proofuse8}
\end{align}
Plug \eqref{eq:thm1-proofuse7} and \eqref{eq:thm1-proofuse8} into \eqref{eq:thm1-proofuse5} and we have the asymptotic variance for $\hat\beta$:
\begin{align}
    & \Big\{\sum_{t=1}^T E ( I_t J_t U_t Y_{t,\Delta} D_t D_t^T )\Big\}^{-1} \Big[\sum_{t=1}^T \sum_{s=1}^T E \{ I_t I_s J_t J_s r_t(\alpha',\beta^0) r_s(\alpha',\beta^0) D_t D_s^T \} \Big] \nonumber \\
    & \qquad \times \Big\{\sum_{t=1}^T E ( I_t J_t U_t Y_{t,\Delta} D_t D_t^T ) \Big\}^{-1, T}. \label{eq:thm1-proofuse9}
\end{align}

\textbf{Step 4:} We show that the asymptotic variance expression in equation \eqref{eq:thm1-proofuse9} can be consistently estimated by replacing all the $E$ with $\PP_n$ and replacing the in-probability limits of the estimators by the estimators themselves in \eqref{eq:thm1-proofuse9}. The consistency of this plug-in estimator follows from standard empirical process arguments; for example, Appendix D of \citet{bao2023estimating}.

\textbf{Step 5:} The two main conclusions of the theorem (asymptotic normality and consistent variance estimator) holds when $\tp_t(k|S_t)$ is estimated either parametrically or nonparametrically. This is because $\tp_t(k|S_t)$ is not used in identifying the parameter $\beta$ and is only used solely for improving efficiency. In particular, the estimating function for $\beta$ is always unbiased regardless of the estimation method of $\tp_t(k|S_t)$. When $\tp_t(k|S_t)$ is estimated parametrically, the results follows from  \citet[Section 4.3]{lok2021estimating}. For nonparametric estimation $\tp_t(k|S_t)$, the results follows from \citet{newey1994asymptotic} or \citet[Theorem 5.1]{cheng2023efficient}.
This completes the proof of \cref{thm:CAN}.

\section{Technical Details for \texorpdfstring{\Cref{sec:sample-size-formula}}{Section 4} on Sample Size Formula}
\label{sec:proof-thm-sample-size-formula}

\subsection{Derivation of \texorpdfstring{$\tL$}{L tilde} in \texorpdfstring{\eqref{eq:null-and-alternative-tilde}}{(7)}}
\label{subsec:proof-L-tilde-matrix}

We prove that when \eqref{eq:working-assumption-on-MEE} hold, i.e., when $\mee_k(t) = f_t^T\beta_k$ for $t \in [T]$ and $k \in [K]$, the following are equivalent:
\begin{align}
    L \times \mee_{1:K}(t) = 0 \text{ for } t \in [T] \quad \text{and} \quad \tL \beta = 0. \label{eq:L-tilde-proofuse-0}
\end{align}
Recall that $L \in \RR^{\nu \times K}$, $\mee_{1:K}(t) := (\mee_1(t), \mee_2(t), \ldots, \mee_K(t))^T$, and $\tL:= L \otimes \II_p$, where $\II_p$ denotes a $p \times p$ identity matrix.

Let $l_{ij}$ be the $(i,j)$-th element of $L$. The $i$-th row of $L \times \mee_{1:K}(t) = 0$ is
\begin{align}
    [l_{i1},\ldots,l_{iK}] \left[\begin{matrix} \mee_1(t) \\ \vdots \\ \mee_K(t)  \end{matrix}\right] = 0. \label{eq:L-tilde-proofuse-1}
\end{align}
Because $\mee_k(t) = f_t^T\beta_k$, \eqref{eq:L-tilde-proofuse-1} is equivalent to
\begin{align}
    [l_{i1},\ldots,l_{iK}] \left[\begin{matrix} f_t^T\beta_1 \\ \vdots \\ f_t^T\beta_K  \end{matrix}\right] = 0. \label{eq:L-tilde-proofuse-2}
\end{align}
The left hand side of \eqref{eq:L-tilde-proofuse-2} is equal to
\begin{align*}
    & ~~~~ [l_{i1},\ldots,l_{iK}] \left[\begin{matrix} f_t^T\beta_1 \\ \vdots \\ f_t^T\beta_K  \end{matrix}\right] \\
    & = l_{i1}f_t^T\beta_1 + \cdots +l_{iK}f_t^T\beta_K \\
    & = f_t^T (l_{i1}\beta_1 + \cdots + l_{iK}\beta_K) \\
    & = f_t^T \left[\begin{matrix} l_{i1}\beta_{11} + \cdots + l_{iK}\beta_{K1} \\ l_{i1}\beta_{12} + \ldots + l_{iK}\beta_{K2} \\ \vdots \\ l_{i1}\beta_{1p} + \cdots + l_{iK}\beta_{Kp} \end{matrix}\right] \\
    & = f_t^T ([l_{i1},\ldots,l_{iK}] \otimes \II_p) \beta. 
\end{align*}
\revisiondel{Therefore, the previous version concluded that $L \times \mee_{1:K}(t) = 0$ was equivalent to $f_t^T(L\otimes \II_p)\beta=0$, and then stacked the vectors $f_t$ to the left of $(L\otimes \II_p)\beta$. These products are not dimensionally conformable when $L$ has more than one row.}

\begin{revisionaddblock}
For each $i\in[\nu]$, define the $p$-vector
\begin{align*}
    \gamma_i:=\sum_{k=1}^K l_{ik}\beta_k
    =([l_{i1},\ldots,l_{iK}]\otimes\II_p)\beta.
\end{align*}
Then \eqref{eq:L-tilde-proofuse-2} is equivalent to $f_t^T\gamma_i=0$. Let $F$ denote the $T\times p$ matrix with $t$-th row $f_t^T$. Requiring \eqref{eq:L-tilde-proofuse-2} for every $t\in[T]$ is therefore equivalent to $F\gamma_i=0$. Because $F$ has full column rank, $F\gamma_i=0$ if and only if $\gamma_i=0$. Repeating this argument for every row $i\in[\nu]$ and stacking the resulting vectors gives
\begin{align}
    \begin{bmatrix}\gamma_1\\ \vdots\\ \gamma_\nu\end{bmatrix}
    =(L\otimes\II_p)\beta=\tL\beta=0. \label{eq:L-tilde-proofuse-3}
\end{align}
The reverse implication follows by reading these steps in reverse. Hence \eqref{eq:L-tilde-proofuse-0} holds. This completes the proof.
\end{revisionaddblock}




\subsection{Additional Definitions}

Recall that in \cref{thm:sample-size-formula}, we assume that the randomization probability depends at most on the decision point index and is thus written as $p_t(k) := P(A_t = k \mid I_t = 1) \equiv P(A_t = k \mid H_t, I_t = 1)$. We also assume that the proximal outcome window $\Delta = 1$ and thus $Y_{t,\Delta}$ is simply written as $Y_t$. We will use $\beta^0$ and $\alpha^0$ to denote the true parameter values corresponding to (WA-a) and (WA-b).

We introduce additional definitions to reduce notation burden. We will sometimes use $\PP$ to denote expectation. In particular, $\PP f := E(f)$. We will write $\sum_{t=1}^T$ as $\sum_t$, $\sum_{t=1}^T \sum_{s=1}^T$ as $\sum_{t,s}$, and $\sum_{1 \leq t, s \leq T: t \neq s}$ as $\sum_{t\neq s}$.

Define
\begin{align}
    r_t(\alpha,\beta) := e^{-\sum_{k = 1}^K C_k(A_t) f_t^T\beta_k }Y_{t} - e^{ g_t^T \alpha}, \label{eq:rt-def}
\end{align}
where $Y_t$ and $A_t$ denote random variables of a generic individual (and thus the subscript $i$ is omitted). We will omit $(\alpha,\beta)$ when they are evaluated at $(\alpha^0, \beta^0)$: i.e., $r_t := r_t(\alpha^0, \beta^0), m := m(\alpha^0, \beta^0)$. 

In addition, define
\begin{align}
    C(A_t) := \begin{bmatrix} C_1(A_t) \\ C_2(A_t) \\ \vdots \\ C_K(A_t) \end{bmatrix}, \quad
    \tD_t :=
    \begin{bmatrix}
        g_t \\
        C_1(A_t) f_t\\
        \vdots\\
        C_K(A_t) f_t
    \end{bmatrix}
    \quad \text{and} \quad
    D_t :=
    \begin{bmatrix}
        C_1(A_t) f_t\\
        \vdots\\
        C_K(A_t) f_t
    \end{bmatrix}. \nonumber
\end{align}

Define matrix $P_t$ as $K \times K$ matrix whose $(k_1,k_2)$-th entry equals $p_t(k_1)\{1-p_t(k_1)\}$ when $k_1 = k_2$ and $-p_t(k_1)p_t(k_2)$ when $k_1 \neq k_2$.  Lastly, we also define matrix $Q_t$ as $K \times K$ matrix whose element is defined as:
\begin{align*}
    (Q_t)_{h,j} = \begin{cases}
        \begin{aligned}
        p_t(h)\bigg( &\{1 - p_t(h)\} \left[p_t(h) + \left\{1 - p_t(h)\right\} e^{-f_t^T\beta_h} - e^{g_t^T\alpha^0}\right] \\
        &+ p_t(h)\left\{\sum_{l\neq 0,h}^K p_t(l) \left(e^{-f_t^T \beta_l} -1\right) \right\}\bigg)
        \end{aligned} &\text{,if $h = j$}\\
        p_t(h) p_t(j) \left[1 + e^{g_t^T\alpha^0} - e^{-f_t^T\beta_h} - e^{-f_t^T\beta_j}+ \sum_{l = 1}^K p_t(l)\left( e^{-f_t^T \beta_l} - 1\right)\right] &\text{,if $h\neq j$} 
    \end{cases}
\end{align*}

\subsection{A Weaker Working Assumption (WA-f)}

We present a working assumption (WA-f) that is an implication of (WA-a)--(WA-e), and we prove \cref{thm:sample-size-formula}(ii) under (WA-a), (WA-b), (WA-c), and (WA-f). (WA-f) is more technical and harder to interpret, so we presented its sufficient conditions (WA-d) and (WA-e) in the paper.

Consider the following working assumption:
\begin{itemize}
    \item[(WA-f)] \revisiondel{\(E\{r_t(\alpha^0,\beta^0)r_s(\alpha^0,\beta^0)\mid I_t=1,I_s=1,A_t,A_s\}\)} \revisionadd{\(E\{r_t(\alpha',\beta^0)r_s(\alpha',\beta^0)\mid I_t=1,I_s=1,A_t,A_s\}\)} is constant in $A_t,A_s$ for all $1\leq s<t\leq T$, where $\alpha'$ is the in-probability limit defined in \cref{lem:thm2-proofuse2a}.
\end{itemize}

(WA-f) is weaker than (WA-d) plus (WA-e) in the sense that when (WA-a) and (WA-b) hold, (WA-d) and (WA-e) imply (WA-f), which we establish in \cref{lem:waf-assump}.

\subsection{Lemmas}
\label{subsec:lemma}
Note that all the lemmas in this subsection concern the particular setting considered in the sample size calculation \Cref{sec:sample-size-formula}, where the interest is in the immediate, marginal CEE (i.e., $\Delta = 1$ and $S_t = \emptyset$ in \eqref{eq:cee-def}), and the randomization probability at each decision point is either constant or dependent only on $t$. We do not include this specification in each lemma statement to avoid repetition. Define $u_t = e^{\sum_{k = 1}^K p_t(k)f_t^T\beta_k}$ and $U_t = e^{-\sum_{k =1}^K C_k(A_t)f_t^T\beta_k}$.

\begin{lem}
    \label{lem:thm2-proofuse1}
    Under (WA-a), we have the following results (the first identity holds without (WA-a)):
    \begin{align}
        \PP I_t C_k(A_t) & = 0, \\
        \PP I_t U_t Y_t C_k(A_t) & = 0, \\
        \PP I_t U_t Y_t C_k(A_t) C_l(A_t) & = - E(I_t) E(Y_t \mid A_t=0, I_t = 1) u_t p_t(k) p_t(l), \\
        \PP I_t U_t  Y_t C_k(A_t)^2 & = E(I_t) E(Y_t \mid A_t = 0, I_t = 1) u_t p_t(k) \{1 - p_t(k)\}.
    \end{align}
\end{lem}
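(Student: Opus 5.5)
The plan is to reduce all four identities to moments of the centered indicators $C_k(A_t)$ under the randomization distribution, by conditioning on $(I_t, A_t)$. In the sample-size setting we have $\Delta=1$, $S_t=\emptyset$, $\tp_t(k)=p_t(k)$, and $J_t\equiv 1$. Moreover, $P(A_t=k\mid H_t,I_t=1)=p_t(k)$, so $A_t$ is independent of $H_t$ given $I_t=1$.

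\textbf{First identity (no (WA-a) needed).} Write $\PP I_t C_k(A_t)=E(I_t)\,E\{C_k(A_t)\mid I_t=1\}$. This equals $E(I_t)\{p_t(k)-p_t(k)\}=0$.

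\textbf{Step 1: the key fact $E(U_tY_t\mid A_t=l,I_t=1)=u_t\,\spnc(t)$ for every $0\le l\le K$.}
\begin{itemize}
\item On $\{A_t=l\}$ we have $C_k(l)=\one(l=k)-p_t(k)$. Hence $\sum_{k=1}^K C_k(A_t)f_t^T\beta_k=f_t^T\beta_l-\sum_{k}p_t(k)f_t^T\beta_k$, with the convention $\beta_0:=0$. So on this event $U_t=u_t e^{-f_t^T\beta_l}$ is deterministic.
\item Next I show that $E(Y_t\mid A_t=l,I_t=1)=e^{f_t^T\beta_l}\,E(Y_t\mid A_t=0,I_t=1)$. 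By the identification formula \eqref{eq:cee-identification} with $\Delta=1$ and $S_t=\emptyset$, we have $\mee_l(t)=\log\big[E\{E(Y_t\mid H_t,A_t=l)\mid I_t=1\}/E\{E(Y_t\mid H_t,A_t=0)\mid I_t=1\}\big]$.
\item Because $A_t\perp H_t\mid I_t=1$, the conditional law of $H_t$ given $(I_t=1,A_t=l)$ equals its law given $I_t=1$. Therefore $E\{E(Y_t\mid H_t,A_t=l)\mid I_t=1\}=E(Y_t\mid A_t=l,I_t=1)$. Combining this with (WA-a), $\mee_l(t)=f_t^T\beta_l$, gives the claim.
\item Multiplying the two displays, the factors $e^{\pm f_t^T\beta_l}$ cancel, which yields $u_t\,\spnc(t)$ independently of $l$.
\end{itemize}
This step is the only substantive one. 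The point needing care is justifying that the marginal potential-outcome contrast equals the simple conditional-mean ratio. That is exactly where history-independent randomization is used, and I expect it to be the main obstacle to state cleanly.

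\textbf{Step 2: the remaining three identities.} For any function $h$, iterated expectations over $(I_t,A_t)$ together with Step 1 give
\[
\PP I_tU_tY_t h(A_t)=E(I_t)\,u_t\,\spnc(t)\,E\{h(A_t)\mid I_t=1\}.
\]
Now take $h=C_k$, $h=C_kC_l$ with $k\neq l$, and $h=C_k^2$. The needed multinomial moments are:
\begin{itemize}
\item $E(C_k\mid I_t=1)=0$;
\item $E(C_kC_l\mid I_t=1)=\mathrm{Cov}\{\one(A_t=k),\one(A_t=l)\}=-p_t(k)p_t(l)$ for $k\neq l$;
\item $E(C_k^2\mid I_t=1)=p_t(k)\{1-p_t(k)\}$.
\end{itemize}
Substituting these moments gives the second, third (read for $k\neq l$), and fourth identities.
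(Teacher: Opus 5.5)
Your proof is correct and follows essentially the paper's route: condition on $I_t=1$, split by treatment level, and use (WA-a) in the form $e^{-f_t^T\beta_l}E(Y_t\mid A_t=l,I_t=1)=E(Y_t\mid A_t=0,I_t=1)$ to collapse the sums. Your Step 1 organizes this more cleanly, showing $E(U_tY_t\mid A_t=l,I_t=1)=u_t\spnc(t)$ for every $l$ so that everything reduces to multinomial moments; it also makes explicit the identification step via $A_t\perp H_t\mid I_t=1$, which the paper uses without comment, and you are right to read the third identity as holding for $k\neq l$.
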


\begin{proof}[Proof of \cref{lem:thm2-proofuse1}]
    The first equation follow immediately by iterated expectation conditional on $I_t = 1$: 
    \begin{align}
        \PP I_t C_k(A_t) &= E(I_t)E[ \{\one(A_t = k) - p_t(k)\} \mid I_t = 1] = 0
    \end{align}

    The second equation follows from
    \begin{align}
        &\PP I_t e^{-\sum_{k = 1}^K C_k(A_t)f_t^T\beta_k} Y_t C_k(A_t)  \nonumber \\
        & = E(I_t) E[ e^{-\sum_{k = 1}^K C_k(A_t)ft^T\beta_k} Y_t \{\one(A_t = k) - p_t(k)\} \mid I_t = 1]  \nonumber \\
        & = E(I_t) p_t(k) u_t \bigg[\left\{e^{-f_t^T\beta_k} E(Y_t\mid A_t= k, I_t = 1) - E(Y_t \mid A_t = 0, I_t = 1)\right\} \nonumber \\ 
        &- \sum_{l \neq 0,k}^K p_t(l) \left\{e^{-f^T\beta_l} E(Y_t\mid A_t= l, I_t = 1) - E(Y_t \mid A_t = 0, I_t = 1)\right\} \nonumber \\
        &- p_t(k) \left\{e^{-f^T\beta_k} E(Y_t\mid A_t= k, I_t = 1) - E(Y_t \mid A_t = 0, I_t = 1)\right\}
        \bigg] \nonumber \\
        &=0 \nonumber,
    \end{align}
    where the last step of the equation is equal to zero because of the MEE assumption that $e^{-f_t^T\beta_k}E(Y_t\mid A_t= k, I_t = 1) - E(Y_t \mid A_t = 0, I_t = 1) = 0$ for all $k = 1, \cdots, K$.

The third equation follows from:
\begin{align}
    &\PP I_t e^{-\sum_{k = 1}^K C_k(A_t)ft^T\beta_k} Y_t C_k(A_t) C_l(A_t) \nonumber \\
    &= E(I_t) E\big(e^{-\sum_{k = 1}^K C_k(A_t)ft^T\beta_k} Y_t C_k(A_t) C_l(A_t) \mid I_t = 1\big)\nonumber \\
    &= - E(I_t) u_t p_t(k) p_t(l) \bigg(E(Y_t \mid A_t= 0 , I_t = 1) \nonumber \\
    &+ \{1 - p_t(l)\}\{e^{-f_t^T\beta_l} E(Y_t \mid A_t = l, I_t = 1) - E(Y_t \mid A_t = 0, I_t = 1)\} \nonumber \\
    &+  \{1 - p_t(k)\}\{e^{-f_t^T\beta_k} E(Y_t \mid A_t = k, I_t = 1) - E(Y_t \mid A_t = 0, I_t = 1)\} \nonumber \\
    &- \sum_{j \neq 0,l,k} p_t(j) \{e^{-f_t^T\beta_j} E(Y_t \mid A_t = j, I_t = 1) - E(Y_t \mid A_t = 0, I_t = 1)\} \bigg) \nonumber \\
    & = - E(I_t) E(Y_t \mid A_t=0, I_t = 1) u_t p_t(k) p_t(l), \nonumber 
\end{align}

Finally, fourth equation can be shown by following similar steps from the previous equation: 

\begin{align}
    &\PP I_t e^{-\sum_{k = 1}^K C_k(A_t)ft^T\beta_k} Y_t C_k(A_t)^2 \nonumber \\
    &= E(I_t) E\big(e^{-\sum_{k = 1}^K C_k(A_t)ft^T\beta_k}  Y_t C_k(A_t)^2  \mid I_t = 1\big)\nonumber \\
     &= E(I_t) u_t p_t(k) \bigg(\{1-p_t(k)\} e^{-f_t^T\beta_k} E(Y_t \mid A_t = k, I_t = 1) \nonumber \\
     &+p_t(k) \{1-p_t(k)\} \{e^{-f_t^T\beta_k} E(Y_t \mid A_t = k, I_t = 1) - E(Y_t \mid A_t = 0, I_t = 1)\} \bigg) \nonumber \\
    &= E(I_t) E(Y_t \mid A_t = 0, I_t = 1) u_t p_t(k) \{1 - p_t(k)\} \nonumber
\end{align}

\end{proof}

For \Cref{lem:thm2-proofuse2a} and the lemmas that follow, we additionally assume the span condition stated in \Cref{thm:sample-size-formula}(ii): for each $k \in [K]$, $p_t(k) f_t$ lies in the linear span of $g_t$ viewed as vector-valued functions of $t$.

\begin{lem}
\label{lem:thm2-proofuse2a}
Suppose $\alpha'$ is the in-probability limit of $\hat{\alpha}$, as defined in \eqref{eq:thm1-proofuse5}. Under assumptions (WA-a) and (WA-b), for all decision point $t$ we have:
\begin{align}
    e^{g_t^T\alpha'} = e^{\sum_{k =1}^K p_t(k) f_t^T\beta^0_k + g_t^T\alpha^0} = u_t e^{g_t^T\alpha^0}
\end{align}
\end{lem}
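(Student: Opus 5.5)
The plan is to characterize $\alpha'$ as the solution of the population $\alpha$-block of the estimating equation, show that the candidate value solves it, and then invoke uniqueness. In the sample-size setting we have $J_t\equiv 1$, $\Delta=1$, $S_t=\emptyset$, $g_t(H_t)=g_t$ nonrandom and $\tp_t(k)=p_t(k)$. So, by \eqref{eq:thm1-proofuse3}, $\alpha'$ is the (assumed unique) solution of
\begin{align*}
\sum_{t=1}^T E\big\{ I_t \big(U_t Y_t - e^{g_t^T\alpha}\big)\big\}\, g_t = 0,
\qquad U_t = e^{-\sum_{k=1}^K C_k(A_t) f_t^T\beta^0_k}.
\end{align*}
I would first compute $E(I_t U_t Y_t)$ explicitly and then exhibit an $\alpha$ that makes each summand vanish.

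\textbf{Step 1.} Condition on $I_t=1$ and then on $A_t=k$. If $A_t=k\ge 1$, then $C_j(A_t)=\one(j=k)-p_t(j)$, so $U_t = u_t e^{-f_t^T\beta^0_k}$. If $A_t=0$, then $U_t=u_t$.

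\textbf{Step 2.} Relate the MEE to observed conditional means. Because the randomization probabilities do not depend on $H_t$ beyond $t$, $A_t$ is independent of $H_t$ given $I_t=1$. Together with \Cref{asu:causal-assumptions}, the identification formula \eqref{eq:cee-identification} then reduces to
\begin{align*}
\mee_k(t)=\log\{E(Y_t\mid A_t=k,I_t=1)/E(Y_t\mid A_t=0,I_t=1)\}.
\end{align*}
(WA-a) then gives $e^{-f_t^T\beta^0_k}E(Y_t\mid A_t=k,I_t=1)=E(Y_t\mid A_t=0,I_t=1)$ for every $k\in[K]$. This is the same cancellation used in \Cref{lem:thm2-proofuse1}.

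\textbf{Step 3.} Combine Steps 1 and 2. Averaging over $A_t$ given $I_t=1$ yields $E(U_tY_t\mid I_t=1)=u_t\,E(Y_t\mid A_t=0,I_t=1)$, which equals $u_t e^{g_t^T\alpha^0}$ by (WA-b). The $\alpha$-equation therefore becomes
\begin{align*}
\sum_t E(I_t)\big(u_t e^{g_t^T\alpha^0}-e^{g_t^T\alpha}\big)g_t=0.
\end{align*}

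\textbf{Step 4.} Construct a solution. By the span condition, for each $k$ there is a matrix $B_k$ with $p_t(k)f_t=B_k g_t$ for all $t$. Hence $p_t(k)f_t^T\beta^0_k=g_t^T B_k^T\beta^0_k$. Setting $\alpha^*:=\alpha^0+\sum_k B_k^T\beta^0_k$ gives
\begin{align*}
g_t^T\alpha^*=\sum_k p_t(k)f_t^T\beta^0_k+g_t^T\alpha^0
\quad\text{for every } t,
\end{align*}
so every summand vanishes at $\alpha=\alpha^*$.

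\textbf{Step 5.} Conclude by uniqueness, which was assumed in Step 1 of the proof of \Cref{thm:CAN}. It can also be justified directly: the map $\alpha\mapsto\sum_t E(I_t)e^{g_t^T\alpha}g_t$ is the gradient of a strictly convex function whenever $\sum_t E(I_t)g_tg_t^T$ is positive definite. Hence $\alpha'=\alpha^*$, and $e^{g_t^T\alpha'}=u_te^{g_t^T\alpha^0}$, as claimed.

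The main obstacle is Step 2: correctly reducing $\mee_k(t)$ to a ratio of observed conditional means under $I_t=1$. This uses the fact that $p_t(k)$ does not depend on $H_t$, so the inner conditional expectations in \eqref{eq:cee-identification} marginalize cleanly. Everything else is bookkeeping with the centered indicators $C_k(A_t)$.
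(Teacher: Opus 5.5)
Your proof is correct and follows the same route as the paper. You build $\alpha^*$ from the span condition so that $g_t^T\alpha^* = \sum_{k=1}^K p_t(k) f_t^T\beta_k^0 + g_t^T\alpha^0$, check that it solves the population $\alpha$-block of the estimating equation under (WA-a) and (WA-b), and conclude $\alpha' = \alpha^*$ by uniqueness. Beyond that, your Steps 1--3 write out the computation $E(I_t U_t Y_t) = E(I_t)\,u_t e^{g_t^T\alpha^0}$, which the paper only invokes as ``the same reasoning as in \eqref{eq:thm1-proofuse2},'' and your strict-convexity argument justifies the uniqueness of $\alpha'$, which the paper simply assumes.
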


\begin{proof}
    The proof of \Cref{lem:thm2-proofuse2a} extends the proof in Lemma B.1 of \citet{cohn2023sample}. Because $p_t(k) f_t$ lies in the linear span of $g_t$, this implies that there exist a solution $\alpha^*$ such that $e^{g_t^T\alpha^*} = e^{\sum_{k=1}^K p_t(k) f_t^T\beta_k^0 + g_t^T\alpha^0}$ for all $t$. Display \eqref{eq:thm1-proofuse5} and assumption (WA-a) imply that $\beta' = \beta^0$, where $\beta'$ is the in-probability limit of $\hat{\beta}$ by Theorem 5.9 of \citep{van2000asymptotic}, and furthermore, $\alpha'$ uniquely satisfies
    \begin{align}
    \label{eq:lemmB2-proofuse}
        E\bigg[ \sum_t I_t \left\{ e^{\sum_{k=1}^K -{C_k(A_t)f_t^T\beta_k}} Y_t - e^{g_t^T\alpha'}\right\}g_t\bigg] = 0.
    \end{align}
    Following the same reasoning as in \eqref{eq:thm1-proofuse2}, assumption (WA-a) and (WA-b) guarantee that equation \eqref{eq:lemmB2-proofuse} holds when substituting $\alpha^*$ to $\alpha'$, i.e., replacing $e^{g_t^T\alpha'}$ by $e^{\sum_{k =1}^K p_t(k) f_t^T\beta^0_k + g_t^T\alpha^0}$. This concludes the proof using the uniqueness of $\alpha'$ that we just established.
\end{proof}

\begin{lem}
    \label{lem:thm2-proofuse2}
    Under (WA-a), (WA-b), and (WA-c), we have
    \begin{align}
        \sum_{t=1}^T \PP I_t r_t^2(\alpha',\beta^0) D_t D_t^T = \sum_{t=1}^T E(I_t) u_t^2 e^{g_t^T\alpha^0} Q_t \otimes (f_t f_t^T).
    \end{align}
\end{lem}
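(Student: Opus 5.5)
The plan is to reduce the identity to a per-$t$, per-entry computation that is carried out conditionally on $I_t = 1$ and on each value of $A_t$.

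First I would use the Kronecker structure. By definition $D_t = C(A_t) \otimes f_t$, so
\[
D_t D_t^T = \{C(A_t)C(A_t)^T\} \otimes (f_t f_t^T).
\]
Since $f_t$ is deterministic, it suffices to show, for each $t$ and each $h, j \in [K]$,
\[
\PP\, I_t\, r_t^2(\alpha',\beta^0)\, C_h(A_t) C_j(A_t) = E(I_t)\, u_t^2\, e^{g_t^T\alpha^0} (Q_t)_{h,j}.
\]
The identity in the statement then follows by summing over $t$. By iterated expectation the left side equals
\[
E(I_t)\sum_{l=0}^K p_t(l)\, E\{r_t^2(\alpha',\beta^0) \mid A_t = l, I_t = 1\}\, C_h(l) C_j(l),
\]
where $C_h(l) = \one(l=h) - p_t(h)$. (WA-c) only serves to identify $E(I_t)$ with the known $\tau(t)$.

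Second, I would simplify $r_t$ on each event $\{A_t = l\}$; write $\beta_0 := 0$ for convenience.
\begin{itemize}
\item On $\{A_t = l\}$ we have $e^{-\sum_k C_k(l) f_t^T\beta_k^0} = u_t e^{-f_t^T\beta_l^0}$.
\item By \cref{lem:thm2-proofuse2a}, $e^{g_t^T\alpha'} = u_t e^{g_t^T\alpha^0}$.
\end{itemize}
Together these give
\[
r_t(\alpha',\beta^0) = u_t\big(e^{-f_t^T\beta_l^0} Y_t - e^{g_t^T\alpha^0}\big).
\]
Because $Y_t$ is binary, $Y_t^2 = Y_t$. Also, (WA-a) and (WA-b) give
\[
E(Y_t \mid A_t = l, I_t = 1) = e^{f_t^T\beta_l^0 + g_t^T\alpha^0}.
\]
Expanding the square then yields
\[
E(r_t^2 \mid A_t = l, I_t = 1) = u_t^2 e^{g_t^T\alpha^0}\big(e^{-f_t^T\beta_l^0} - e^{g_t^T\alpha^0}\big).
\]
This is the key simplification. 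No serial-correlation assumption is needed here, since only a single time point enters.

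Third, I would plug this into the sum over $l$ and check that
\[
\sum_{l=0}^K p_t(l)\big(e^{-f_t^T\beta_l} - e^{g_t^T\alpha^0}\big) C_h(l) C_j(l) = (Q_t)_{h,j}.
\]
I expect this bookkeeping to be the main obstacle: the algebra must be matched to the particular grouping used in the definition of $Q_t$. I would split the sum into $l = 0$, $l = h$, $l = j$ (when $j \neq h$), and the remaining $l$.

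For the diagonal entry $h = j$:
\begin{itemize}
\item the term $l = h$ contributes $p_t(h)\{1-p_t(h)\}^2(e^{-f_t^T\beta_h} - e^{g_t^T\alpha^0})$;
\item every other $l$ contributes $p_t(l)\, p_t(h)^2 (e^{-f_t^T\beta_l} - e^{g_t^T\alpha^0})$.
\end{itemize}
Using $\sum_{l \neq h} p_t(l) = 1 - p_t(h)$ together with $e^{-f_t^T\beta_0} = 1$, I would rewrite $\sum_{l\neq 0,h} p_t(l) e^{-f_t^T\beta_l}$ as $\sum_{l\neq 0,h} p_t(l)(e^{-f_t^T\beta_l} - 1) + 1 - p_t(h) - p_t(0)$. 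Collecting terms should reproduce the stated diagonal entry.

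For the off-diagonal entry $h \neq j$:
\begin{itemize}
\item the term $l = h$ gives $-\{1-p_t(h)\}p_t(j)$ times its weight;
\item the term $l = j$ gives the symmetric counterpart;
\item every other $l$ gives $p_t(h)p_t(j)$ times its weight.
\end{itemize}
Factoring out $p_t(h)p_t(j)$ and rewriting $\sum_l p_t(l) e^{-f_t^T\beta_l}$ as $1 + \sum_{l=1}^K p_t(l)(e^{-f_t^T\beta_l} - 1)$ should give the bracket $1 + e^{g_t^T\alpha^0} - e^{-f_t^T\beta_h} - e^{-f_t^T\beta_j} + \sum_{l} p_t(l)(e^{-f_t^T\beta_l} - 1)$, possibly up to signs that I would verify carefully against the definition of $Q_t$.
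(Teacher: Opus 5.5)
Your proposal is correct and takes essentially the same route as the paper. You condition on $I_t=1$ and $A_t=l$, substitute $e^{g_t^T\alpha'}=u_te^{g_t^T\alpha^0}$ from \cref{lem:thm2-proofuse2a} together with the conditional means implied by (WA-a)--(WA-b), compute the entries $E\{I_t r_t^2(\alpha',\beta^0) C_h(A_t)C_j(A_t)\}$, and assemble them via $D_tD_t^T=\{C(A_t)C(A_t)^T\}\otimes(f_tf_t^T)$; your per-level formula $E\{r_t^2(\alpha',\beta^0)\mid A_t=l,I_t=1\}=u_t^2e^{g_t^T\alpha^0}(e^{-f_t^T\beta_l}-e^{g_t^T\alpha^0})$ is a tidier packaging of the paper's term-by-term expansion, and the sums you outline reproduce both the diagonal and off-diagonal entries of $Q_t$ exactly, with no sign adjustments needed.
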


\begin{proof}[Proof of \cref{lem:thm2-proofuse2}]
    For an arbitrary decision point $t$ and treatment level $j$,
    \begin{align}
        &E\bigg\{I_t r_t^2(\alpha', \beta^0) C_j^2(A_t)\bigg\} \\
        &= E(I_t) E\bigg( p_t^2(j)  p_t(0) \bigg[u_t E(Y_t \mid A_t = 0, I_t = 1) \left\{ u_t - 2 e^{g^T\alpha'}\right\} + e^{2g_t^T\alpha'}\bigg] \nonumber \\
        &\quad + [1 - p_t(j)]^2 p_t(j) \bigg[u_t e^{-f_t^T\beta_j} E(Y_t \mid A_t = j, I_t = 1)\left\{ u_t e^{-f_t^T\beta_j} -2 e^{g_t^T \alpha'}\right\} + e^{2g_t^T\alpha'}\bigg] \nonumber\\
        &\quad+ \sum_{l\neq 0, j} p_t^2(j) p_t(l) \bigg[u_t e^{-f_t^T\beta_l} E(Y_t \mid A_t = l, I_t = 1)\left\{ u_t e^{-f_t^T\beta_l} -2 e^{g_t^T \alpha'}\right\} + e^{2g_t^T\alpha'}\bigg]
          \bigg) \nonumber \\
        &= E(I_t) E\bigg( p_t^2(j)  p_t(0) \bigg[u_t E(Y_t \mid A_t = 0, I_t = 1) \left\{ u_t - u_t e^{g_t^T\alpha^0} \right\} +  (u_t e^{g_t^T\alpha^0})^2 \bigg] \nonumber \\
        &\quad + [1 - p_t(j)]^2 p_t(j) \bigg[u_t e^{-f_t^T\beta_j} E(Y_t \mid A_t = j, I_t = 1)\left\{ u_t e^{-f_t^T\beta_j} -2  u_t e^{g_t^T\alpha^0} \right\} + ( u_t e^{g_t^T\alpha^0})^2\bigg] \nonumber\\ 
        &\quad + \sum_{l\neq 0, j} p_t^2(j) p_t(l) \bigg[u_t e^{-f_t^T\beta_l} E(Y_t \mid A_t = l, I_t = 1)\left\{ u_t e^{-f_t^T\beta_l} -2  u_t e^{g_t^T\alpha^0} \right\} + ( u_t e^{g_t^T\alpha^0})^2 \bigg]
          \bigg) \nonumber \\
        &= p_t(j) e^{g_t^T\alpha^0} u_t^2 \bigg( \left[1 - p_t(j)\right]\left[p_t(j) + \left\{ 1- p_t(j)\right\} e^{-f_t^T\beta_j} - e^{g_t^T\alpha^0}\right] + \sum_{l \neq 0,j} p_t(l)p_t(j) (e^{-f_t^T\beta_l} - 1)
        \bigg) \label{eq:lemmB3-proofuse1}
    \end{align}
The second step substitutes the result of \Cref{lem:thm2-proofuse2a}, and the last step uses assumption (WA-b).
For an arbitrary decision point $t$ and treatment levels $h$ and $j$, a similar argument to \eqref{eq:lemmB3-proofuse1} gives:
\begin{align}
    & E\left\{I_t r_t^2(\alpha', \beta^0)C_h(A_t) C_j(A_t)\right\} \nonumber \\
    &= E(I_t) E\bigg[ p_t(j) p_t(h) p_t(0) u_t^2 e^{g_t^T\alpha^0} (1 - e^{g_t^T\alpha^0}) + \{1 - p_t(h)\} p_t(h)p_t(j) u_t^2 e^{g_t^T\alpha^0} \left( e^{-f_t^T\beta_h} - e^{g_t^T\alpha^0}\right)\nonumber\\
    &\quad + \{1 - p_t(j)\} p_t(j)p_t(h) u_t^2 e^{g_t^T\alpha^0} \left( e^{-f_t^T\beta_j} - e^{g_t^T\alpha^0}\right) + \sum_{l\neq 0, h,j} p_t(h) p_t(j) p_t(l) u_t^2 e^{g_t^T\alpha^0} \left(e^{-f_t^T\beta_l} - e^{g_t^T\alpha^0}\right)\bigg] \nonumber \\
    &= E(I_t) p_t(h) p_t(j) u_t^2  e^{g_t^T\alpha^0} \left\{ 1 + e^{g_t^T\alpha^0} - e^{-f_t^T\beta_h} - e^{-f_t^T\beta_j} + \sum_{l =1}^K p_t(l) \left(e^{-f_t^T\beta_l} -1\right) \right\} \label{eq:lemmB3-proofuse2}
\end{align}

    Finally, it is easy to verify that
    \begin{align}
        D_t D_t^T = \begin{bmatrix} C_1^2(A_t) & \ldots & C_1(A_t) C_K(A_t) \\ \vdots & & \vdots \\ C_K(A_t)C_1(A_t) & \ldots & C_K^2(A_t) \end{bmatrix} \otimes (f_t f_t^T). \label{eq:lem2-thm2-proofuse5}
    \end{align}
    So putting together \eqref{eq:lemmB3-proofuse1}, \eqref{eq:lemmB3-proofuse2}, and \eqref{eq:lem2-thm2-proofuse5} yields
    \begin{align}
        \PP I_t r_t^2(\alpha',\beta^0) D_t D_t^T = E(I_t)u_t^2 e^{g_t^T \alpha^0} Q_t \otimes (f_t f_t^T). \nonumber
    \end{align}
    Thus the lemma statement follows. This completes the proof.
\end{proof}

\begin{lem} \label{lem:waf-assump}
    (WA-a), (WA-b), (WA-d), and (WA-e) imply (WA-f), moreover, one can also show that:
    \begin{align*}
        E\left\{ r_t(\alpha', \beta^0) r_s(\alpha', \beta^0) \mid I_s = 1, I_t = 1, A_s, A_t\right\} = 0
    \end{align*}
\end{lem}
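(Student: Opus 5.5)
The plan is a tower-property argument over the earlier-time outcome $Y_s$. Fix $1\le s<t\le T$ and work throughout on the event $\{I_s=1,I_t=1\}$. Note that $r_s(\alpha',\beta^0)=e^{-\sum_k C_k(A_s)f_s^T\beta_k^0}Y_s-e^{g_s^T\alpha'}$ is a function of $(A_s,Y_s)$ only. Likewise $r_t(\alpha',\beta^0)$ is a function of $(A_t,Y_t)$ only. I will therefore condition additionally on $Y_s$ and write
\begin{align*}
E(r_tr_s\mid I_s=1,I_t=1,A_s,A_t)=E\{r_s\,E(r_t\mid I_s=1,I_t=1,A_s,A_t,Y_s)\mid I_s=1,I_t=1,A_s,A_t\}.
\end{align*}

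\textbf{Step 1 (remove the dependence on $Y_s$).} Since $r_t$ is affine in $Y_t$ with coefficients depending only on $A_t$, (WA-d) says $E(Y_t\mid I_s=1,I_t=1,A_t,A_s,Y_s)$ does not depend on $Y_s$. Hence the inner conditional expectation equals some function $\phi(A_s,A_t)$. It can be pulled out, leaving
\begin{align*}
E(r_tr_s\mid I_s=1,I_t=1,A_s,A_t)=\phi(A_s,A_t)\,E(r_s\mid I_s=1,I_t=1,A_s,A_t).
\end{align*}
So it suffices to show that the last factor is zero.

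\textbf{Step 2 (drop the later-time conditioning).} I will argue that $E(Y_s\mid I_s=1,I_t=1,A_s,A_t)=E(Y_s\mid I_s=1,A_s)$. By (WA-e), $I_t$ is independent of $(A_s,Y_s)$. Given $I_t=1$, $A_t$ is randomized with probabilities $p_t(\cdot)$ that depend only on $t$. Together these give conditional independence of $(I_t,A_t)$ and $Y_s$ given $(I_s=1,A_s)$. This is the step I expect to be the main obstacle. (WA-e) is stated as marginal independence of $I_t$ from the past, so I will need to read it as also holding conditionally on $(I_s,A_s)$, that is, jointly with $I_s$. I would make this explicit: in the working model the availability process is exogenous, so the whole vector $(I_1,\dots,I_T)$ is independent of treatments and outcomes apart from $A_t$ being set to $0$ when unavailable. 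Then combine this with sequential randomization of $A_t$.

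\textbf{Step 3 (evaluate the remaining mean).} Write $a=A_s$. By (WA-a), $E(Y_s\mid I_s=1,A_s=a)=e^{f_s^T\beta_a^0}e^{g_s^T\alpha^0}$ for $a\ge1$, and by (WA-b) it equals $e^{g_s^T\alpha^0}$ for $a=0$. For every $a\in\{0,\dots,K\}$ one has $e^{-\sum_k C_k(a)f_s^T\beta_k^0}=u_s\,e^{-f_s^T\beta_a^0\one(a\ge1)}$, where $u_s=e^{\sum_{k=1}^K p_s(k)f_s^T\beta_k^0}$. Therefore
\begin{align*}
E(r_s\mid I_s=1,A_s)=u_se^{g_s^T\alpha^0}-e^{g_s^T\alpha'}.
\end{align*}
By \cref{lem:thm2-proofuse2a}, $e^{g_s^T\alpha'}=u_se^{g_s^T\alpha^0}$, so this mean is identically zero.

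\textbf{Conclusion.} Step 2 gives $E(r_s\mid I_s=1,I_t=1,A_s,A_t)=E(r_s\mid I_s=1,A_s)$, which is zero by Step 3. By Step 1, $E(r_tr_s\mid I_s=1,I_t=1,A_s,A_t)=0$. This is trivially constant in $(A_t,A_s)$, which is exactly (WA-f).
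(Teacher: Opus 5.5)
Your proof is correct and is essentially the paper's argument. The paper expands $E(r_tr_s\mid\cdot)$ into four terms, uses (WA-d) through iterated expectation over $Y_s$, uses (WA-e) to replace $E(Y_s\mid I_s=1,I_t=1,A_s,A_t)$ by $e^{g_s^T\alpha^0+\sum_k\one(A_s=k)f_s^T\beta_k}$, and then cancels the terms pairwise via \cref{lem:thm2-proofuse2a}; that expansion is exactly your product $\phi(A_s,A_t)\cdot E(r_s\mid\cdot)=\phi(A_s,A_t)\cdot 0$ written out. Your Step 2 is more careful than the paper, which simply asserts that conditioning on $(I_t,A_t)$ can be dropped because $I_t$ is independent of $Y_s$; reading (WA-e) as exogeneity of the whole availability process (jointly with $I_s$) and combining it with randomization of $A_t$ is the right way to justify that step.
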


\begin{proof}[Proof of \cref{lem:waf-assump}]
\revisiondel{This proofs extends the arguments presented in Lemma B.2} \revisionadd{This proof extends the argument in Lemma B.2 of \citet{cohn2023sample}.}
    First, we define:
    \begin{align}\label{eq:lem-waf:proofuse1}
        E(Y_t \mid I_s = 1, I_t = 1, A_s, A_t) = h_{ts}(A_t, A_s),
    \end{align} 
    for some function $h_{ts}$ that depends only on random variables $A_s$ and $A_t$. 

    \revisiondel{Second, by assumption (WA-e) that $I_t$ is independent of $Y_s$ which implies that:} \revisionadd{Second, (WA-e) implies that $I_t$ is independent of $Y_s$, so}
    \begin{align}\label{eq:lem-waf:proofuse2}
        E(Y_s \mid I_s = 1, I_t = 1, A_t, A_s) &= E(Y_s \mid I_s = 1, A_s) = e^{g_s^T\alpha^0 + \sum_{k = 1}^K \one_{(A_s = k )} f_s^T\beta_k},
    \end{align}
    \revisiondel{\(E\{E(Y_tY_s\mid I_t=1,I_s=1,A_t,A_s,Y_s)\mid I_t=1,I_s=1,A_t,A_s,Y_s\}\)}

    \revisionadd{Using iterated expectation, we obtain}
    \begin{align}
        &E(Y_s Y_t \mid I_s =1, I_t = 1, A_s, A_t) \nonumber\\
        &= E\left\{E(Y_t Y_s \mid I_t = 1, I_s = 1, A_t, A_s, Y_s) \mid I_t = 1, I_s = 1, A_t, A_s\right\} \nonumber\\
        &= E\left\{Y_s E(Y_t\mid I_t = 1, I_s = 1, A_t, A_s, Y_s) \mid I_t = 1, I_s = 1, A_t, A_s)\right\} \nonumber \\
        &= E\left\{Y_s E(Y_t\mid I_t = 1, I_s = 1, A_t, A_s) \mid I_t = 1, I_s = 1, A_t, A_s)\right\} \label{eq:lem-waf:proofuse3}\\
        &= h_{ts}(A_t, A_s) E(Y_s \mid I_s = 1, I_t = 1, A_t ,A_s)\nonumber \\
        &= h_{ts}(A_t, A_s) e^{g_s^T\alpha^0 + \sum_{k = 1}^K \one_{(A_s = k )} f_s^T\beta_k} \label{eq:lem-waf:proofuse4}.
    \end{align}
Equation \eqref{eq:lem-waf:proofuse3} follows from the no-serial-correlation assumption (WA-d), and \eqref{eq:lem-waf:proofuse4} follows from \eqref{eq:lem-waf:proofuse1} and \eqref{eq:lem-waf:proofuse2}. Recall from \Cref{subsec:lemma} that
\par\noindent\revisiondel{\(u_t=\sum_{k=1}^Kp_t(k)f_t^T\beta_k\).}
\par\noindent\revisiondel{The previous second identity omitted the treatment-level sum in its exponent.}
\begin{revisionaddblock}
\begin{align*}
u_t&=\exp\left\{\sum_{k=1}^Kp_t(k)f_t^T\beta_k\right\},\\
U_t&=\exp\left\{-\sum_{k=1}^KC_k(A_t)f_t^T\beta_k\right\}
=u_t\exp\left\{-\sum_{k=1}^K\one_{(A_t=k)}f_t^T\beta_k\right\}.
\end{align*}
\end{revisionaddblock}
\revisiondel{and Using a direct algebra calculation on assumption (WA-f);} \revisionadd{Direct calculation then gives}
\begin{align}
    &E\left\{ r_t(\alpha', \beta^0) r_s(\alpha', \beta^0) \mid I_s = 1, I_t = 1, A_s, A_t\right\} \nonumber\\
    &= E\left[\left\{U_t Y_t - e^{g_t^T\alpha'}\right\} \left\{ U_s Y_s - e^{g_s^T\alpha'}\right\} \mid I_t = 1, I_s = 1, A_s,A_t\right] \nonumber \\
    &=U_t U_s E(Y_t Y_s \mid I_t = 1, I_s = 1, A_s,A_t) - U_t e^{g_s^T\alpha'} E( Y_t \mid I_t = 1, I_s = 1, A_s,A_t) \nonumber \\
    &- U_s e^{g_t^T\alpha'} E(Y_s \mid I_t = 1, I_s = 1, A_s,A_t) + e^{g_t^T\alpha'} e^{g_s^T\alpha'} \nonumber\\
    &=U_t U_s h_{ts}(A_t, A_s) e^{g_s^T\alpha^0 + \sum_{k = 1}^K \one_{(A_s = k )} f_s^T\beta_k} - U_t e^{g_s^T\alpha'} h_{ts}(A_t, A_s) \nonumber \\
    &- U_s e^{g_t^T\alpha'}  e^{g_s^T\alpha^0 + \sum_{k = 1}^K \one_{(A_s = k )} f_s^T\beta_k} + e^{g_t^T\alpha'} e^{g_s^T\alpha'} \label{eq:lem-waf-proofuse5} \\
    &=U_t u_s h_{ts}(A_t, A_s) e^{g_s^T\alpha^0} - U_t u_s e^{g_s^T\alpha^0} h_{ts}(A_t, A_s)- u_s u_t e^{g_t^T\alpha^0}  e^{g_s^T\alpha^0} + u_t u_s e^{g_t^T\alpha^0} e^{g_s^T\alpha^0} \label{eq:lem-waf-proofuse6} \\
    &=0. \nonumber
\end{align}

Equation \eqref{eq:lem-waf-proofuse5} follows from \eqref{eq:lem-waf:proofuse1}, \eqref{eq:lem-waf:proofuse2}, and \eqref{eq:lem-waf:proofuse4}. Equation \eqref{eq:lem-waf-proofuse6} uses
\par\noindent\revisiondel{The previous identity used $f_t$ in the exponent.}
\begin{revisionaddblock}
\begin{align*}
U_s\exp\left\{\sum_{k=1}^K\one_{(A_s=k)}f_s^T\beta_k\right\}=u_s.
\end{align*}
\end{revisionaddblock}
This completes the proof.
\end{proof}

\begin{lem}
    \label{lem:thm2-proofuse3}
    Under (WA-a), (WA-b), and (WA-f), we have for any $1 \leq t \neq s \leq T$
    \revisiondel{\(\PP I_t r_t(\alpha^0,\beta^0)r_s(\alpha^0,\beta^0)D_tD_s^T=0\).}
    \begin{revisionaddblock}
    \begin{align}
        \PP I_t I_s r_t(\alpha',\beta^0) r_s(\alpha',\beta^0) D_t D_s^T = 0.
    \end{align}
    \end{revisionaddblock}
\end{lem}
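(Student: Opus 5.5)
The plan is to condition on the availability indicators and the two treatment assignments, and then use the fact that $D_t D_s^T$ is a deterministic function of $(A_t,A_s)$ once $t$ and $s$ are fixed. In the sample-size setting, $f_t$ depends only on $t$ and $p_t(k)$ is constant given $I_t=1$, so $D_t$ is a deterministic function of $A_t$. Writing $r_t := r_t(\alpha',\beta^0)$, the law of iterated expectations gives
\begin{align*}
    \PP I_t I_s r_t r_s D_t D_s^T = P(I_t = 1, I_s = 1)\, E\Big[ E\{ r_t r_s \mid I_t = 1, I_s = 1, A_t, A_s\} D_t D_s^T \,\Big|\, I_t = 1, I_s = 1\Big].
\end{align*}
By symmetry it suffices to treat $s<t$: the case $s>t$ follows by transposing and swapping the roles of $t$ and $s$.

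By (WA-f), the inner conditional expectation $E\{r_t r_s \mid I_t=1,I_s=1,A_t,A_s\}$ equals some constant $c_{ts}$ that does not depend on $(A_t,A_s)$. The quantity therefore reduces to
\begin{align*}
    c_{ts}\, P(I_t=1,I_s=1)\, E(D_t D_s^T \mid I_t=1,I_s=1).
\end{align*}
Since $D_t D_s^T = \{C(A_t) C(A_s)^T\} \otimes (f_t f_s^T)$, it is enough to show that $E\{C_k(A_t) C_j(A_s) \mid I_t=1, I_s=1\}=0$ for all $k,j\in[K]$.

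To get this, I would condition further on $H_t$, which contains $A_s$ and $I_s$ because $s<t$. Since $P(A_t=k\mid H_t,I_t=1)=p_t(k)$, we have $E\{C_k(A_t)\mid H_t, I_t=1\}=0$. Hence $E\{C_k(A_t) C_j(A_s)\mid I_t=1,I_s=1\}=0$, and the cross term vanishes.

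There is an alternative route through the stronger conclusion of \cref{lem:waf-assump}, which says $c_{ts}=0$ whenever (WA-d) and (WA-e) hold. The argument above does not need it, since centering of $A_t$ alone kills the term.

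The step I expect to be the main obstacle is justifying that conditioning on $\{I_t=1,I_s=1\}$ is legitimate in the final centering argument. The concern is whether $E\{C_k(A_t)\mid I_t=1,I_s=1,A_s\}=0$ still holds. It does, because $I_s$ and $A_s$ are measurable with respect to $H_t$, and the randomization probability of $A_t$ given $(H_t, I_t=1)$ is $p_t(k)$, which does not depend on the history. I would state this step explicitly rather than invoke (WA-e). A second point to check is that (WA-f) is stated with $\alpha'$ rather than $\alpha^0$, and that $r_t$ here is evaluated at $(\alpha',\beta^0)$, consistent with \cref{lem:thm2-proofuse2a}.
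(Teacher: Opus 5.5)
Your proposal is correct and follows the paper's proof: condition on $(I_t=1,I_s=1,A_t,A_s)$, use (WA-f) to pull out the inner conditional expectation as a constant, and then use the Kronecker form $D_tD_s^T=\{C(A_t)C(A_s)^T\}\otimes(f_tf_s^T)$ together with the centering $E\{C_k(A_t)\mid I_t=1,I_s=1,A_s\}=0$, which holds because the randomization probabilities depend at most on $t$. Your explicit treatment of the $s>t$ case by transposition, and your check that $(I_s,A_s)$ is $H_t$-measurable, supply details that the paper leaves implicit.
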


\begin{proof}[Proof of \cref{lem:thm2-proofuse3}]
    For any $1 \leq k, l \leq K$ (including the case of $k = l$), we have
    \par\noindent\revisiondel{The previous display evaluated the residuals at $(\alpha^0,\beta^0)$.}
    \begin{revisionaddblock}
    \begin{align}
        &\PP I_t I_s r_t(\alpha',\beta^0)r_s(\alpha',\beta^0) C_k(A_t) C_l(A_s) \nonumber\\
        &\quad = E(I_t I_s) E\{E[r_t(\alpha',\beta^0)r_s(\alpha',\beta^0) \mid I_t = 1, I_s = 1, A_t, A_s] C_k(A_t) C_l(A_s) \mid I_t = 1, I_s = 1\} \nonumber \\
        & = 0, \label{eq:lem3-thm2-proofuse1}
    \end{align}
    Here \eqref{eq:lem3-thm2-proofuse1} follows because (WA-f) makes the inner conditional expectation constant in $(A_t,A_s)$, while sequential randomization with probabilities that depend at most on $t$ gives $E\{C_k(A_t)\mid I_t=1,I_s=1,A_s\}=0$.
    \end{revisionaddblock}

    Recall we have shown that:
    \revisiondel{\(D_tD_s^T=[C_k(A_t)C_l(A_s)]_{k,l=1}^K\otimes(f_tf_t^T)\).}
    \begin{revisionaddblock}
    \begin{align}
        D_t D_s^T = \begin{bmatrix} C_1(A_t) C_1(A_s) & \ldots & C_1(A_t) C_K(A_s) \\ \vdots & & \vdots \\ C_K(A_t)C_1(A_s) & \ldots & C_K(A_t) C_K(A_s) \end{bmatrix} \otimes (f_t f_s^T). \label{eq:lem3-thm2-proofuse3}
    \end{align}
    \end{revisionaddblock}
    Equations \eqref{eq:lem3-thm2-proofuse1} and \eqref{eq:lem3-thm2-proofuse3} imply the lemma. This completes the proof.
\end{proof}

\subsection{Proof of \texorpdfstring{\cref{thm:sample-size-formula}}{Theorem 2} Under Weaker Working Assumptions}

\cref{thm:sample-size-formula}(i) is already established in \cref{subsec:hypothesis-test-statistic-rejection-region}. We now prove \cref{thm:sample-size-formula}(ii) under working assumptions (WA-a), (WA-b), \revisiondel{(WA-d)}, \revisionadd{(WA-c)}, and (WA-f). Given the derivation in \cref{subsec:sample-size-formula} up to \cref{thm:sample-size-formula}, it suffices to show that under these working assumptions the asymptotic variance of $\hat\beta$, $M^{-1}\Sigma M^{-1,T}$, simplifies to the explicit form \eqref{eq:thm2-final-variance} derived below, in which $M = \sum_t E(I_t) u_t e^{g_t^T\alpha^0} P_t \otimes (f_t f_t^T)$ and $\Sigma = \sum_t E(I_t) u_t^2 e^{g_t^T\alpha^0} Q_t \otimes (f_t f_t^T)$; these are the matrices computed in the ``Compute $M$ and $\Sigma$ matrix'' step of \cref{alg:ss-calculator}.

\cref{sec:sample-size-formula} considers the setting where the randomization probability depends at most on $t$ and thus can be written as $p_t(k)$, and thus we can set $\tp_t(k) = p_t(k)$. Therefore, the estimating equation for $(\alpha,\beta)$ defined in \eqref{eq:ee} becomes
\begin{align}
    m(\alpha,\beta) &:= 
    \sum_t^T I_t 
    \times\bigg[\exp\left\{-\sum_{k = 1}^K C_k(A_t) f_t^T\beta_k \right\} Y_{t} - \exp \left\{g_t^T \alpha\right\} \bigg] 
    \begin{bmatrix}
        g_t \\
        C_1(A_t) f_t\\
        \vdots\\
        C_K(A_t) f_t
    \end{bmatrix}
    = \sum_{t=1}^T I_t r_t(\alpha,\beta) \tD_t. \nonumber
\end{align}

Now we calculate the asymptotic variance for $\hat\beta$ under the working assumptions. By \cref{thm:CAN}, the asymptotic variance for $(\hat\alpha,\hat\beta)$ has the form
\begin{align}
    \{\PP \dot{m}(\alpha', \beta^0)\}^{-1} ~ \{\PP m(\alpha', \beta^0) m(\alpha', \beta^0)^T\} ~ \{\PP \dot{m}(\alpha', \beta^0)\}^{-1,T}. \label{eq:thm2-proofuse1}
\end{align}

We compute each term in \eqref{eq:thm2-proofuse1}.

\textbf{Step 1: Compute $\{\PP \dot{m}(\alpha', \beta^0)\}^{-1}$.} We have $\dot{m}(\alpha,\beta) = \sum_t I_t \tD_t \dot{r}_t(\alpha, \beta) = - \sum_t I_t \tD_t \tD_t^T \Lambda_t$, where $\Lambda_t = \text{diag}(e^{g_t^T\alpha} 1_q, U_t Y_t 1_{Kp})$ as in \Cref{sec:proof-thm-CAN}. Therefore,
\begin{align}
    \PP\dot{m}(\alpha, \beta) 
    & = - \PP \sum_{t=1}^T I_t \begin{bmatrix} e^{g_t^T\alpha} g_t g_t^T & U_t Y_t g_t D_t^T \\ e^{g_t^T\alpha}D_t g_t^T & U_t Y_t D_t D_t^T \end{bmatrix} \nonumber \\
    & = -  \sum_{t=1}^T \begin{bmatrix} E(I_t)e^{g_t^T\alpha} g_tg_t^T & g_t E(I_t U_t Y_t D_t^T) \\ E(I_t e^{g_t^T\alpha}D_t) g_t^T & E(I_t U_t Y_t D_t D_t^T) \end{bmatrix}.\label{eq:thm2-proofuse2}
\end{align}

\cref{lem:thm2-proofuse1} implies that $g_t E(I_t U_t Y_t D_t^T) = E(I_t e^{g_t^T\alpha} D_t) g_t^T = 0$ and $E(I_t U_t Y_t D_t D_t^T) = E(I_t) u_t e^{g_t^T\alpha^0} P_t \otimes f_t f_t^T$, where $P_t \in \RR^{K \times K}$ whose $(k_1,k_2)$-th entry equals $p_t(k_1)\{1-p_t(k_1)\}$ when $k_1 = k_2$ and $-p_t(k_1)p_t(k_2)$ when $k_1 \neq k_2$. Therefore, for  $(\alpha',\beta^0)$ we have
\begin{align}
    \{\PP \dot{m}(\alpha', \beta^0)\}^{-1} = - \begin{bmatrix} \{\sum_t E(I_t) e^{g_t^T\alpha'} g_t g_t^T\}^{-1} & 0 \\ 0 & \{\sum_t E(I_t) u_t e^{g_t^T\alpha^0} P_t \otimes (f_t f_t^T)\}^{-1} \end{bmatrix}. \label{eq:thm2-proofuse3}
\end{align}

\textbf{\revisiondel{Step 2: Compute $\PP m(\alpha^0,\beta^0)m(\alpha^0,\beta^0)^T$.}}
\revisionadd{\textbf{Step 2: Compute $\PP m(\alpha',\beta^0)m(\alpha',\beta^0)^T$.}}
\begin{revisionaddblock}
Because the goal is to derive the asymptotic variance for $\hat\beta$ (the lower $Kp \times Kp$ principal submatrix of \eqref{eq:thm2-proofuse1}) and because the off-diagonal blocks of $\{\PP \dot{m}(\alpha',\beta^0)\}^{-1}$ are 0 by \eqref{eq:thm2-proofuse3}, it suffices to compute the lower $Kp \times Kp$ principal submatrix of $\PP m(\alpha',\beta^0)m(\alpha',\beta^0)^T$. We have
\begin{align}
    &\PP m(\alpha',\beta^0)m(\alpha',\beta^0)^T \nonumber\\
    &\quad = \PP \Big\{\sum_t I_t r_t(\alpha',\beta^0)\tD_t\Big\}
    \Big\{\sum_s I_s r_s(\alpha',\beta^0)\tD_s\Big\}^T \nonumber \\
    &\quad = \PP \sum_{t,s} I_t I_s r_t(\alpha',\beta^0)r_s(\alpha',\beta^0)
    \begin{bmatrix} g_t g_s^T & g_t D_s^T \\ D_t g_s^T & D_t D_s^T \end{bmatrix}. \nonumber
\end{align}
Therefore, the lower $Kp \times Kp$ principal submatrix of $\PP m(\alpha',\beta^0)m(\alpha',\beta^0)^T$ is
\begin{align}
    &\PP \sum_{t,s} I_t I_s r_t(\alpha',\beta^0)r_s(\alpha',\beta^0)D_tD_s^T \nonumber\\
    &\quad = \sum_t \PP I_t r_t^2(\alpha',\beta^0)D_tD_t^T
    + \sum_{t\neq s}\PP I_tI_s r_t(\alpha',\beta^0)r_s(\alpha',\beta^0)D_tD_s^T \nonumber\\
    & = \sum_{t=1}^T E(I_t) u_t^2 e^{g_t^T\alpha^0} Q_t \otimes (f_t f_t^T) \label{eq:thm2-proofuse4}
\end{align}
where \eqref{eq:thm2-proofuse4} follows from \cref{lem:thm2-proofuse2} and \cref{lem:thm2-proofuse3}. 
\end{revisionaddblock}

\textbf{Step 3: Computing the asymptotic variance of $\hat\beta$.} Plugging \eqref{eq:thm2-proofuse3} and \eqref{eq:thm2-proofuse4} into \eqref{eq:thm2-proofuse1} implies that the asymptotic variance of $\hat\beta$ is
\begin{align}
    \left\{\sum_t E(I_t) u_t e^{g_t^T\alpha^0} P_t \otimes (f_t f_t^T)\right\}^{-1} \left\{ \sum_{t=1}^T E(I_t) u_t^2 e^{g_t^T\alpha^0} Q_t \otimes (f_t f_t^T) \right\}    \left\{\sum_t E(I_t) u_t e^{g_t^T\alpha^0} P_t \otimes (f_t f_t^T)\right\}^{-1}. \label{eq:thm2-final-variance}
\end{align}

This completes the proof.

\section{Simulation Results on Consistency and Asymptotic Normality}
\label{sec:simulation-estimator}

We evaluate the consistency and asymptotic normality of the estimator $\hat\beta$ for the CEE model \eqref{eq:CEE-parametric-model}, proposed in \Cref{sec:estimator}. We focus on the setting where the length of the excursion $\Delta = 1$ and participants are always available ($I_t \equiv 1$). In the generative model, we set the total number of decision points for each individual $T = 30$. A time-varying covariate $Z_t$, which is independent of all variables and takes three values 0, 1, and 2 with equal probability, is generated.
The number of active treatment levels is $K = 2$, and the categorical treatment $A_t \in \{0,1,2\}$. The binary outcome $Y_t = Y_{t,\Delta = 1}$ is generated as $Y_t \sim \text{Bern}(E(Y_t \mid H_t, A_t))$, where
\begin{align}
E(Y_t \mid H_t, A_t) &= \{0.2 \one(Z_t = 0) + 0.5 \one(Z_t = 1) + 0.4 \one(Z_t = 2)\} \nonumber \\
&\quad \times \exp\{\one(A_t = 1)(0.1 + 0.3 Z_t) + \one(A_t = 2)(0.25 + 0.3 Z_t)\}.
\end{align}
We consider two randomization scenarios. In Scenario 1, the treatment probabilities are equal: $P(A_t = 0) = P(A_t = 1) = P(A_t = 2) = 1/3$. In Scenario 2, the treatment assignment probabilities depend on $Z_t$:
\begin{align*}
p_t(A_t \mid Z_t = 0) = (0.5,\, 0.3,\, 0.2), \quad p_t(A_t \mid Z_t = 1) = (0.2,\, 0.5,\, 0.3), \quad p_t(A_t \mid Z_t = 2) = (0.3,\, 0.2,\, 0.5),
\end{align*}
for $(A_t = 0, 1, 2)$. In both scenarios, the numerator probabilities $\tp_t(k \mid S_t)$ are set to the constant $1/3$, so $J_t \equiv 1$ in Scenario 1 and $J_t \neq 1$ in Scenario 2.

We consider two sets of estimands in the simulation. When setting $S_t = \emptyset$ in \eqref{eq:cee-def}, the fully marginal CEEs are
\begin{align*}
\cee_{t1}(\emptyset) = \log \frac{0.2 e^{0.1} + 0.5 e^{0.4} + 0.4 e^{0.7}}{1.1}, \qquad \cee_{t2}(\emptyset) = \log \frac{0.2 e^{0.25} + 0.5 e^{0.55} + 0.4 e^{0.85}}{1.1}.
\end{align*}
When setting $S_t = Z_t$, the moderated CEEs are $\cee_{t1}(Z_t) = 0.1 + 0.3 Z_t$ and $\cee_{t2}(Z_t) = 0.25 + 0.3 Z_t$. We use the estimating function \eqref{eq:ee} to estimate both CEEs by setting different $S_t$'s. The working model for $E(Y_{t,\Delta} \mid H_t, I_t = 1, A_t = 0)$ is misspecified as $\exp\{g_t(H_t)^\top\alpha\} = \exp\{\alpha_0 + \alpha_1 Z_t\}$.

We compare the proposed estimator, labeled EMEE-catA in the tables below, with two generalized estimating equations (GEE) fits of a log-linear outcome model. For the marginal CEEs the GEE mean model is
\begin{align*}
E(Y_t \mid Z_t, A_t) = \exp\{\gamma_0 + \gamma_1 Z_t + \gamma_2 \one(A_t = 1) + \gamma_3 \one(A_t = 2)\},
\end{align*}
and for the moderated CEEs it further includes the interactions $\one(A_t = 1) Z_t$ and $\one(A_t = 2) Z_t$. We fit this model by GEE with a log link and the Poisson variance function, with robust standard errors \citep{zeger1986longitudinal}, under a working independence correlation structure (GEE-ind) and an exchangeable working correlation structure (GEE-exch). The GEE coefficients on the treatment indicators, and on their interactions with $Z_t$, are compared with the corresponding components of the CEE.

The two approaches coincide in one special case. When $f_t = 1$ and $g_t = 1$, the randomization probability depends on neither $H_t$ nor $t$, and $\tp_t$ is set equal to that probability, both estimating equations are saturated in the treatment and are solved by the level-wise sample averages, so both give $\hat\beta_k = \log(\bar Y_k / \bar Y_0)$, where $\bar Y_k$ is the sample average of $Y_t$ over the person-decision points at which level $k$ was assigned. The same finite-sample identity holds within levels of a categorical moderator when $f_t(S_t)$ consists of indicators for the levels of $S_t$ and $g_t$ contains only those indicators. If $g_t$ also contains baseline or time-varying covariates, the two estimators generally solve different empirical equations and need not agree exactly. Under the same randomization conditions, treatment is independent of those covariates, so the two estimators nevertheless target the same population CEE parameters when the treatment-by-moderator portion of the GEE mean model is saturated. This is the situation in analysis (a) of \Cref{subsec:application-estimator}, which explains why the proposed estimator and the GEE have similar performance.

The three settings reported below start near that special case and change one feature at a time. In \Cref{tbl:est_marginal_scenario1} the estimand is the marginal CEE and the randomization does not depend on $H_t$, but $g_t$ is misspecified and $Z_t$ enters the GEE mean model. \Cref{tbl:est_moderator} changes the estimand to the moderated CEE, holding the randomization uniform; there $Z_t$ enters the GEE mean model linearly, both on its own and in the treatment interactions, while the true $E(Y_t \mid Z_t, I_t = 1, A_t = 0)$ is not log-linear in $Z_t$. \Cref{tbl:est_marginal} instead keeps the marginal estimand and lets the randomization depend on $Z_t$.

The bias, Monte Carlo standard deviation of the estimates, average estimated standard error, and 95\% confidence interval coverage probability are based on 1000 replicates.

In all three settings the proposed estimator has negligible bias and coverage close to the nominal level at every sample size, which is what \Cref{thm:CAN} leads one to expect: its consistency requires the CEE model to be correctly specified and places no restriction on the working model $g_t$. The GEE fits behave differently across the three. In \Cref{tbl:est_marginal_scenario1} they track the proposed estimator closely, with comparable bias, standard deviation, and coverage, even though their mean model is misspecified in $Z_t$. In \Cref{tbl:est_moderator} they carry a bias of about $+0.07$ on the intercept and $-0.06$ on the $Z_t$ coefficient at every sample size, and their coverage falls at $n = 100$ to about 0.88 and to between 0.82 and 0.84. In \Cref{tbl:est_marginal} their bias is about $+0.14$ for treatment level 1 and $+0.07$ for level 2, and coverage falls at $n = 100$ to 0.22 and 0.72. In the latter two settings the GEE bias is roughly constant in $n$ while the standard deviation shrinks, which is what drives the coverage down. The two GEE fits differ from each other only in the third or fourth decimal place throughout, because the generative model draws $Z_t$ independently across decision points and includes no random effect, leaving no within-person dependence for the exchangeable working correlation to exploit.

\begin{table}[htbp]
\caption{\label{tbl:est_marginal_scenario1} Simulation results for estimating the marginal CEEs ($S_t = \emptyset$) under Scenario 1, in which the randomization probabilities are equal and do not depend on $Z_t$, with 2 active treatment levels and 1000 replicates. EMEE-catA, the proposed estimator; GEE-ind and GEE-exch, log-link GEE with a working independence and an exchangeable working correlation structure, respectively; SD, Monte Carlo standard deviation of the estimates across replicates; SE, average of the estimated standard errors across replicates; CP, coverage probability of the 95\% confidence interval.}
\resizebox{0.8\textwidth}{!}{
\begin{tabular}{ccccccc}
\toprule

Trt & Method & Sample size & Bias & SD & SE & CP\\
\midrule
1 & EMEE-catA
 & 20  & 0.00615 & 0.10713 & 0.10937 & 0.971 \\
 & & 30  & -0.00164 & 0.08892 & 0.08874 & 0.958 \\
 & & 40  & 0.00024 & 0.07581 & 0.07644 & 0.956 \\
 & & 50  & -0.00319 & 0.06908 & 0.06796 & 0.943 \\
 & & 100 & 0.00102 & 0.04725 & 0.04783 & 0.957 \\
\cmidrule(l){2-7}
 & GEE-ind
 & 20  & 0.00641 & 0.10739 & 0.10344 & 0.939 \\
 & & 30  & -0.00156 & 0.08904 & 0.08558 & 0.933 \\
 & & 40  & 0.00032 & 0.07596 & 0.07443 & 0.944 \\
 & & 50  & -0.00314 & 0.06914 & 0.06656 & 0.932 \\
 & & 100 & 0.00106 & 0.04738 & 0.04738 & 0.949 \\
\cmidrule(l){2-7}
 & GEE-exch
 & 20  & 0.00657 & 0.10766 & 0.10324 & 0.938 \\
 & & 30  & -0.00131 & 0.08928 & 0.08551 & 0.931 \\
 & & 40  & 0.00021 & 0.07596 & 0.07434 & 0.944 \\
 & & 50  & -0.00303 & 0.06918 & 0.06651 & 0.932 \\
 & & 100 & 0.00106 & 0.04743 & 0.04736 & 0.947 \\
\midrule
2 & EMEE-catA
 & 20  & 0.00617 & 0.10163 & 0.10312 & 0.967 \\
 & & 30  & 0.00453 & 0.08199 & 0.08339 & 0.959 \\
 & & 40  & 0.00013 & 0.07200 & 0.07228 & 0.952 \\
 & & 50  & -0.00407 & 0.06390 & 0.06440 & 0.945 \\
 & & 100 & 0.00001 & 0.04472 & 0.04520 & 0.956 \\
\cmidrule(l){2-7}
 & GEE-ind
 & 20  & 0.00640 & 0.10193 & 0.09753 & 0.934 \\
 & & 30  & 0.00460 & 0.08222 & 0.08038 & 0.935 \\
 & & 40  & 0.00018 & 0.07208 & 0.07038 & 0.943 \\
 & & 50  & -0.00397 & 0.06411 & 0.06308 & 0.932 \\
 & & 100 & 0.00008 & 0.04478 & 0.04478 & 0.952 \\
\cmidrule(l){2-7}
 & GEE-exch
 & 20  & 0.00656 & 0.10211 & 0.09739 & 0.935 \\
 & & 30  & 0.00482 & 0.08244 & 0.08032 & 0.929 \\
 & & 40  & 0.00019 & 0.07210 & 0.07028 & 0.944 \\
 & & 50  & -0.00396 & 0.06418 & 0.06304 & 0.932 \\
 & & 100 & 0.00005 & 0.04483 & 0.04476 & 0.951 \\
\bottomrule
\end{tabular}}
\end{table}

\begin{table}[htbp]
\caption{\label{tbl:est_moderator} Simulation results for estimating the moderated CEEs ($S_t = Z_t$) under Scenario 1, with 2 active treatment levels and 1000 replicates. EMEE-catA, the proposed estimator; GEE-ind and GEE-exch, log-link GEE with a working independence and an exchangeable working correlation structure, respectively; SD, Monte Carlo standard deviation of the estimates across replicates; SE, average of the estimated standard errors across replicates; CP, coverage probability of the 95\% confidence interval.}
\resizebox{\textwidth}{!}{
\begin{tabular}{ccccccccccc}
\toprule

\multicolumn{3}{c}{ } & \multicolumn{4}{c}{Intercept} & \multicolumn{4}{c}{$Z_t$} \\

Trt & Method & Sample size & Bias & SD & SE & CP & Bias & SD & SE & CP\\
\midrule
1 & EMEE-catA
 & 20  & -0.00888 & 0.25744 & 0.26217 & 0.954 & 0.01039 & 0.17397 & 0.17861 & 0.967 \\
 & & 30  & -0.00087 & 0.20599 & 0.21046 & 0.966 & 0.00315 & 0.13736 & 0.14182 & 0.964 \\
 & & 40  & -0.00010 & 0.17982 & 0.18171 & 0.957 & 0.00265 & 0.12101 & 0.12251 & 0.954 \\
 & & 50  & 0.00353 & 0.15874 & 0.16116 & 0.965 & -0.00040 & 0.10922 & 0.10892 & 0.953 \\
 & & 100 & -0.00582 & 0.11264 & 0.11283 & 0.948 & 0.00474 & 0.07623 & 0.07637 & 0.952 \\
\cmidrule(l){2-11}
 & GEE-ind
 & 20  & 0.07266 & 0.21556 & 0.20540 & 0.908 & -0.05491 & 0.13565 & 0.13083 & 0.906 \\
 & & 30  & 0.07707 & 0.17427 & 0.16992 & 0.918 & -0.05898 & 0.10832 & 0.10697 & 0.905 \\
 & & 40  & 0.07805 & 0.15224 & 0.14844 & 0.903 & -0.05961 & 0.09560 & 0.09355 & 0.884 \\
 & & 50  & 0.08015 & 0.13403 & 0.13277 & 0.907 & -0.06144 & 0.08639 & 0.08394 & 0.864 \\
 & & 100 & 0.07101 & 0.09546 & 0.09438 & 0.888 & -0.05657 & 0.06059 & 0.05982 & 0.836 \\
\cmidrule(l){2-11}
 & GEE-exch
 & 20  & 0.07253 & 0.21605 & 0.20508 & 0.908 & -0.05462 & 0.13609 & 0.13066 & 0.908 \\
 & & 30  & 0.07692 & 0.17454 & 0.16960 & 0.918 & -0.05884 & 0.10859 & 0.10679 & 0.907 \\
 & & 40  & 0.07799 & 0.15228 & 0.14827 & 0.904 & -0.05959 & 0.09560 & 0.09344 & 0.884 \\
 & & 50  & 0.08030 & 0.13406 & 0.13267 & 0.907 & -0.06151 & 0.08634 & 0.08389 & 0.867 \\
 & & 100 & 0.07109 & 0.09546 & 0.09434 & 0.885 & -0.05659 & 0.06057 & 0.05981 & 0.836 \\
\midrule
2 & EMEE-catA
 & 20  & -0.00769 & 0.24335 & 0.24828 & 0.960 & 0.00957 & 0.16425 & 0.17002 & 0.966 \\
 & & 30  & -0.00117 & 0.19951 & 0.20245 & 0.959 & 0.00505 & 0.13340 & 0.13635 & 0.962 \\
 & & 40  & 0.00329 & 0.17291 & 0.17237 & 0.944 & 0.00190 & 0.11635 & 0.11610 & 0.945 \\
 & & 50  & -0.00021 & 0.15442 & 0.15376 & 0.942 & 0.00237 & 0.10598 & 0.10422 & 0.950 \\
 & & 100 & -0.00415 & 0.10582 & 0.10755 & 0.962 & 0.00351 & 0.07150 & 0.07287 & 0.961 \\
\cmidrule(l){2-11}
 & GEE-ind
 & 20  & 0.07370 & 0.20430 & 0.19472 & 0.917 & -0.05568 & 0.12855 & 0.12449 & 0.909 \\
 & & 30  & 0.07774 & 0.16945 & 0.16335 & 0.907 & -0.05786 & 0.10544 & 0.10264 & 0.904 \\
 & & 40  & 0.08110 & 0.14599 & 0.14100 & 0.911 & -0.06015 & 0.09150 & 0.08862 & 0.891 \\
 & & 50  & 0.07682 & 0.13037 & 0.12676 & 0.906 & -0.05910 & 0.08372 & 0.08031 & 0.880 \\
 & & 100 & 0.07227 & 0.09011 & 0.09010 & 0.878 & -0.05747 & 0.05699 & 0.05710 & 0.820 \\
\cmidrule(l){2-11}
 & GEE-exch
 & 20  & 0.07362 & 0.20530 & 0.19425 & 0.917 & -0.05552 & 0.12908 & 0.12420 & 0.908 \\
 & & 30  & 0.07752 & 0.16966 & 0.16311 & 0.908 & -0.05772 & 0.10576 & 0.10249 & 0.903 \\
 & & 40  & 0.08116 & 0.14618 & 0.14090 & 0.912 & -0.06022 & 0.09153 & 0.08855 & 0.890 \\
 & & 50  & 0.07677 & 0.13036 & 0.12666 & 0.905 & -0.05907 & 0.08376 & 0.08025 & 0.880 \\
 & & 100 & 0.07236 & 0.09022 & 0.09007 & 0.878 & -0.05749 & 0.05707 & 0.05710 & 0.820 \\
\bottomrule
\end{tabular}}
\end{table}

\begin{table}[htbp]
\caption{\label{tbl:est_marginal} Simulation results for estimating the marginal CEEs ($S_t = \emptyset$) under Scenario 2, in which the randomization probabilities depend on $Z_t$, with 2 active treatment levels and 1000 replicates. EMEE-catA, the proposed estimator; GEE-ind and GEE-exch, log-link GEE with a working independence and an exchangeable working correlation structure, respectively; SD, Monte Carlo standard deviation of the estimates across replicates; SE, average of the estimated standard errors across replicates; CP, coverage probability of the 95\% confidence interval.}
\resizebox{0.8\textwidth}{!}{
\begin{tabular}{ccccccc}
\toprule

Trt & Method & Sample size & Bias & SD & SE & CP\\
\midrule
1 & EMEE-catA
 & 20  & 0.00496 & 0.11906 & 0.12136 & 0.959 \\
 & & 30  & 0.00433 & 0.09847 & 0.09704 & 0.950 \\
 & & 40  & 0.00250 & 0.07955 & 0.08320 & 0.965 \\
 & & 50  & 0.00328 & 0.07375 & 0.07443 & 0.949 \\
 & & 100 & 0.00277 & 0.05208 & 0.05250 & 0.952 \\
\cmidrule(l){2-7}
 & GEE-ind
 & 20  & 0.14224 & 0.11703 & 0.11153 & 0.751 \\
 & & 30  & 0.13911 & 0.09501 & 0.09188 & 0.687 \\
 & & 40  & 0.13904 & 0.07782 & 0.07964 & 0.599 \\
 & & 50  & 0.13859 & 0.07248 & 0.07146 & 0.511 \\
 & & 100 & 0.13867 & 0.05230 & 0.05100 & 0.224 \\
\cmidrule(l){2-7}
 & GEE-exch
 & 20  & 0.14251 & 0.11724 & 0.11138 & 0.751 \\
 & & 30  & 0.13926 & 0.09522 & 0.09183 & 0.689 \\
 & & 40  & 0.13890 & 0.07783 & 0.07956 & 0.603 \\
 & & 50  & 0.13853 & 0.07243 & 0.07141 & 0.510 \\
 & & 100 & 0.13867 & 0.05235 & 0.05099 & 0.223 \\
\midrule
2 & EMEE-catA
 & 20  & 0.00219 & 0.11357 & 0.11712 & 0.959 \\
 & & 30  & 0.00483 & 0.09585 & 0.09305 & 0.951 \\
 & & 40  & 0.00261 & 0.07729 & 0.08045 & 0.960 \\
 & & 50  & 0.00287 & 0.07164 & 0.07190 & 0.956 \\
 & & 100 & 0.00228 & 0.05100 & 0.05051 & 0.944 \\
\cmidrule(l){2-7}
 & GEE-ind
 & 20  & 0.07037 & 0.10868 & 0.10551 & 0.912 \\
 & & 30  & 0.06987 & 0.09038 & 0.08611 & 0.868 \\
 & & 40  & 0.06892 & 0.07396 & 0.07506 & 0.869 \\
 & & 50  & 0.06922 & 0.06852 & 0.06755 & 0.841 \\
 & & 100 & 0.06794 & 0.04994 & 0.04799 & 0.723 \\
\cmidrule(l){2-7}
 & GEE-exch
 & 20  & 0.07047 & 0.10873 & 0.10534 & 0.910 \\
 & & 30  & 0.07008 & 0.09051 & 0.08608 & 0.870 \\
 & & 40  & 0.06879 & 0.07397 & 0.07499 & 0.868 \\
 & & 50  & 0.06921 & 0.06844 & 0.06752 & 0.839 \\
 & & 100 & 0.06795 & 0.04994 & 0.04798 & 0.723 \\
\bottomrule
\end{tabular}}
\end{table}

\section{Simulation Results on Performance of Sample Size Formula}
\label{sec:simulation-sample-size-formula}

\subsection{Generative Models}

We assess the performance of the sample size calculator and the testing procedure proposed in \Cref{sec:sample-size-formula} in terms of the type I error rate and power, when all working assumptions hold and when some working assumptions are violated. For all simulations, the desired type I error rate is 0.05 and the desired power is 0.8. In the generative models, the number of active treatment levels is $K = 2$ so $A_t \in \{0,1,2\}$, and we test for $\cH_0: \mee_1(t) = \mee_2(t)$ for all $t \in [T]$ vs. $\cH_1: \mee_1(t) \neq \mee_2(t)$ for some $t \in [T]$. Throughout this section we write $\rate$ for $\rate_{12} = \ate_2 / \ate_1$, defined in \eqref{eq:ratio-ate}.


We consider a simple generative model (GM-0) and two generative models with more complicated features: a generative model where the outcomes are serially correlated (GM-SC), and a generative model where the availability process depends on previous outcomes and previous treatments, i.e., is endogenous (GM-EA). We denote parameters associated with the true generative model by superscript ``*'', and the input of the sample size formula, i.e., the working model, by superscript ``w''.

GM-0 is characterized by a set of parameters : $\{p^*_{kt}: k=0,1,2\}, \alpha^*, g_t^*, \{\beta_k^*:k=1,2\}, f_t^*$, and $\tau^*(t)$ for $t \in [T]$. $p^*_{kt}$ is the randomization probability for treatment level $k$: $p^*_{kt} = P(A_t = k \mid I_t =1)$, and recall that $P(A_t = 0 \mid I_t =0) = 1$. $\alpha^*$ and $g_t^*$ determine the mean proximal outcome under no treatment. $\beta_k^*$ and $f_t^*$ determine the marginal excursion effect of treatment $k$. $\tau(t)^*$ is the time-varying availability pattern. For each individual, their observations are generated sequentially over $t \in [T]$:  $I_t \sim \text{Bernoulli}(\tau^*(t))$. $A_t = 0$ if $I_t = 0$. $A_t \mid I_t = 1$ is generated from a categorical distribution with support $\{0, 1, 2\}$ and probabilities $\{p^*_{kt}: k=0,1,2\}$. The success probability of $Y_t$ is generated by $\exp\{g_t^{*T} \alpha^* + \one(A_t = 1) f_t^{*T} \beta_1^* + \one(A_t = 2) f_t^{*T} \beta_2^*\}$. Details about the specifications of the parameters are given in \Cref{box:detail-gm-0-part1} and \Cref{box:detail-gm-0-part2}.

For GM-SC, $I_t$ and $A_t$ are generated the same way as in GM-0, and $Y_t$ is generated differently. In GM-SC, the success probability of $Y_t$ is generated by $\exp\{g_t^{*T} \alpha^* + \one(A_t = 1) f_t^{*T} \beta_1^* + \one(A_t = 2) f_t^{*T} \beta_2^* + \nu_1 Y_{t-1}\}$. We only consider the case where the causal effects $\mee_k(t)$ are constant and each individual is always available ($\tau^*(t) = 1$ for all $t$). Details about the specifications of $\spnc^*(t)$ are given in \Cref{box:detail-gm-SC}. 

For GM-EA, $I_t$ is generated from a Bernoulli distribution with success probability
\begin{align*}
    0.5 + \nu_2 \one_{(A_{t-1} = 1)} + \nu_3 \one_{(A_{t-1} = 2)} + \nu_4 Y_{t-1}.
\end{align*}
The $\nu_2$-term and $\nu_3$-term make $I_t$ depend on the previous treatment, while the $\nu_4$-term makes $I_t$ \revisiondel{depends on previous outcome}\revisionadd{depend on the previous outcome}, and $\nu_2$, $\nu_3$, and $\nu_4$ take values between $-0.2$ and $0.2$. $A_t$ and $Y_t$ are generated in the same way as in GM-0.

\begin{guidelinebox}[htbp]
    \caption{Details about GM-0: $\tau(t)$ and $\spnc(t)$.}
    \label{box:detail-gm-0-part1}
    \begin{mdframed}[linewidth=1pt, roundcorner=10pt, backgroundcolor=gray!10]
    \spacingset{1}

    \noindent \textbf{Parameterization of $\tau(t)$:}

    Given AA, we consider two \revisiondel{parameterization}\revisionadd{parameterizations} of $\tau(t)$: constant and linear.
    \begin{itemize}
        \item Constant $\tau(t)$: We set $\tau(t) = AA$ for all $t$.
        \item Linear $\tau(t)$ parameterized by $\theta_{\tau}$: We set $\tau^*(t)$ to vary linearly in $t$, with $\tau(1) = AA +\theta_\tau$ and $\tau(T) = AA - \theta_\tau$. In other words,
        \revisiondel{\(\tau(t)=-\frac{2}{T-1}\theta_\tau t+AA-\frac{T+1}{T-1}\theta_\tau\)}
        \revisionadd{\(\tau(t)=-\frac{2}{T-1}\theta_\tau t+AA+\frac{T+1}{T-1}\theta_\tau\).}
    \end{itemize}

    \noindent \textbf{Parameterization of $\spnc(t)$:}

    Given $\aspn$ and $\tau(t)$, we consider three parameterizations of $\spnc(t)$: constant, linear, and quadratic similar to parameterization in the binary treatment paper in \citep{cohn2023sample}. 
    \begin{itemize}
        \item Constant $\spnc(t)$: We set $\spnc(t) = \aspn$ for all t. 
        \item Linear $\spnc(t)$ parameterized by $\theta_g$: we set $\log\spnc(t) = \alpha_0 + \alpha_1 t$, where $(\alpha_0, \alpha_1)$ is determined as follows. A positive (negative) $\theta_g$ indicates a linearly increasing (decreasing) $\spnc(t)$. In particular, if $\theta_g = 0$ we set $\alpha_1 = 0$ and $\alpha_0 = \log \aspn$. If $\theta_g \in (-1, 0)\cup (0,1)$, we set 
        \begin{align} 
            \frac{\alpha_0 + \alpha_1}{\alpha_0 + T \alpha_1} = \frac{1 + \theta_g}{1 - \theta_g} \label{eq:theta-g}
        \end{align} 
        and thus 
        \begin{align*}
            \alpha_1 = \frac{2 \theta_g}{ 1- T - \theta_g(1 + T)} \alpha_0,
        \end{align*}
        and then we solve for $\alpha_0$ and $\alpha_1$ from equation \eqref{eq:theta-g} using the given $\aspn$ and $\tau(t)$.
        \item Log-quadratic $\spnc(t)$ parameterized by $\theta_g$: we set $\log \spnc(t) = \alpha_0 + \alpha_1 t + \alpha_2 t^2$ where $(\alpha_0, \alpha_1, \alpha_2)$ is determined as follows. We constrain $\spnc(1) = \spnc(T)$ and a positive (negative) $\theta_g$ indicates the parabola $\log \spnc(t)$ opens upward (downward). In particular for $\theta_g \in (-1, 1)$, we set: 
        \begin{align}
            \alpha_0 + \alpha_1 + \alpha_2 &= \alpha_0 + \alpha_1 T + \alpha_2 T^2 \nonumber \\
            \frac{\log \spnc\{(T+1)/2\}}{\log \spnc(1)} &= \frac{\alpha_0 + \frac{T+1}{2}\alpha_1 + \frac{(T+1)^2}{4} \alpha_2}{\alpha_0 + \alpha_1 + \alpha_2} = \frac{1 + \theta_g}{ 1 - \theta_g}
        \end{align}
    \end{itemize}
    \end{mdframed}
\end{guidelinebox}

\begin{guidelinebox}[htbp]
    \caption{Details about GM-0: $\mee_k(t)$.}
    \label{box:detail-gm-0-part2}
    \begin{mdframed}[linewidth=1pt, roundcorner=10pt, backgroundcolor=gray!10]
    \spacingset{1}
    \noindent \textbf{Parameterization of $\mee_k(t)$:}

    Given $\ate_1$, $\ate_2$, and $\tau(t)$, we consider two parameterizations of $\mee_1(t)$ and $\mee_2(t)$: \revisiondel{constants and linear,}\revisionadd{constant and linear.}
    \begin{itemize}
        \item Constant $\mee_k(t)$: \revisiondel{We set $\mee_k(t)=\ate_k$ for all $t\in[T]$.}\revisionadd{We set $\mee_k(t)=\log(\ate_k)$ for all $t\in[T]$.}
        \item Linear $\mee_k(t)$ parameterized by $\theta_{f1}$ and $\theta_{f2}$, where \revisiondel{$\theta_{f1}$ parameterize the slope of $\mee_1(t)$, while $\theta_{f2}$ parameterize the difference between the slope of two treatment level, i.e $\theta_{f2}=0$ indicates that the two treatment have a parallel slope.}\revisionadd{$\theta_{f1}$ parameterizes the slope of $\mee_1(t)$, while $\theta_{f2}$ parameterizes the difference between the slopes for the two active treatment levels. Thus, $\theta_{f2}=0$ indicates parallel slopes.} We set $\mee_1(t) = \beta_1 + \beta_2 t$, where $(\beta_1, \beta_2)$ is determined as follows: a positive (negative) $\theta_{f1}$ indicates a linearly increasing (decreasing) $\mee_1(t)$. In particular, if $\theta_{f1} = 0$ we set $\beta_2 = 0$ and \revisiondel{$\beta_1=\ate$}\revisionadd{$\beta_1=\log(\ate_1)$}. If $\theta_{f1} \in (-1, 0)\cup (0,1)$, we set 
        \begin{align} 
            \frac{\beta_1+ \beta_2}{\beta_1 + T \beta_2} = \frac{1 + \theta_{f1}}{1 - \theta_{f1}}.
        \end{align} 
        We set $\mee_2(t) = \mee_1(t) + \beta_3 + \beta_4 t$. When $\theta_{f2} = 0$, we set $\beta_4 = 0$ and \revisiondel{$\beta_3=\ate_2(t)$}\revisionadd{$\beta_3=\log(\ate_2)-\log(\ate_1)$}. If $\theta_{f2} \in (-1, 0)\cup (0,1)$, we set 
        
        \begin{align} 
            \frac{\beta_3 + \beta_4}{\beta_3 + T \beta_4} = \frac{1 + \theta_{f2}}{1 - \theta_{f2}}.
        \end{align} 
        and then we solve for $\beta_1$, $\beta_2$, $\beta_3$ and $\beta_4$ from equation \eqref{eq:ate_k} using the given $\ate_k$ and $\tau(t)$.
    \end{itemize}
    \end{mdframed}
\end{guidelinebox}

\begin{guidelinebox}[htbp]
    \caption{Details about GM-SC: Generative Model with Serial Correlation in Outcome}
    \label{box:detail-gm-SC}
    \begin{mdframed}[linewidth=1pt, roundcorner=10pt, backgroundcolor=gray!10]
    \spacingset{1}

    Here we describe the generative model with serial correlation (GM-SC), where the success probability for $Y_t$ is 
    \begin{align}
        e^{g_t^T\alpha + \sum_{k=1}^{K} \one(A_t = k) f_t^T\beta_k + \nu_1 Y_{t-1}}.
    \end{align}

     We consider the case where the individuals are always available ($\tau(t) = 1$ for all $t$) and constant causal effects $\mee_k(t)$ for simplicity purposes. Since serial correlation, indicated by parameter $\nu_1$, affects the calculation of $\spnc$, the parameterization of $\spnc(t)$ and $\aspn$ are different from GM-0.

    \noindent\textbf{Parameterization of $\spnc(t)$:}
    \begin{align*}
        \spnc(t) &:= E(Y_t | A_t = 0, I_t = 1) \\
                 &= e^{g_t^T\alpha} E(e^{\nu_1 Y_{t-1}}) \\
                 &= e^{g_t^T\alpha} \left[ 1 + (e^{\nu_1} - 1) \spnc(t-1)\left\{ p_{t-1}(0) + \sum_{k = 1}^K e^{f_{t-1}^T\beta_k} p_{t-1}(k)\right\}\right]
    \end{align*}

    with $\spnc(1) = e^{g_1^T\alpha}$, because there is no $Y_0$ term at $t = 1$. Therefore, one can compute $\spnc(t)$ for all $t$ recursively using the equation above. Since $\tau(t) = 1$ for all $t$ in GM-SC, the average availability is $\aaa = 1$.
    \end{mdframed}
\end{guidelinebox}

\begin{guidelinebox}[htbp]
    \caption{Details about GM-EA: parameterization of $\tau(t)$}
    \label{box:detail-gm-ea}
    \begin{mdframed}[linewidth=1pt, roundcorner=10pt, backgroundcolor=gray!10]
    \spacingset{1}

    \noindent\textbf{Parameterization of $\tau(t)$:}
    \begin{align*}
        E(I_t) &= 0.5 + \nu_2 E(\one_{(A_{t-1} = 1)}) + \nu_3 E(\one_{(A_{t-1} = 2)}) + \nu_4 E(Y_{t-1}) \\
        &\approx 0.5 + \nu_2 E(\one_{(A_{t-1} = 1)} \mid I_{t-1} = 1) E(I_{t-1}) +
        \nu_3 E(\one_{(A_{t-1} = 2)} \mid I_{t-1} = 1) E(I_{t-1}) \\
        &\quad + \nu_4 E(Y_{t-1}|I_{t-1}= 1) E(I_{t-1}) \\
        &= 0.5 + \nu_2 p_{t-1}(1) E(I_{t-1}) + \nu_3 p_{t-1}(2) E(I_{t-1}) \\
        &\quad + \nu_4 \left\{e^{g_{t-1}^T\alpha} \left(p_{t-1}(0) + \sum_{k=1}^{K} e^{f_{t-1}^T\beta_k} p_{t-1}(k)\right)\right\} E(I_{t-1}),
    \end{align*}

    and $E(I_1) = 0.5$. Therefore, one can compute $\tau(t) = E(I_t)$ for all $t$ using the equation above and thus compute $\aaa$ using \eqref{eq:AA}.

    \noindent The two terms involving $\nu_2$ and $\nu_3$ are exact, because $A_{t-1} = 0$ whenever $I_{t-1} = 0$. The term involving $\nu_4$ is an approximation: in GM-EA the proximal outcome is generated at every decision point, including the unavailable ones (at which $A_{t-1} = 0$), so the exact recursion carries the additional term $\nu_4 e^{g_{t-1}^T\alpha}\{1 - E(I_{t-1})\}$. We use the approximate recursion above to set the working availability $\tau^\w(t)$ in \Cref{subsec:simulation-violate-f}. Across the settings reported there it misstates $\aaa$ by at most $0.02$ (about $4\%$ in relative terms), so the settings described there as having a correctly specified $\aaa^\w$ are correct only up to this discrepancy. The conclusions drawn in \Cref{subsec:simulation-violate-f} concern how the power responds to $\nu_2$, $\nu_3$ and $\nu_4$, and are not affected by a shift of this size.
    \end{mdframed}
\end{guidelinebox}

\spacingset{1.9}

\subsection[Type I error rate when working assumptions do or do not hold]{Type I error rate when working assumptions do or do not hold}
\label{subsec:simulation-result-type-i-error}

To assess the type I error rate numerically, we considered three sets of generative models: (a) all working assumptions hold; (b) (WA-a) is violated; and (c) one or more of (WA-b)-(WA-f) is violated but (WA-a) hold. Note that \Cref{thm:sample-size-formula}(i) guarantees asymptotic type I error rate control for (a) and (c) but not (b). Simulation results show that the type I error rate is controlled at the 0.05 level in all three scenarios, with the empirical rate in each individual setting consistent with the nominal level up to Monte Carlo error (\Cref{fig:typeierror_hist}), including when (WA-a) is violated (middle panels of \Cref{fig:typeierror_hist}). We note that the small sample correction helped to improve the type I error rate control (the empirical type I error rate without small sample correction is not shown in the figure).

Specifically, for (a) we considered 1,617 settings (each corresponding to a specific parameterization of the data generating distribution) when all working assumptions hold using GM-0; \Cref{box:sim-all-wa-hold-1000-settings} lists the factors that generate these settings. For (b) we misspecified separately the magnitude of $\mee(t)$ or the pattern of $\mee(t)$ using GM-0; \Cref{subsec:simulation-violate-a} on power contains the specific form of misspecification. For (c) we used various generative models from GM-0, GM-SC, and GM-EA.

\begin{figure}[htbp]
\centering
\includegraphics[width=1\textwidth]{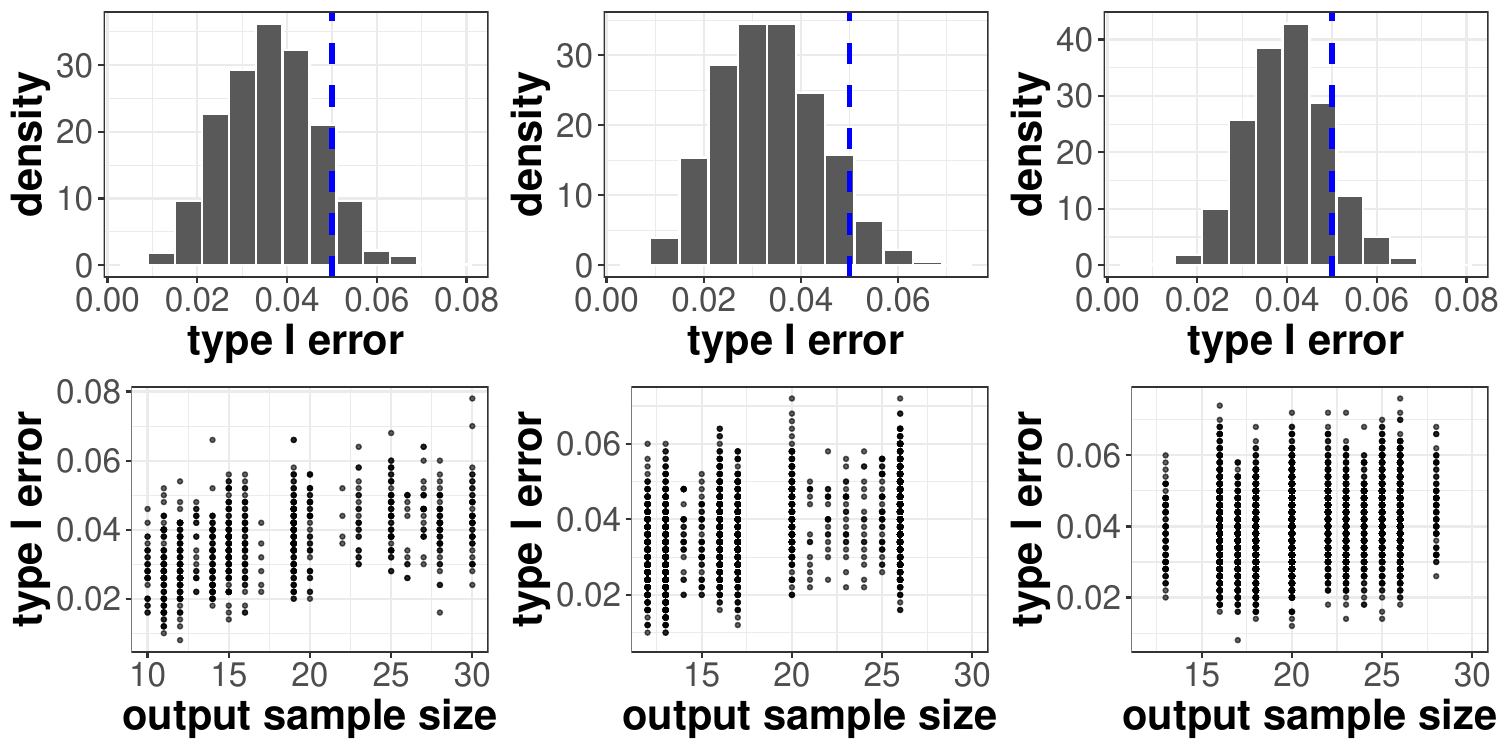}
  \caption{\textbf{Left plots:} type I error rate when all working assumptions hold. \textbf{Middle plots:} type I error rate when (WA-a) is violated. \textbf{Right plots:} type I error rate when other WAs are violated but (WA-a) holds. Each dot in the bottom plots represent the type I error rate under a specific generative model obtained from 1,000 simulation replicates. Histograms in the top plots capture the distribution of empirical type I error rates from simulations using the various generative models.}
  \label{fig:typeierror_hist}
\end{figure}

\begin{guidelinebox}[htbp]
    \caption{The 1,617 simulation settings when all working assumptions hold.}
    \label{box:sim-all-wa-hold-1000-settings}
    \begin{mdframed}[linewidth=1pt, roundcorner=10pt, backgroundcolor=gray!10]
    \spacingset{1}

    The simulation settings form the full factorial design of the six factors listed below, from which we remove the combinations in which $p_t(k) f_t$ does not lie in the linear span of $g_t$. This leaves 1,620 settings, of which 1,617 generate success probabilities inside $[0,1]$ and are the settings reported here. Because the grid of $\theta_f$ values repeats the value 0, settings with a flat $\mee_k(t)$ enter the design twice; 972 of the 1,617 settings are distinct. This affects only the weight each setting receives in the histograms of \Cref{fig:typeierror_hist} and \Cref{fig:power_working_assumption}, not the range of settings covered.

    \begin{itemize}
        \item Factor 1: number of decision points, $T = 15$ or $T = 30$ (2 levels).
        \item Factor 2: patterns of $f_t$, $g_t$, and $p_t$ (18 levels), each ensuring that $f_t$ is a subset of $g_t$. The randomization probability is constant and equal across treatment options, $p_t = (1/3, 1/3, 1/3)$, in every level:
        \begin{enumerate}
            \item (3 levels) constant $f_t$; linear $g_t$ with $\theta_g = -0.3, 0, 0.3$
            \item (3 levels) constant $f_t$; quadratic $g_t$ with $\theta_g = -0.3, 0, 0.3$
            \item (6 levels) linear $f_t$ with $\theta_f = -0.3, 0$; linear $g_t$ with $\theta_g = -0.3, 0, 0.3$
            \item (6 levels) linear $f_t$ with $\theta_f = -0.3, 0$; quadratic $g_t$ with $\theta_g = -0.3, 0, 0.3$
        \end{enumerate}
        \item Factor 3: $\ate^*$ of the two treatment options (3 levels).
        \begin{enumerate}
            \item $\ate_1^* = 1.2$
            \item $\ate_2^* = 1.35$, $1.4$, or $1.6$
        \end{enumerate}
        \item Factor 4: $\aspn^* = 0.2$, $0.3$, or $0.4$ (3 levels).
        \item Factor 5: value of $\aaa^*$ (3 levels): $\aaa^* = 0.5$, $0.8$, or $1$. In every setting of this study $\tau(t)$ is constant in $t$, so $\tau(t) = \aaa^*$ for all $t$; settings with a non-constant $\tau(t)$ are studied separately in \Cref{subsec:simulation-violate-d}.
        \item Factor 6: desired type I error rate $\eta = 0.05$ and desired power $1 - b = 0.8$ (1 level).
    \end{itemize}
    \end{mdframed}
\end{guidelinebox}

\spacingset{1.9}

\subsection{Power when all working assumptions hold}
\label{subsec:simulation-all-hold}

Using GM-0, we considered 1,617 simulation settings under a variety of settings with different patterns and magnitudes of $\mee^*_k(t)$, $\ate^*_k$, $\aspn^*$, and $\aaa^*$, assuming all working assumptions hold. That is, when $\mee^*_k(t) = \mee^\w_k(t)$, $\ate^*_k = \ate^\w_k$, $\aspn^* = \aspn^\w$, and $\aaa^* = \aaa^\w$ (see \Cref{box:sim-all-wa-hold-1000-settings} for more detailed simulation settings). In all 1,617 settings we considered, the power is consistently around the desired level (\Cref{fig:power_working_assumption}). The power is slightly greater than 0.8 before applying the small sample correction and slightly below the desired 0.8 after the correction is applied. 

\begin{figure}[htbp]
\centering
\includegraphics[width=1\textwidth]{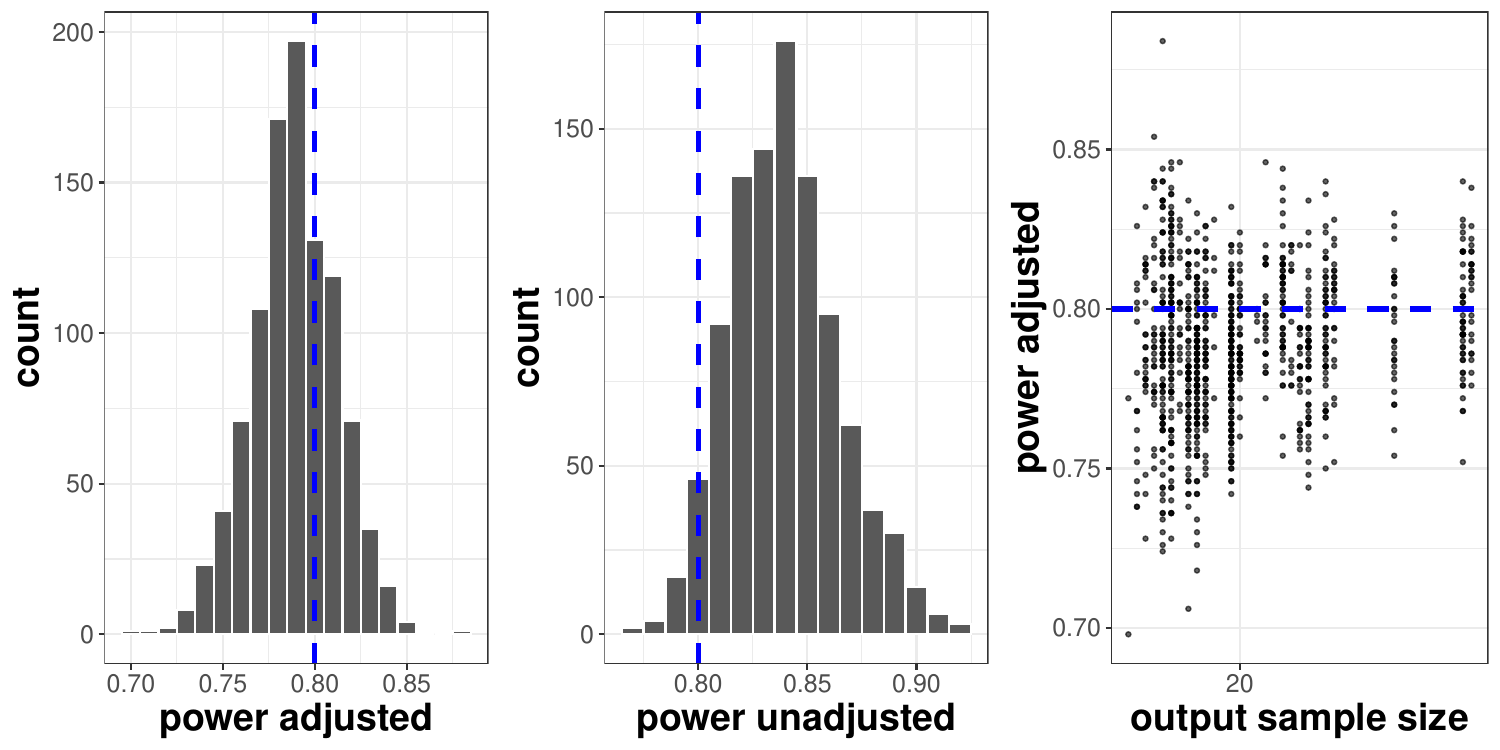}
  \caption{Adjusted and unadjusted power when all working assumptions hold. The first two figures show the histogram of the adjusted and unadjusted power under 1,000 settings. The last figure shows the adjusted power under each setting against the sample size $n$ obtained from the sample size formula.}
  \label{fig:power_working_assumption}
\end{figure}

\subsection{Power when (WA-a) is violated}
\label{subsec:simulation-violate-a}

Using GM-0, we consider two scenarios where (WA-a) can be violated: incorrect magnitude of $\mee_k(t)$ and incorrect time-varying pattern of $\mee_k(t)$. 

In the first scenario, where the ratio between two treatment effects ($\rate := \frac{\ate_2}{\ate_1}$) is misspecified while the working $\mee_k^\w(t)$ is specified to be of the same time-varying pattern as $\mee_k^*(t)$ (constant or linear), if $ \rate^\w >  \rate^*$ then the MRT is under-powered, and if $ \rate^\w <  \rate^*$ then the MRT is over-powered (\Cref{fig:viol miss ATE}).

\begin{figure}[htbp]
\centering
    \includegraphics[width=1\textwidth]{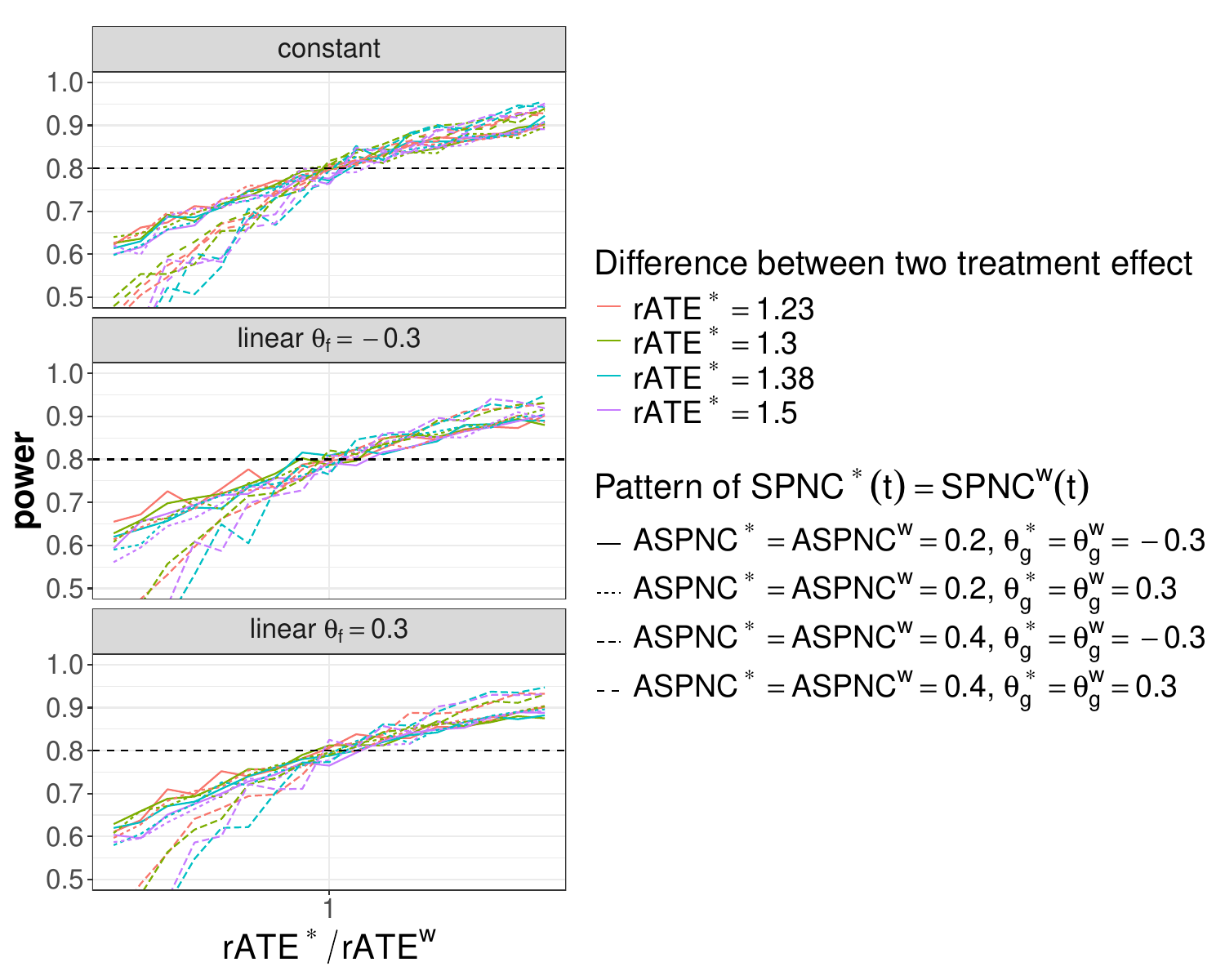}
    \caption{Power when (WA-a) is violated in that $ \rate^\w \neq \text{rATE}^*$ while the working $\mee_k^\w(t)$ is specified to be of the same time-varying pattern as $\mee_k^*(t)$. $\theta_f$ captures the slope of $\mee_k^*(t)$ and the details can be found in \Cref{box:detail-gm-0-part2}. ``ASPNC'' in the figure legend denotes $\aspn$.}
    \label{fig:viol miss ATE}
\end{figure}

In the second scenario, where the pattern of $\mee_k(t)$ is misspecified (i.e., the working $\mee_k^\w(t)$ is specified to be of a different time-varying pattern from $\mee_k^*(t)$) while the magnitude is correct (i.e., $\ate_k^\w = \ate_k^*$), we considered linear and constant patterns for both working and true $\mee_k(t)$. The take-away is that specifying constant working $\mee_k^\w(t)$ always yields adequate power, regardless of whether the true $\mee_k^*(t)$ is constant or not; specifying a linear working $\mee_k^\w(t)$ sometimes leads to inadequate power. The detailed result is as follows.

When the true $\mee_k^*(t)$ is linear in $t$ but the working $\mee_k^\w(t)$ is assumed to be constant, and the ATE for each treatment level is correctly specified ($\ate_i^\w = \ate_i^*$ for all $i \in \{1, 2, \ldots, K\}$), using a constant working $\mee^\w$ yields adequate power (\Cref{fig:mee_working_constant}). We also examine a more relaxed setting where the $\ate_k$ is misspecified but the effect difference ($\rate$) is correctly specified (i.e., $\ate^\w_k \neq \ate_k^*$ but $\rate^\w = \rate^*$). Our results show that correctly specifying the effect difference between the two treatment effects alone is not a sufficient criterion to yield adequate power in binary outcome MRT and one needs to specify $\ate$ further correctly. When $\ate_i^\w > \ate_i^*$ the MRT is over-powered, whereas when $\ate_i^\w < \ate_i^*$ the MRT is under-powered (\Cref{fig:mee_working_constant_mis_ATE}). This result differs from the continuous outcome case, where the magnitude of the ATE does not need to be correctly specified as long as the effect difference is correctly specified to yield adequate power.

When the true $\mee_k^*(t)$ is constant, but the working $\mee_k^\w(t)$ is linear in $t$, we considered two cases: one where $\mee_2^\w(t) - \mee_1^\w(t) = \mee_2^*(t) - \mee_1^*(t)$ (specifically, $\mee_1^\w(t)$ and $\mee_2^\w(t)$ are parallel, i.e., $\theta_{f2} = 0$ with $\theta_{f2}$ defined in \Cref{box:detail-gm-0-part2}), and a second case where $\mee_2^\w(t) - \mee_1^\w(t) \neq \mee_2^*(t) - \mee_1^*(t)$ yet $\rate^\w = \rate^*$ (specifically, $\mee_1^\w(t)$ and $\mee_2^\w(t)$ are not parallel, i.e., $\theta_{f2} \neq 0$). In both cases the MRT will be under-powered (\Cref{fig:mee_working_linear_parallel slope}, \Cref{fig:mee_working_mis_slope}). Using a parallel slope input might alleviate the under-powering problem by a bit, but the power is decreasing as the $\ate$ increases. 

When both true $\mee^*_k(t)$ and working $\mee^\w_k(t)$ are linear in $t$, we consider the two cases: one where the true $\mee^*_1(t)$ and $\mee^*_2(t)$ are parallel but the working $\mee^\w_1(t)$ and $\mee^\w_2(t)$ are not parallel, and a second case where the true $\mee^*_1(t)$ and $\mee^*_2(t)$ are not parallel but the working $\mee^\w_1(t)$ and $\mee^\w_2(t)$ are parallel. In the first case, the MRT is always under-powered (\Cref{fig:mee_true_parallel_slope}). In the second case, the MRT is always over-powered (\Cref{fig:mee_working_parallel slope}).

\begin{figure}[htbp]
    \centering
    \begin{subfigure}[b]{0.49\textwidth}
        \centering
        \includegraphics[width=\textwidth]{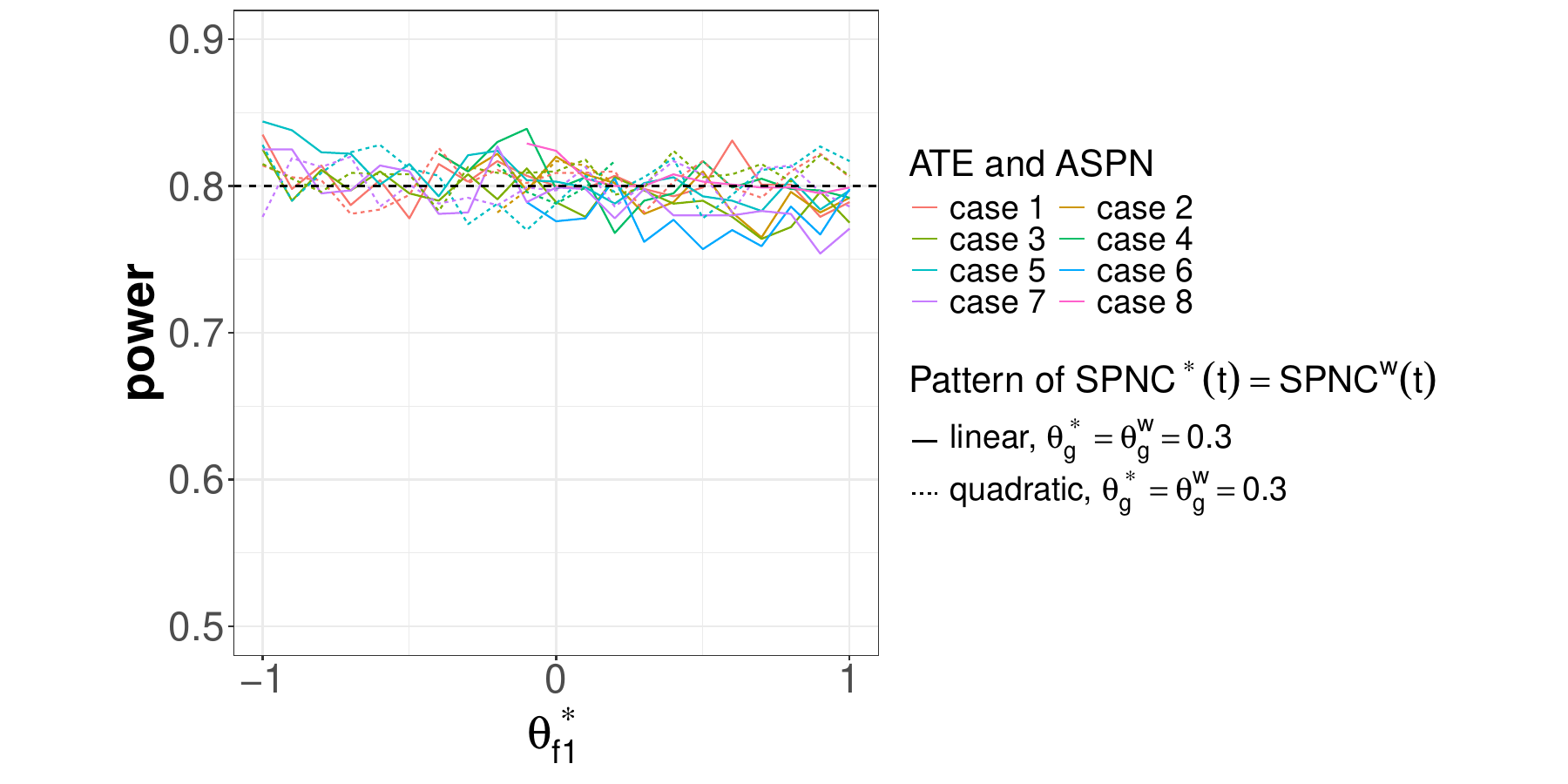}
        \caption{}
        \label{fig:mee_working_constant}
    \end{subfigure}
    \hfill
    \begin{subfigure}[b]{0.49\textwidth}  
        \centering 
        \includegraphics[width=\textwidth]{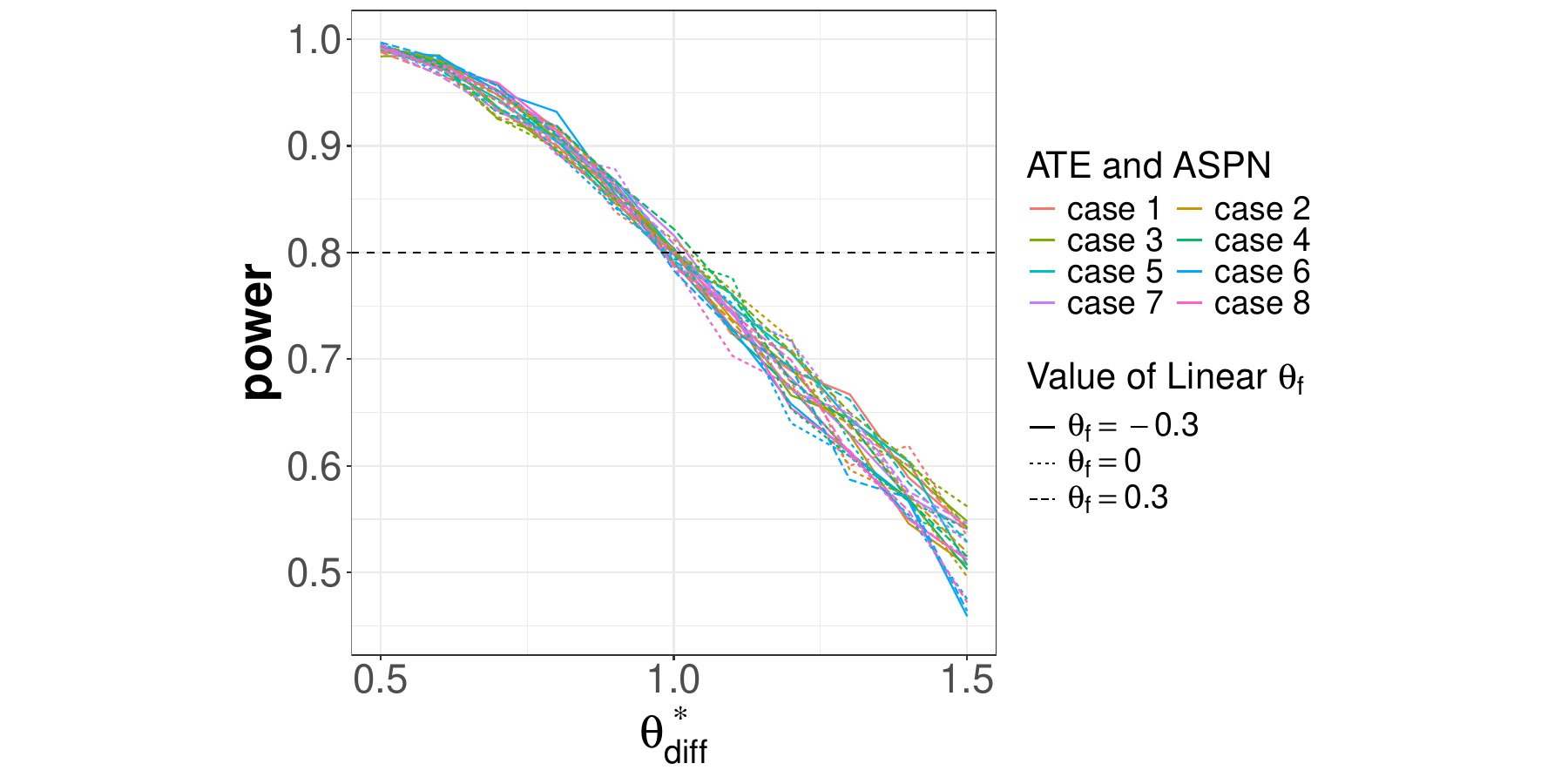}
        \caption{}  
        \label{fig:mee_working_constant_mis_ATE}
    \end{subfigure}
    
    \begin{subfigure}[b]{0.49\textwidth}   
        \centering 
        \includegraphics[width=\textwidth]{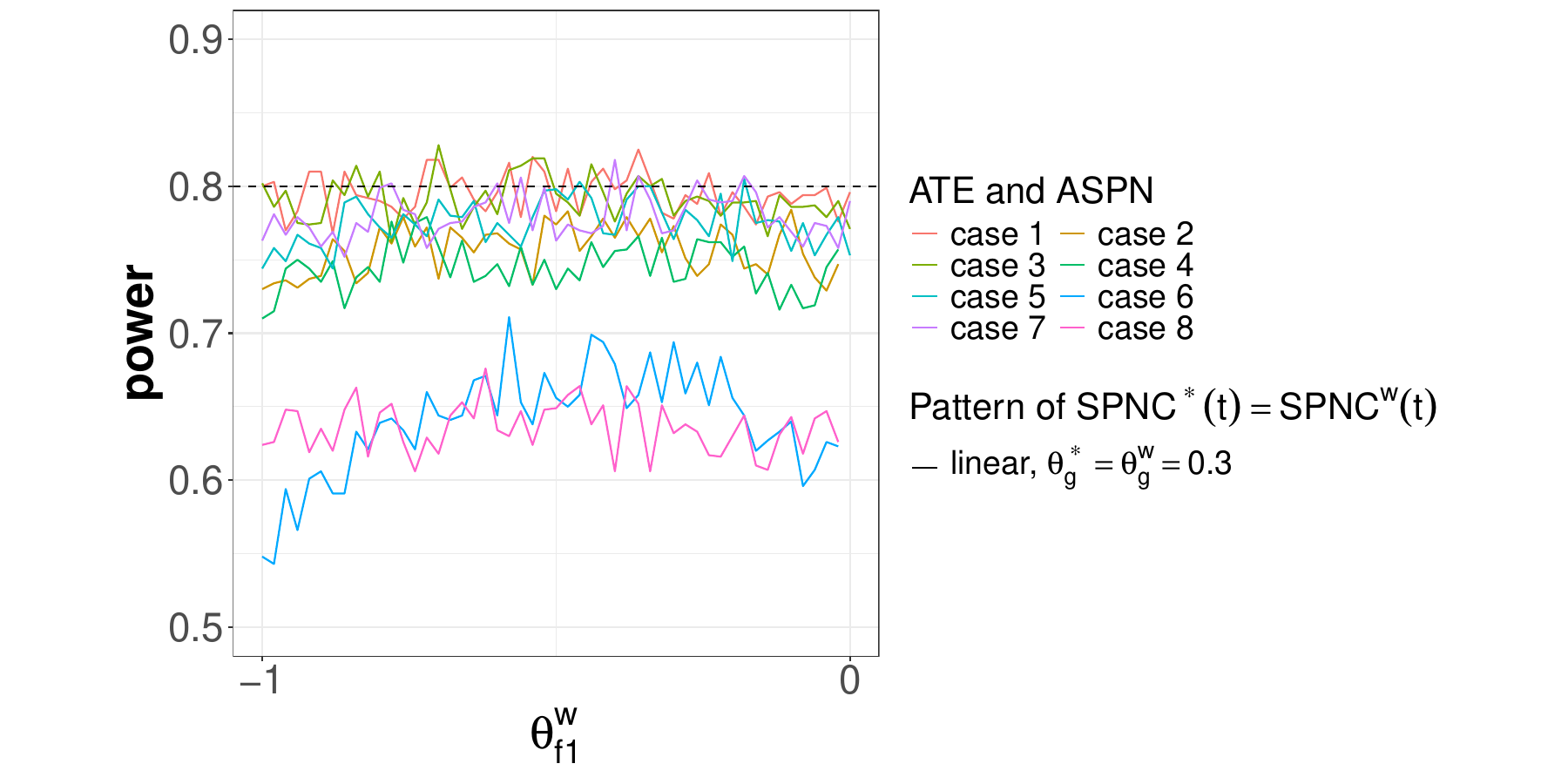}
        \caption{}
        \label{fig:mee_working_linear_parallel slope}
    \end{subfigure}
    \hfill
    \begin{subfigure}[b]{0.49\textwidth}   
        \centering 
        \includegraphics[width=\textwidth]{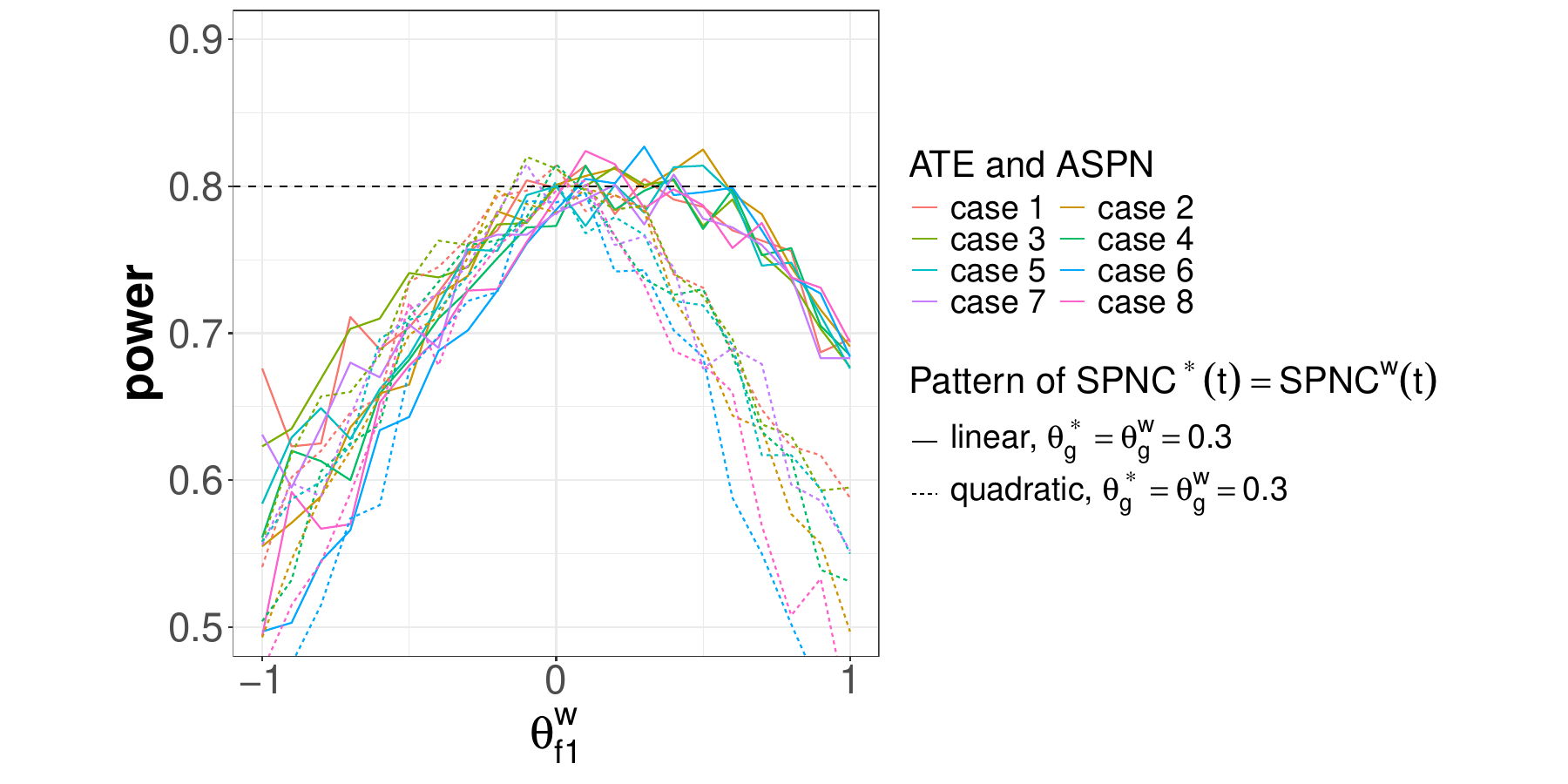}
        \caption{}   
        \label{fig:mee_working_mis_slope}
    \end{subfigure}
    \begin{subfigure}[b]{0.49\textwidth}   
        \centering 
        \includegraphics[width=1\textwidth]{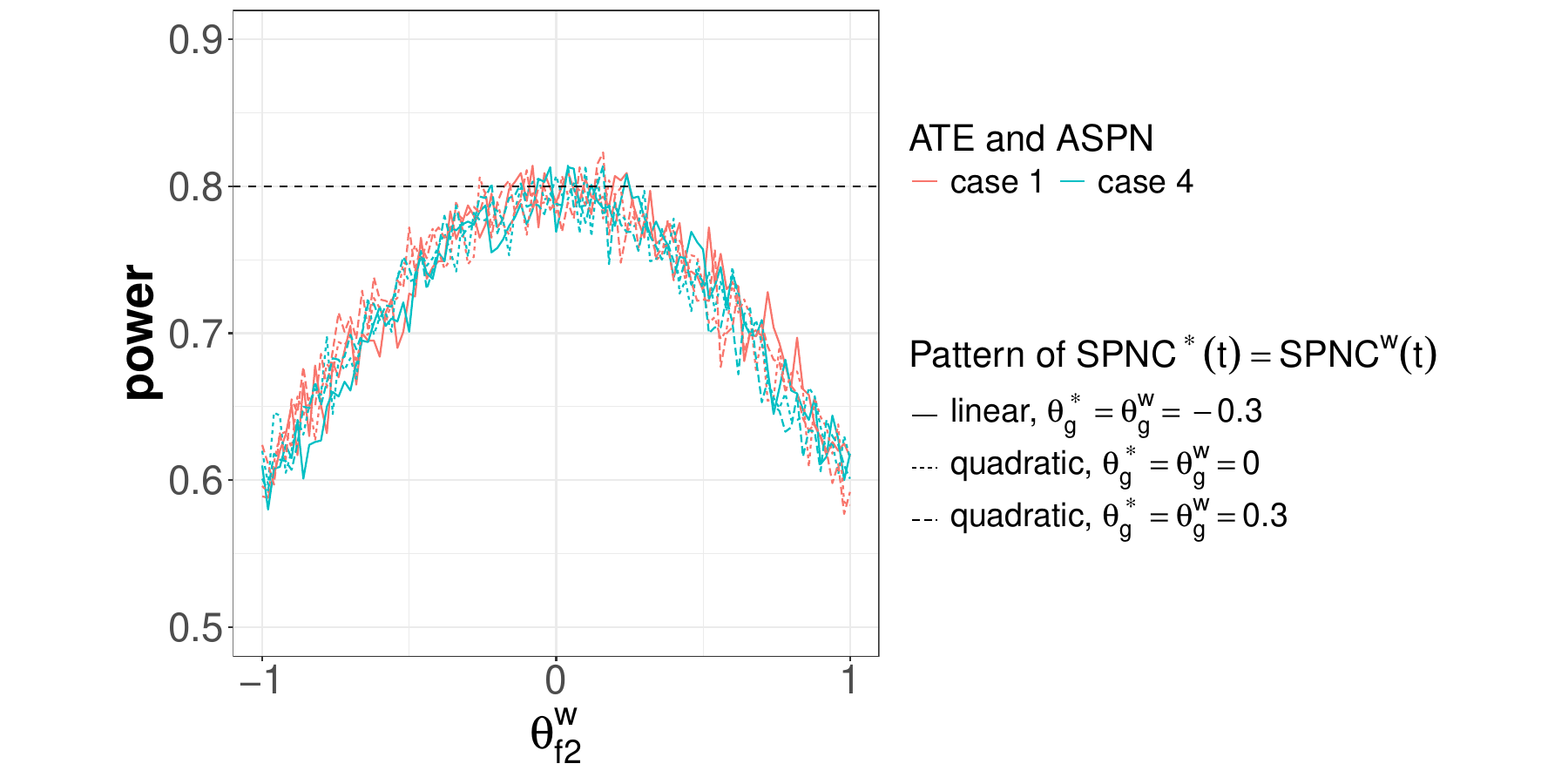}
        \caption{}
        \label{fig:mee_true_parallel_slope}
    \end{subfigure}
    \hfill
    \begin{subfigure}[b]{0.49\textwidth}   
        \centering 
        \includegraphics[width=1\textwidth]{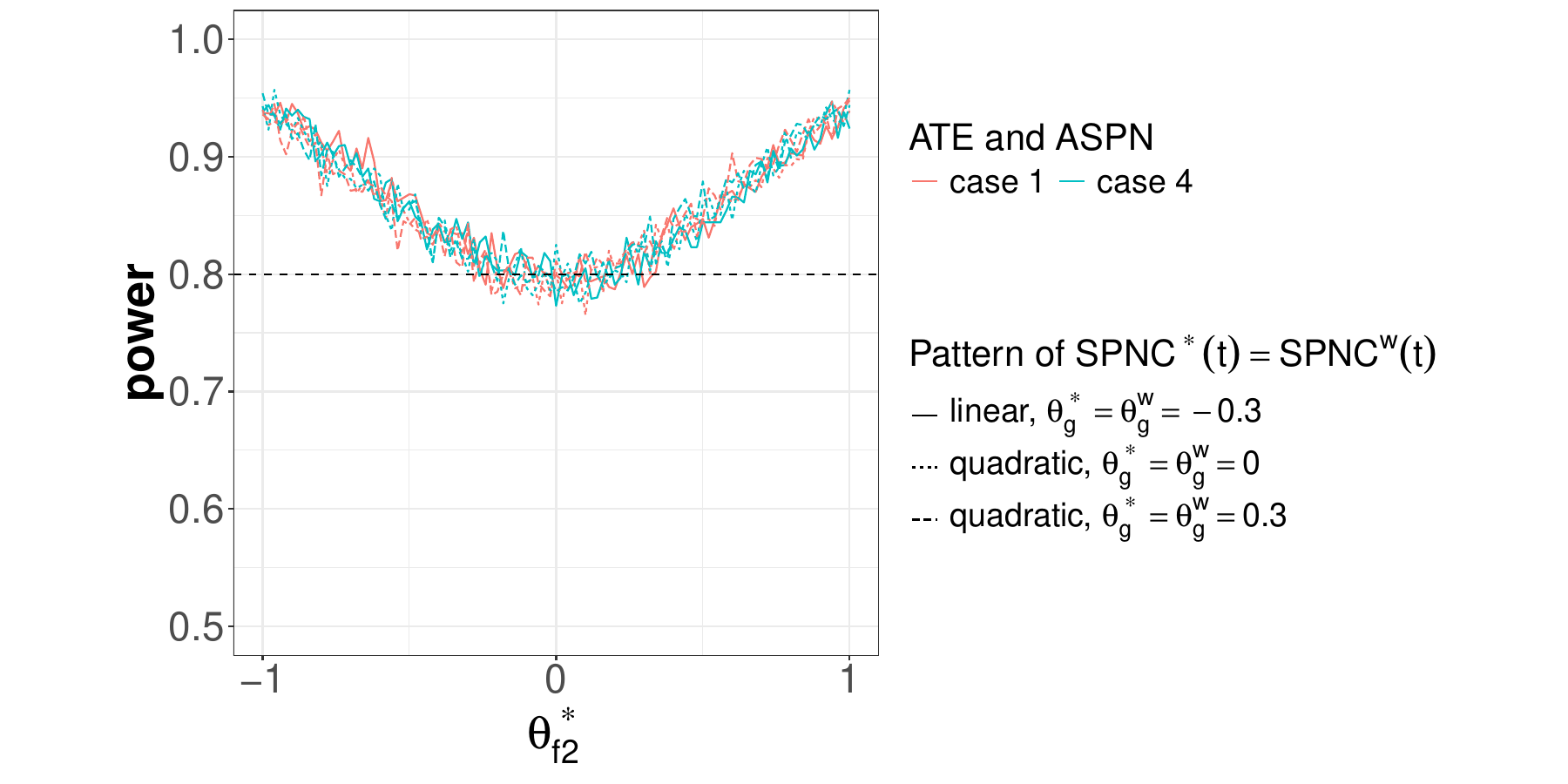}
        \caption{}
        \label{fig:mee_working_parallel slope}
    \end{subfigure}
    
    {\scriptsize case 1 : $\text{r ATE}^* = \text{r ATE}^w = 1.23$, $\text{ASPN}^* = \text{ASPN}^w = 0.2$,\\
    case 2 : $\text{r ATE}^* = \text{r ATE}^w = 1.23$, $\text{ASPN}^* = \text{ASPN}^w = 0.4$,\\
    case 3 : $\text{r ATE}^* = \text{r ATE}^w = 1.33$, $\text{ASPN}^* = \text{ASPN}^w = 0.2$,\\
    case 4 : $\text{r ATE}^* = \text{r ATE}^w = 1.33$, $\text{ASPN}^* = \text{ASPN}^w = 0.4$,\\
    case 5 : $\text{r ATE}^* = \text{r ATE}^w = 1.38$, $\text{ASPN}^* = \text{ASPN}^w = 0.2$,\\
    case 6 : $\text{r ATE}^* = \text{r ATE}^w = 1.38$, $\text{ASPN}^* = \text{ASPN}^w = 0.4$,\\ 
    case 7 : $\text{r ATE}^* = \text{r ATE}^w = 1.5$, $\text{ASPN}^* = \text{ASPN}^w = 0.2$,\\
    case 8 : $\text{r ATE}^* = \text{r ATE}^w = 1.5$, $\text{ASPN}^* = \text{ASPN}^w = 0.4$ \par}
    \caption{Power when (WA-a) is violated in that the time-varying pattern of $\mee_k^\w(t)$ is different from that of $\mee_k^*(t)$ while $\rate^\w = \rate^*$. \textbf{Panels (a) and (b):} the true $\mee_k^*(t)$ is linear in $t$ but the working $\mee_k^\w(t)$ is constant. \textbf{Panel (a):} $\ate_1^* = \ate_1^\w$ and $\ate_2^* = \ate_2^\w$. \textbf{Panel (b):} $\ate_1^* \neq \ate_1^\w$ and $\ate_2^* \neq \ate_2^\w$, yet $ \rate^* =  \rate^\w$. \textbf{Panels (c) and (d):} the true $\mee_k^*(t)$ is constant but the working $\mee_k^\w(t)$ is linear in $t$. \textbf{Panel (c):} $\mee_2^\w(t) - \mee_1^\w(t) = \mee_2^*(t) - \mee_1^*(t)$. \textbf{Panel (d):} $\mee_2^\w(t) - \mee_1^\w(t) \neq \mee_2^*(t) - \mee_1^*(t)$ yet $\ate^\w = \ate^*$. \textbf{Panels (e) and (f):} both the true $\mee_k^*(t)$ and the working $\mee_k^\w(t)$ are linear in $t$. \textbf{Panel (e):} the true $\mee^*_1(t)$ and $\mee^*_2(t)$ are parallel but the working $\mee^\w_1(t)$ and $\mee^\w_2(t)$ are not parallel. \textbf{Panel (f):} the true $\mee^*_1(t)$ and $\mee^*_2(t)$ are not parallel but the working $\mee^\w_1(t)$ and $\mee^\w_2(t)$ are parallel.}
\end{figure}

\subsection{Power when (WA-b) is violated}
\label{subsec:simulation-violate-b}

Using GM-0, we consider two scenarios where (WA-b) can be violated: the magnitude of the average success probability under the null is misspecified ($\aspn^* \neq \aspn^\w$) but the time-varying pattern of the success probability null curve is correct (e.g., both $\spnc^*(t)$ and $\spnc^\w(t)$ are linear in $t$), or the magnitude is correct ($\aspn^* = \aspn^\w$) but the time-varying pattern is incorrect ($\spnc^*(t) \neq \spnc^\w(t)$).

In the first scenario, misspecifying the magnitude of $\aspn$ can lead to either an under-power or over-power MRT (\Cref{fig:mis_aspnc}). Specifically, in the case where the working $\aspn^\w > \aspn^*$, the MRT tends to be under-powered. Conversely, in the case where $\aspn^\w <\aspn^*$ the MRT is over-powered. Therefore, in situations where the value of $\aspn$ is unknown, underestimating $\aspn^\w$ is recommended to yield a more conservative estimate. 

When the magnitude of $\spnc(t)$ is correctly specified $(\aspn^\w = \aspn^*)$, but the pattern of $\spnc(t)$ is misspecified, whether the MRT is adequately powered depends on the type of misspecification. If the true $\spnc^*(t)$ follows a linear or quadratic pattern in $t$ but the working $\spnc^\w(t)$ is constant, the MRT is still adequately powered (\Cref{fig:gt_working_constant}). However, when the true $\spnc^*(t)$ is constant but the working $\spnc^\w(t)$ is linear or quadratic in $t$, the MRT power decreases as $\theta_g$ is farther from 0 (\Cref{fig:gt_true_constant}). In the case when the pattern of $\spnc(t)$ is unknown, the desired power is guaranteed by our sample size calculator as long as the input of $\spnc^\w$ is constant. When the input of $\spnc^\w$ is not constant, it can result in an under-powered MRT. These results differ from those observed in the continuous outcome setting, where the input effect under no treatment ($\spnc$ and $\aspn$) does not affect the power of MRT. 

We further consider a variation of GM-0 with a ``weekend'' effect on the true mean proximal outcome, i.e., $E(Y_{it} \mid A_{it}, H_{it}) = e^{\one(A_{it} = 1) f_t^T \beta_1^* + \one(A_{it} = 2) f_t^T \beta_2^* + g_t^T \alpha^* + \text{weekend}_t \theta_\w }$, where $\text{weekend}_t = 1$ if the decision point is on a weekend and 0 otherwise. $\theta_\w$ captures the strength of the weekend effect, and $\theta_\w = 0$ implies that $\spnc$ is correctly specified ($\spnc^\w(t) = \spnc^*(t)$; note that we use a linear $\spnc^\w(t)$). 

We consider two settings: one where both the weekend effect and the magnitude ($\aspn$) are incorrectly specified, and one where the weekend effect is incorrectly specified, but $\aspn$ is correctly specified. When the $\aspn^\w$ is incorrectly specified, the weekend effect results in either an under-powered or over-powered MRT depending on the direction of the effect (\Cref{fig:gt_weekend true}). When the weekend effect decreases the value of $\spnc(t)$ ($\theta_\w<0$), the MRT is under-powered. In contrast, when the weekend effect increases the value of $\spnc(t)$ ($\theta_\w > 0$), the MRT is over-powered \Cref{fig:gt_weekend true correct aa}. In contrast, when the $\aspn^\w$ is correctly specified, the MRT is adequately powered, even when using a constant $\spnc^\w$.

\begin{figure}[htbp]
    \centering
    \begin{subfigure}[b]{0.475\textwidth}
        \centering
        \includegraphics[width=\textwidth]{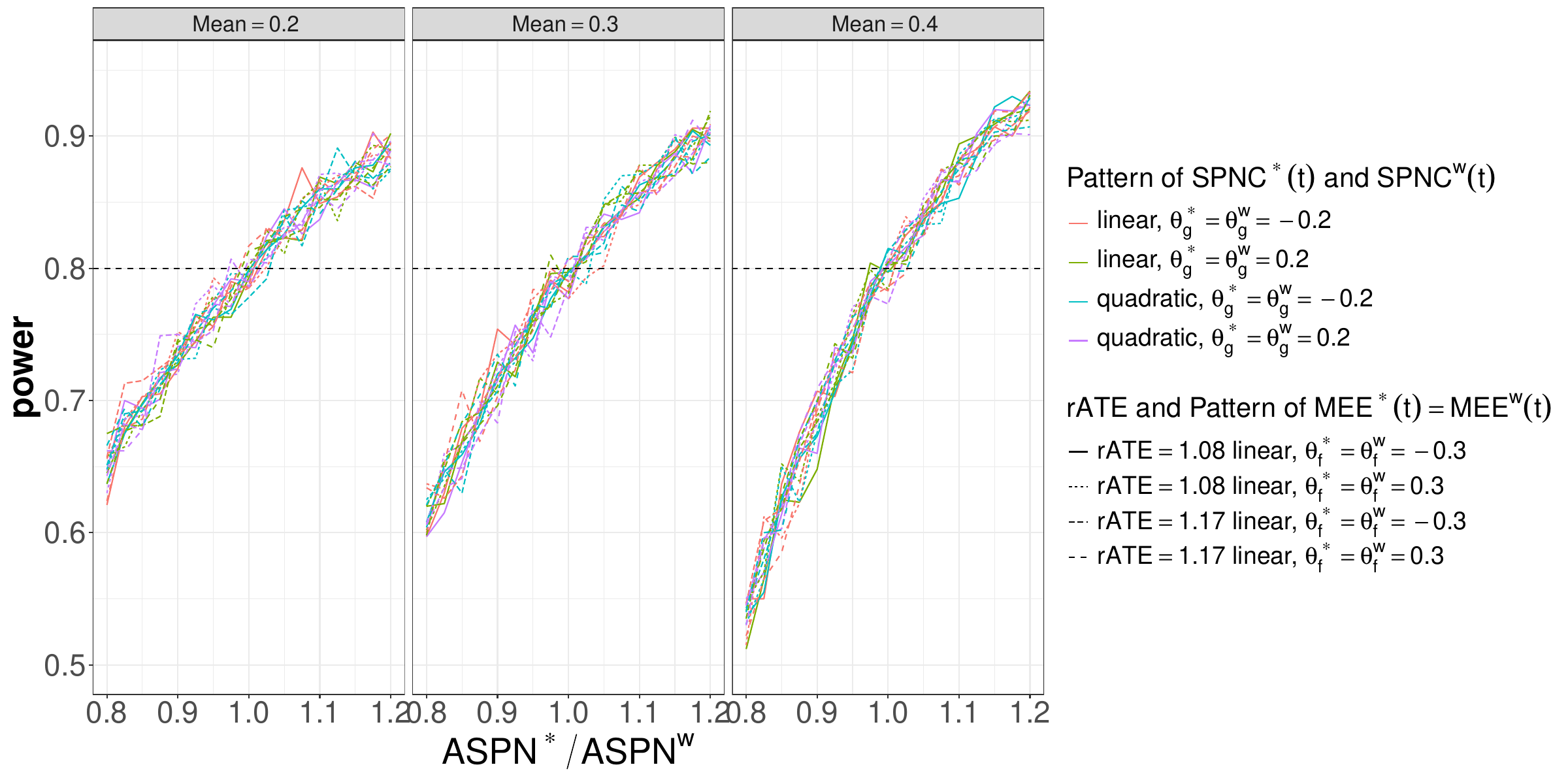}
        \caption{}
        \label{fig:mis_aspnc}
    \end{subfigure}
    \hfill
    \begin{subfigure}[b]{0.475\textwidth}  
        \centering 
        \includegraphics[width=\textwidth]{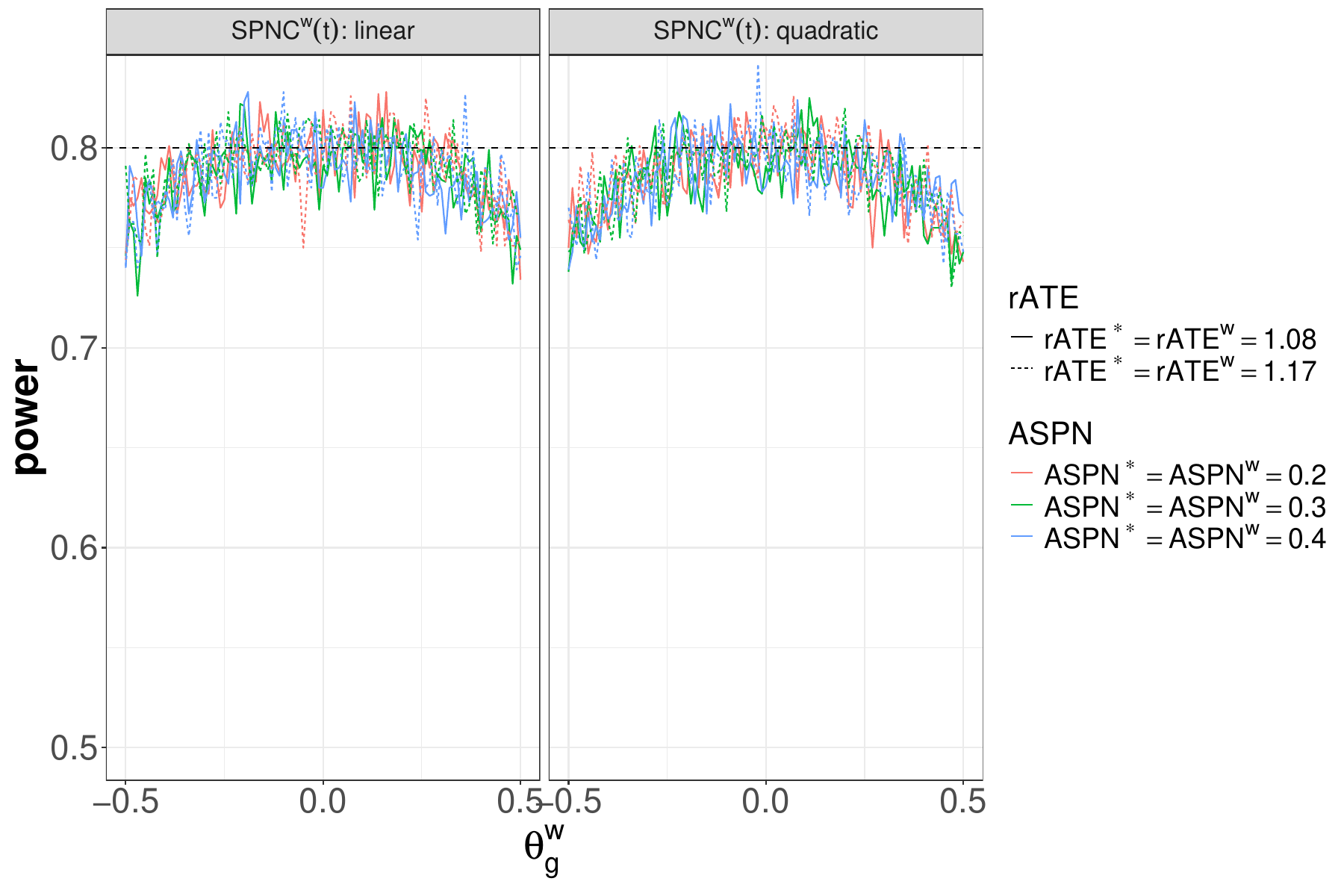}
        \caption{}  
        \label{fig:gt_true_constant}
    \end{subfigure}

    \begin{subfigure}[b]{0.475\textwidth}   
        \centering 
        \includegraphics[width=\textwidth]{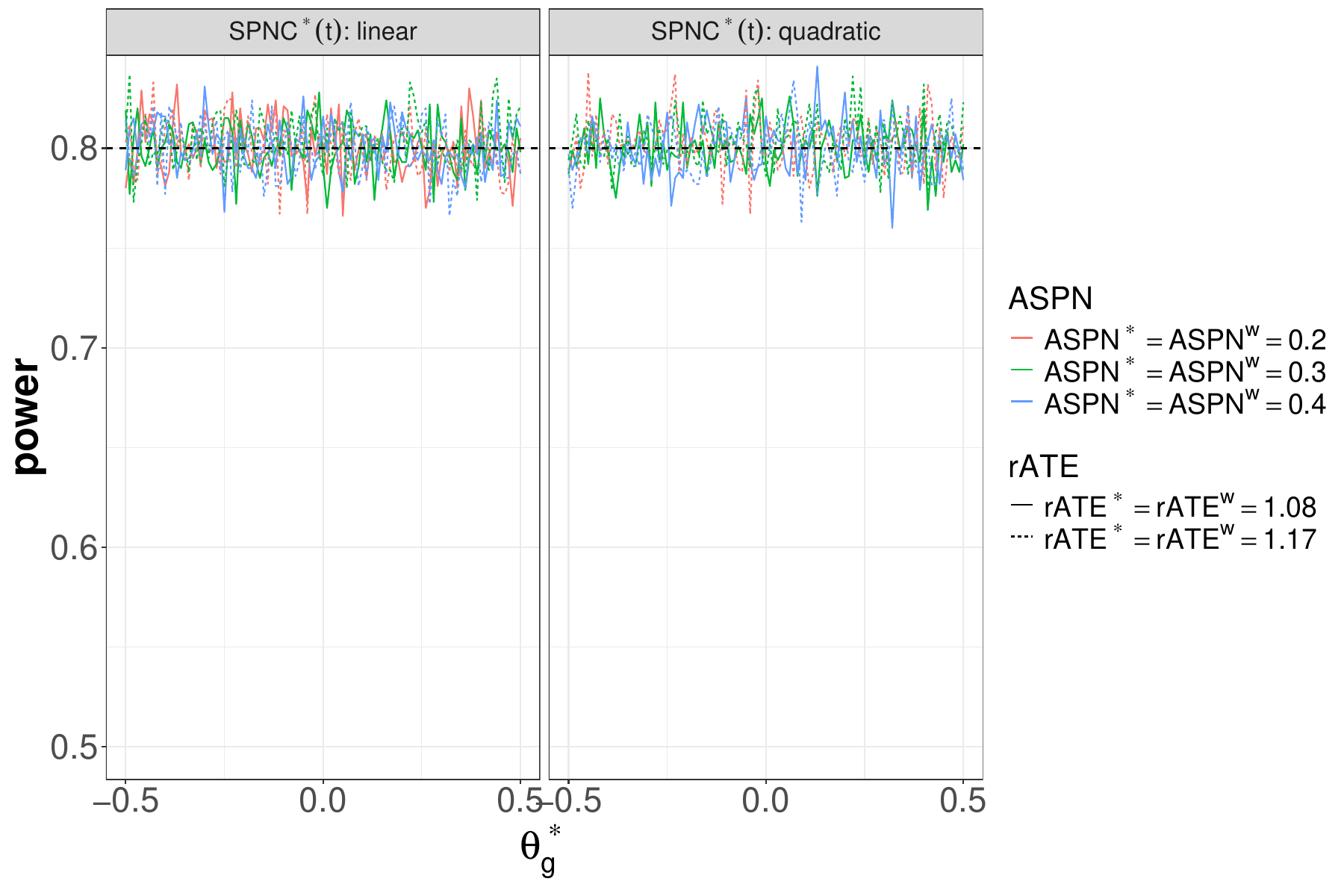}
        \caption{}
        \label{fig:gt_working_constant}
    \end{subfigure}
    \caption{Power when (WA-b) is violated. \textbf{Panels (a):} the magnitude of the average-over-time mean outcome is misspecified ($\aspn^* \neq \aspn^\w$) but the time-varying pattern of the mean outcome is correct (e.g., both $\spnc^*(t)$ and $\spnc^\w(t)$ are linear in $t$). \textbf{Panel (b):} the working $\spnc^\w(t)$ is linear or quadratic in $t$ but the true $\spnc^*(t)$ is constant. \textbf{Panel (c):} the working $\spnc^\w(t)$ is constant but the true $\spnc^*(t)$ is linear or quadratic in $t$. The facet labels ``Mean'' in Panel (a) give the value of $\aspn^*$.}
\end{figure}

\begin{figure}[htbp]
    \centering
    \begin{subfigure}[b]{0.44\textwidth}
        \centering
        \includegraphics[width=\textwidth]{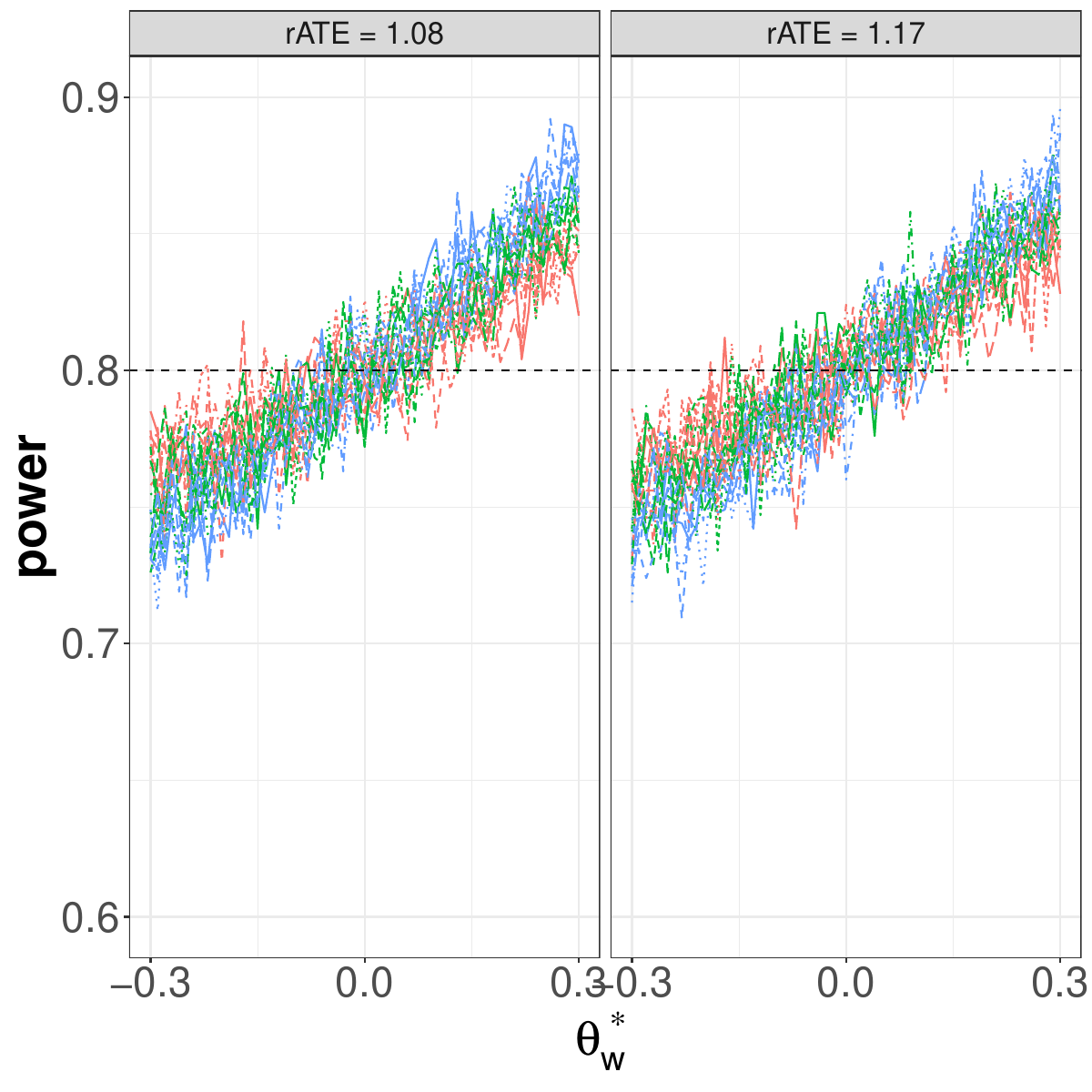}
        \caption{}
        \label{fig:gt_weekend true}
    \end{subfigure}
    \hfill
    \begin{subfigure}[b]{0.54\textwidth}  
        \centering 
        \includegraphics[width=\textwidth]{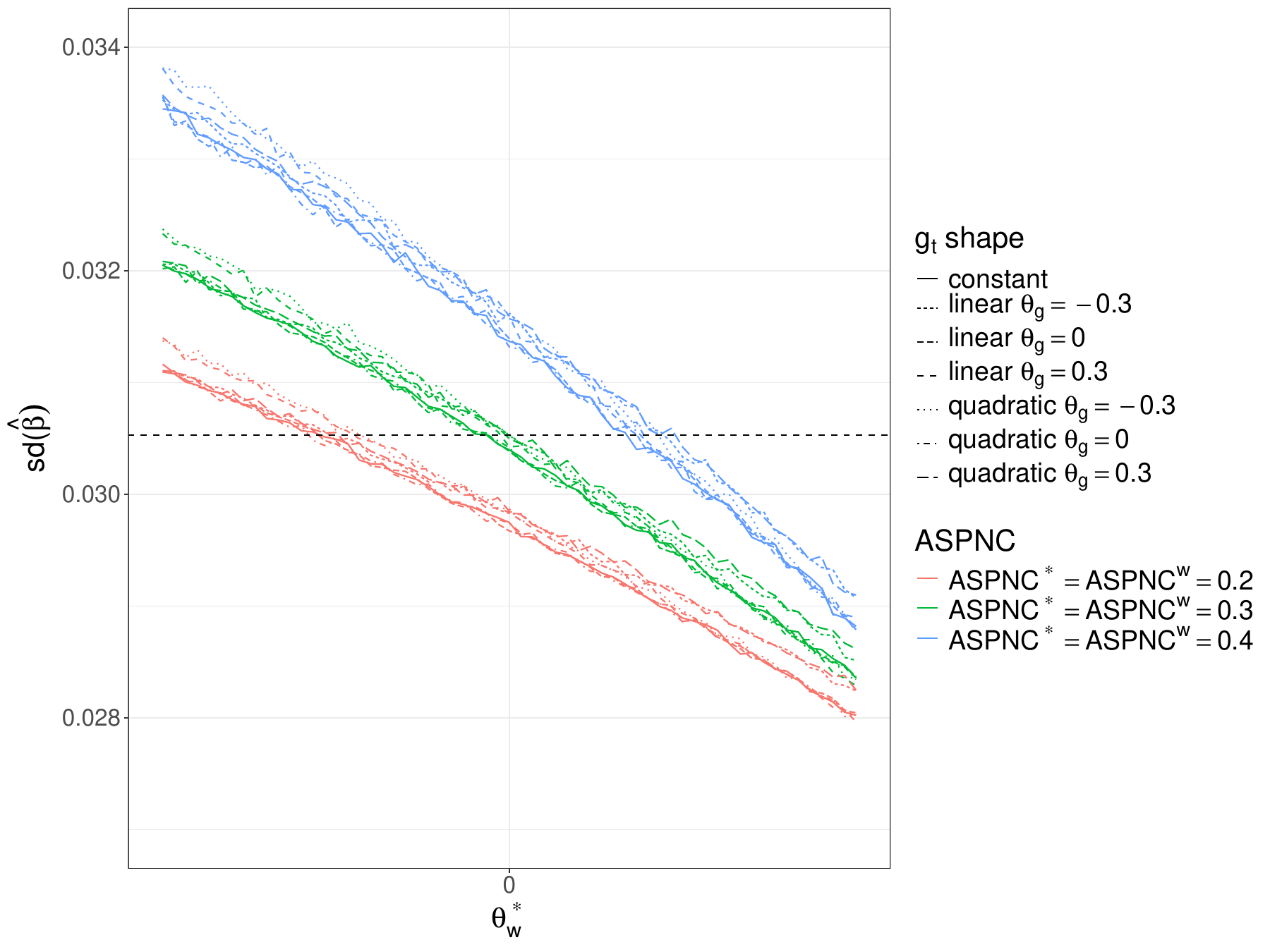}
        \caption{}  
        \label{fig:gt_weekend true se}
    \end{subfigure}
    
    \begin{subfigure}[b]{0.44\textwidth}
        \centering
        \includegraphics[width=\textwidth]{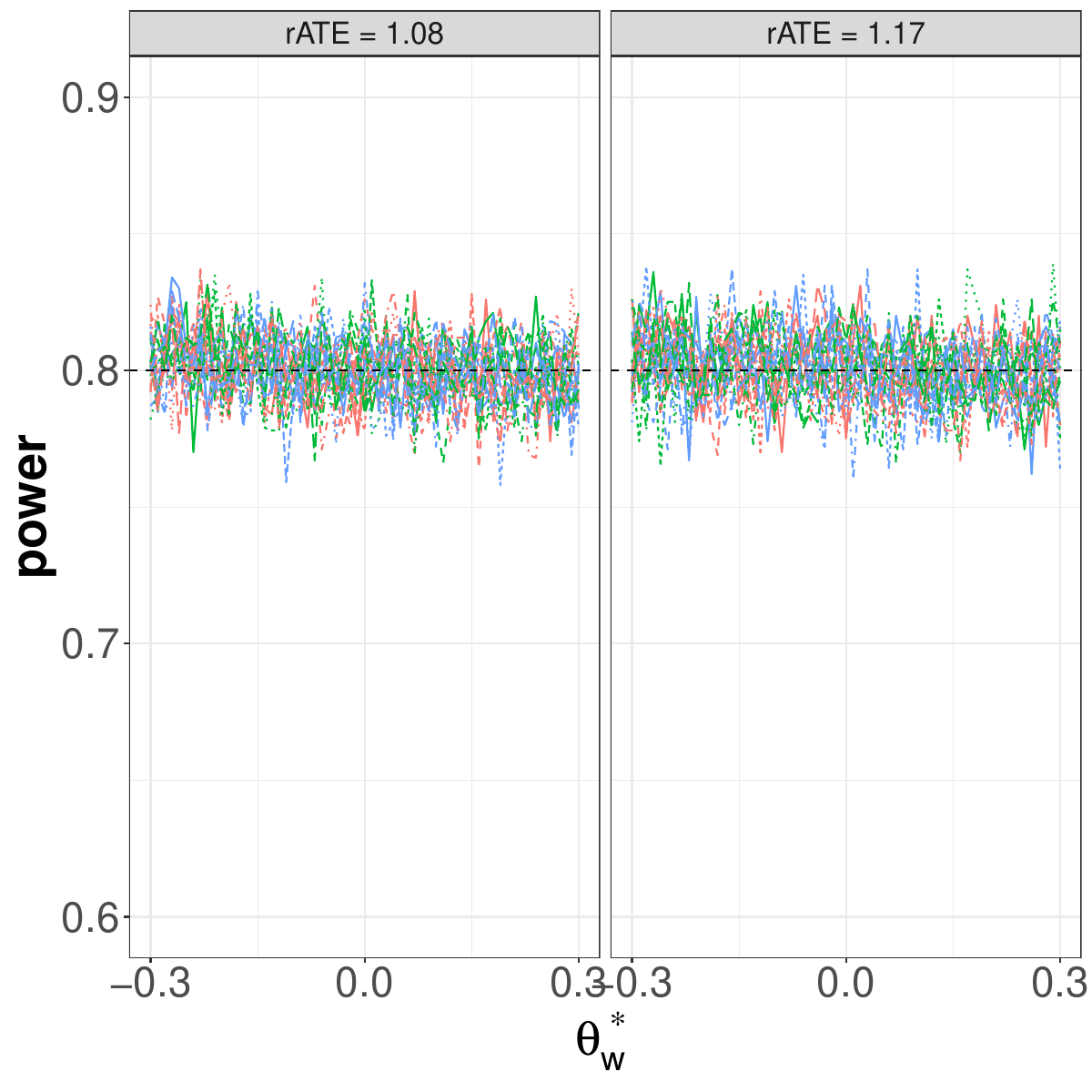}
        \caption{}
        \label{fig:gt_weekend true correct aa}
    \end{subfigure}
    \hfill
    \begin{subfigure}[b]{0.54\textwidth}  
        \centering 
        \includegraphics[width=\textwidth]{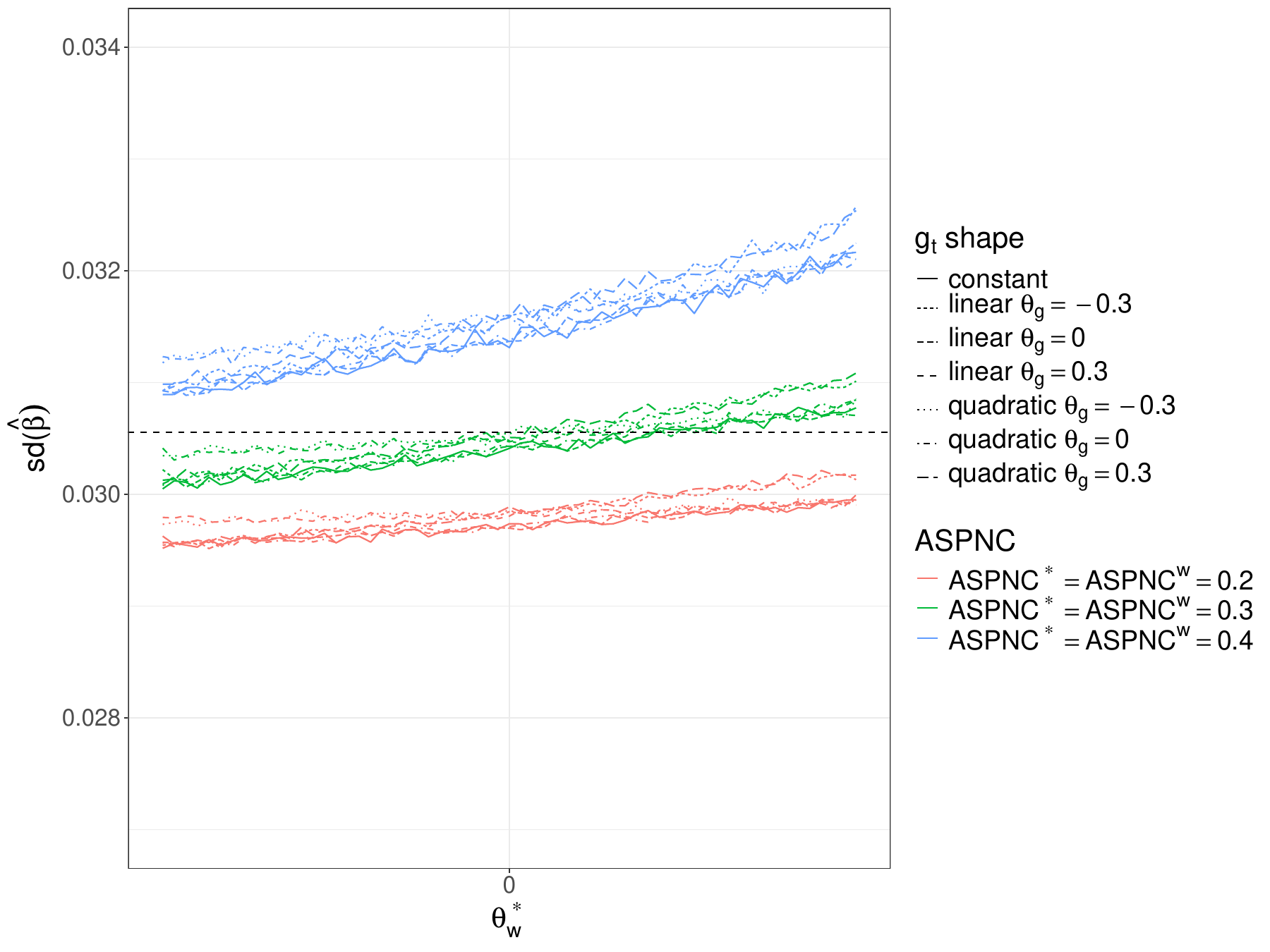}
        \caption{}  
        \label{fig:gt_weekend true se correct aa}
    \end{subfigure}
    \caption{Power when (WA-b) is violated in that there is a ``weekend'' effect on the true mean proximal outcome. \textbf{Panels (a):} the power when $\aspn^\w$ is misspecified. \textbf{Panel (b):} the standard deviation of $\hat\beta$ for the same setting as Panel (a), explaining the low power in Panel (a). \textbf{Panels (c):} the power when $\aspn^\w$ is correctly specifed. \textbf{Panel (d):} the standard deviation of $\hat\beta$ for the same setting as Panel (c), explaining the adequate power in Panel (c).}
\end{figure}

\subsection{Power when (WA-c) is violated}
\label{subsec:simulation-violate-d}

Using GM-0, we consider two scenarios where (WA-c) can be violated: the magnitude of $\aaa$ is misspecified (so that $\aaa^* \neq \aaa^\w$ but the time-varying patterns of $\tau^*(t)$ and $\tau^\w(t)$ are the same) or the pattern of $\tau(t)$ is incorrect (so that $\aaa^* = \aaa^\w$ but the time-varying patterns of $\tau^*(t)$ and $\tau^\w(t)$ are different). Recall that $\aaa$ is the averaged-over-time expected availability defined in \eqref{eq:AA}.

In the first scenario, we set $\tau^\w(t)$ and $\tau^*(t)$ to be constants in $t$, so that $\tau^\w(t) = \aaa^\w$ and $\tau^*(t) = \aaa^*$ for all $t \in [T]$. We set $\aaa^\w = 0.3$ and varied $\aaa^*$ from 0.1 to 1. The MRT is over-powered if $\aaa^\w < \aaa^*$, and the MRT is under-powered if $\aaa^\w > \aaa^*$ (\Cref{fig:viol_avail_mag}).

In the second scenario, we set $\aaa^* = \aaa^\w = 0.3$ and varied the time-varying pattern of $\tau^*(t)$ and $\tau^\w(t)$, so that one is constant in $t$ and the other is either linear or periodic in $t$. Regardless of which of $\tau^*(t)$ or $\tau^\w(t)$ is constant, the MRT is always adequately powered (\Cref{fig:pattern_viol}).

\begin{figure}[htbp]
    \centering
    \includegraphics[width=1\textwidth]{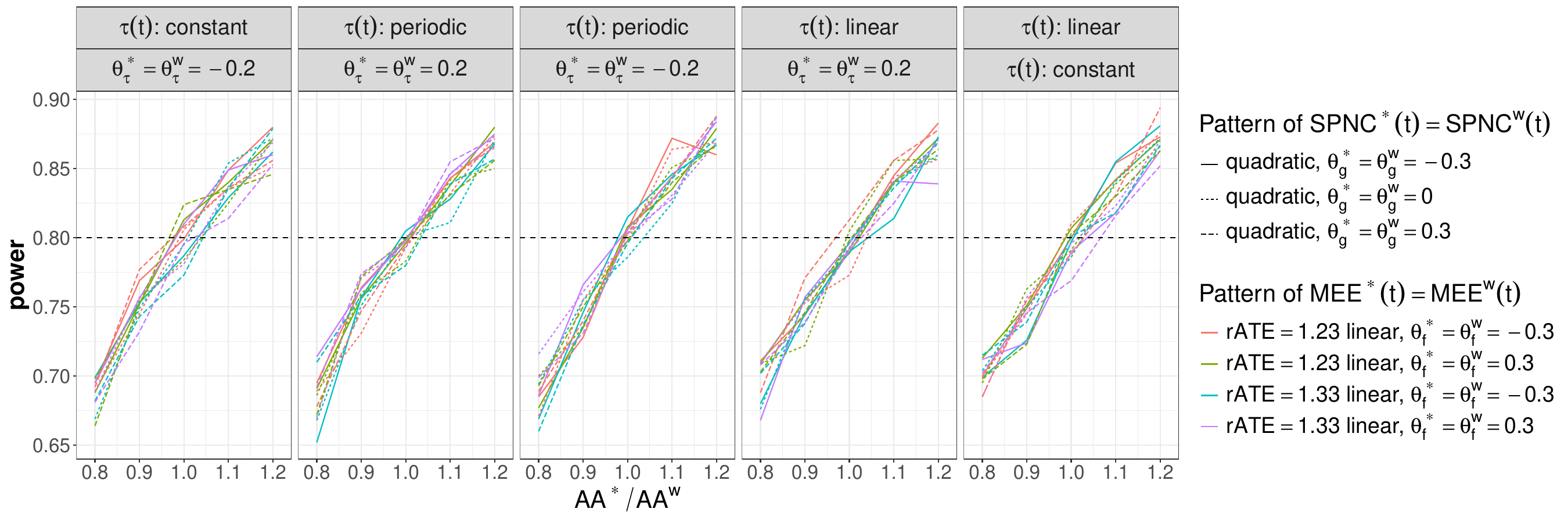}
    \caption{Power when (WA-c) is violated in that $\aaa^* \neq \aaa^\w$ but the time-varying patterns of $\tau^*(t)$ and $\tau^\w(t)$ are the same.}
    \label{fig:viol_avail_mag}
\end{figure}

\begin{figure}[htbp]
    \centering
        \vskip\baselineskip
    \begin{subfigure}[b]{1\textwidth}   
        \includegraphics[width=0.9\textwidth]{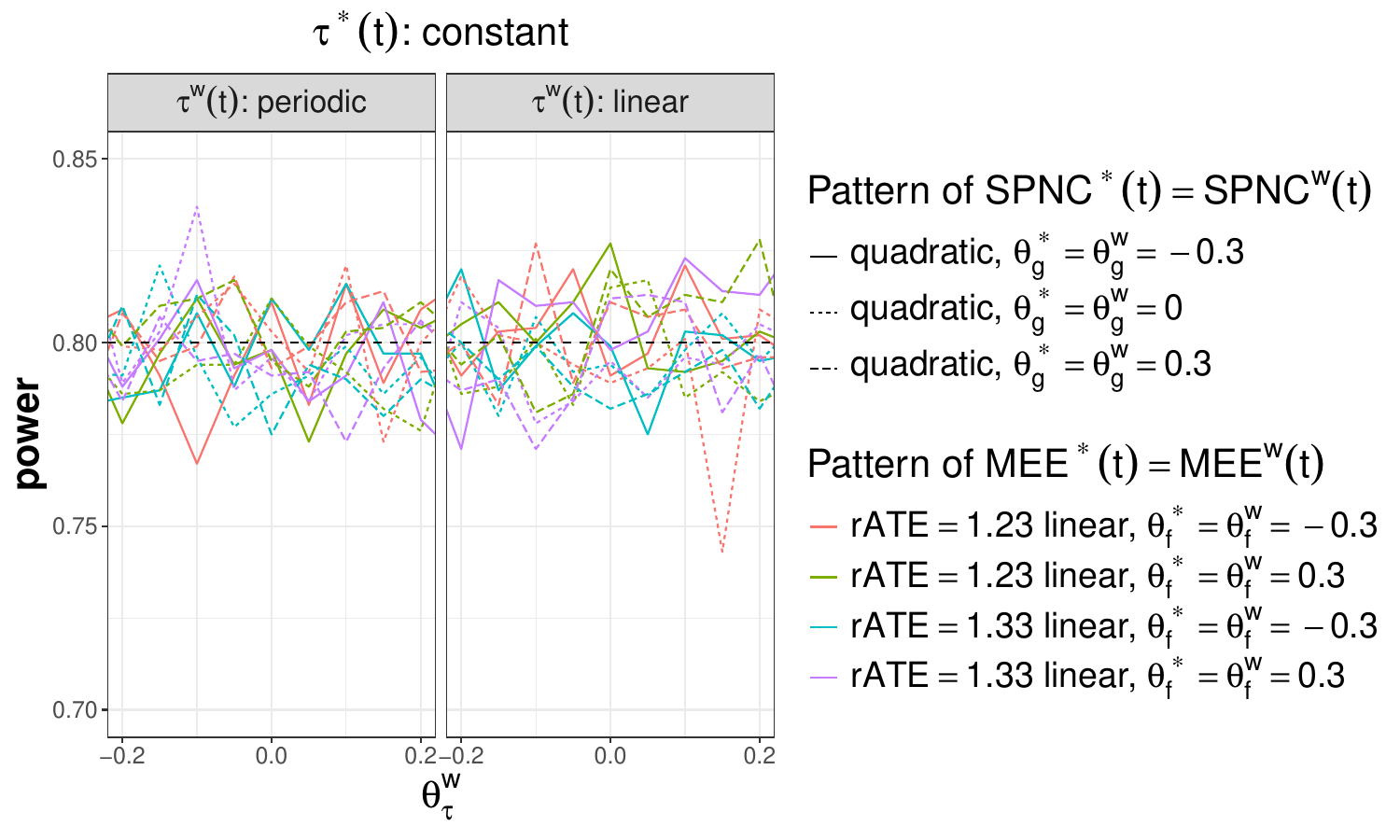}
        \caption{}
    \end{subfigure}
    
    \vskip\baselineskip
    \begin{subfigure}[b]{1\textwidth}   
        \centering
        \includegraphics[width=0.9\textwidth]{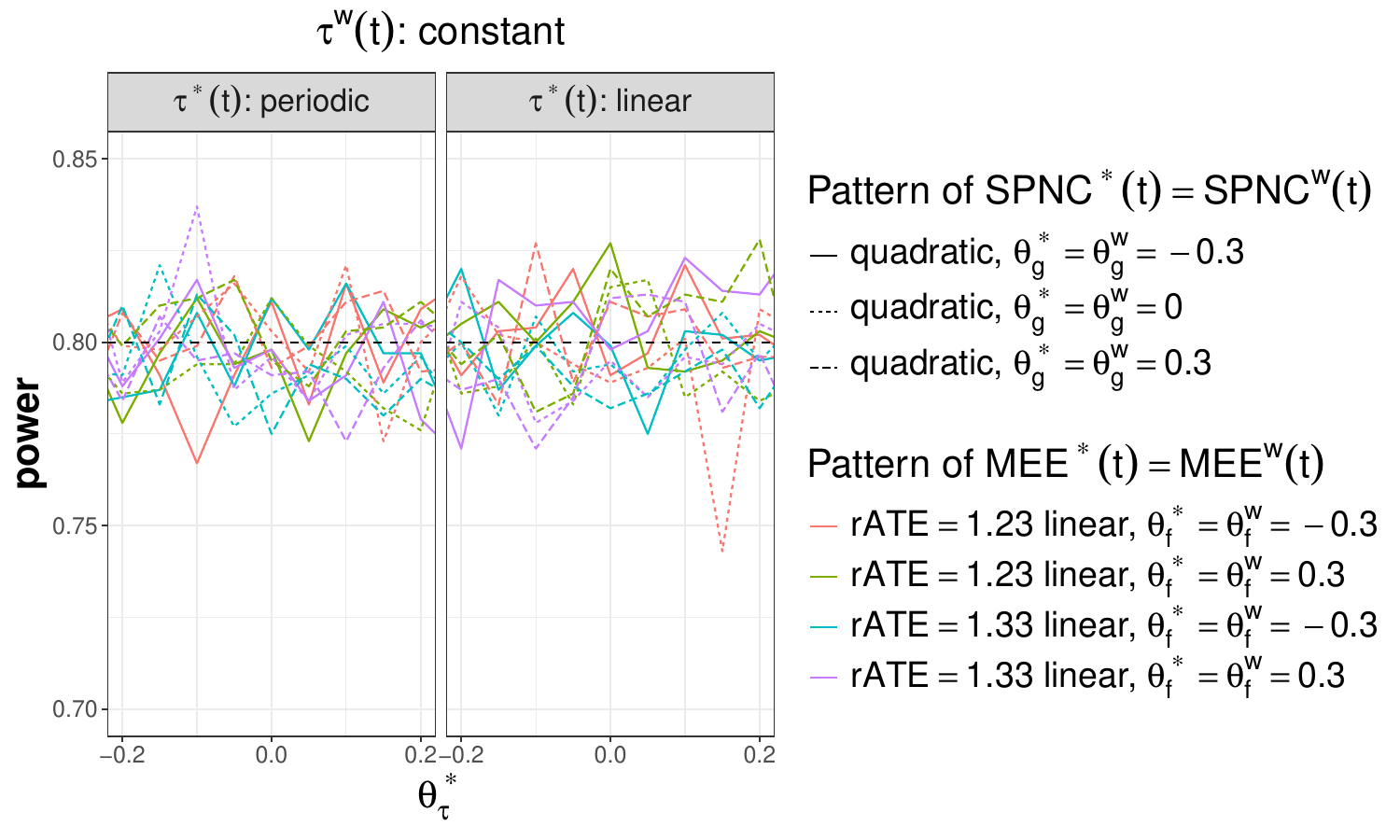}
        \caption{}
    \end{subfigure}
    \hfill
    \caption{Power when (WA-c) is violated in that $\aaa^* = \aaa^\w$ but the time-varying patterns of $\tau^*(t)$ and $\tau^\w(t)$ are different. \textbf{Panels (a):} when $\tau^*(t)$ is constant and $\tau^\w(t)$ is either periodic or linear in $t$. \textbf{Panels (b):} when when $\tau^\w(t)$ is constant and $\tau^*(t)$ is either periodic or linear in $t$.}
    \label{fig:pattern_viol}
\end{figure}

\subsection{Power when (WA-d) is violated}
\label{subsec:simulation-violate-e}

Using GM-SC, (WA-d) is violated as long as $\nu_1 \neq 0$. Recall that $\nu_1$ parameterizes the amount of serial correlation in the proximal outcomes. We consider three ways to specify $\spnc^\w: \spnc^\w(t) = \spnc^*(t)$ (thus $\aspn^\w = \aspn^*$), $\spnc^\w(t) \neq \spnc^*(t)$ but $\aspn^\w = \aspn^*$, and $\spnc^\w(t) \neq \spnc^*(t)$ and $\aspn^\w \neq \aspn^*$. \Cref{fig:sc_viol} shows that as long as $\aspn^\w = \aspn^*$, the MRT will be adequately powered even when the pattern of $\spnc^{\w}(t)$ is misspecified. When both $\spnc^\w$ and $\aspn^\w$ are incorrect, a larger serial correlation effect ($\nu_1$) will increase $\aspn^*$, which results in a higher power.

\begin{figure}[htbp]
    \centering
    \includegraphics[width=.8\textwidth]{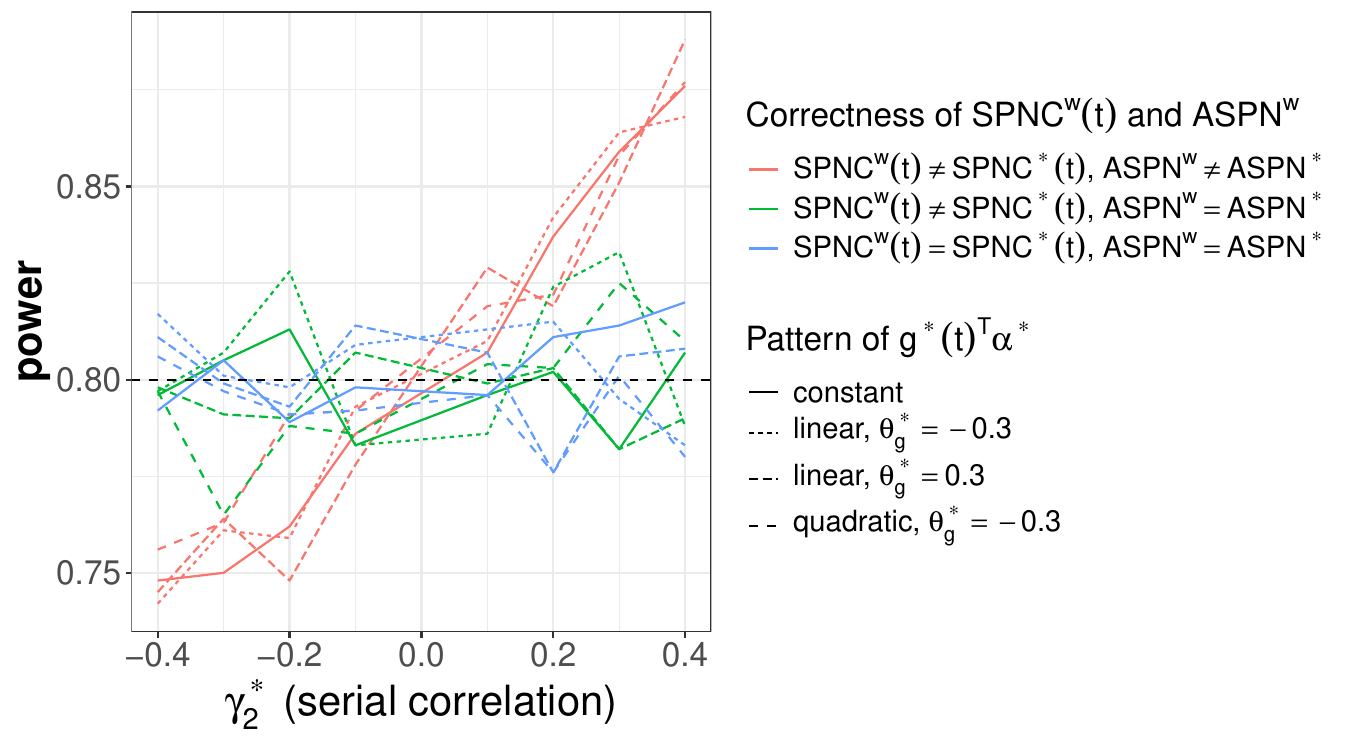}
    \caption{Power when (WA-d) is violated. A larger $\nu_1$ represents a larger magnitude of serial correlation in the proximal outcomes. The horizontal axis of the figure is labeled $\gamma_2^*$, the name of $\nu_1$ in the simulation code.}
    \label{fig:sc_viol}
\end{figure}

\subsection{Power when (WA-e) is violated}
\label{subsec:simulation-violate-f}

Using GM-EA, (WA-e) is violated as long as $\nu_2 \neq 0$, $\nu_3 \neq 0$, or $\nu_4 \neq 0$. Recall that $\nu_2$ and $\nu_3$ parameterize the effect of $A_{t-1}$ on $I_t$, and $\nu_4$ parameterizes the dependence of $I_t$ on $Y_{t-1}$. We considered three ways to specify $\tau^\w(t):\tau^\w(t) = \tau^*(t)$ (thus $\aaa^\w = \aaa^*)$, $\tau^\w(t) \neq \tau^*(t)$ but $\aaa^\w = \aaa^*$, and $\tau^\w(t) \neq \tau^*(t)$ and $\aaa^\w \neq \aaa^*$.
\Cref{subfig:combined_endogeneity} illustrates how endogeneity affected the power of MRT.

In the case where availability depends only on the previous treatment (treatment 1 or treatment 2, represented by $\nu_2$ and $\nu_3$ respectively), as shown in \Cref{subfig:treatment_endogeneity}, the MRT remains adequately powered as long as $\aaa^\w$ is correctly specified, even when the pattern of $\tau^\w(t)$ is misspecified, as evidenced by the nearly constant pattern around 0.8 with no observable gradient shift.

\Cref{subfig:one_treatment_previous_outcome_endogeneity} illustrates how the interactions between the previous outcome ($\nu_4$) and the effect of one of the treatments ($\nu_2$) on the current availability influence the power of MRT. The MRT can be under-powered or over-powered depending on the direction of $\nu_4$. Moreover, when $\aaa^\w$ is incorrect, the power of MRT can also be affected by the direction of $\nu_2$, as indicated by the gradient shift in the plot. However, similar to the result from the previous subfigure, when $\aaa^\w$ is correctly specified, the effect of $\nu_2$ does not affect the power of the MRT as evidenced by the absence of a gradient shift in the plot.

Finally, \Cref{subfig:combined_endogeneity} combines the results of \Cref{subfig:treatment_endogeneity} and \Cref{subfig:one_treatment_previous_outcome_endogeneity} to show the joint effect of previous treatment (color) and previous outcome (slope) on the power of MRT. Consistent with prior findings, endogeneity based on previous treatment ($A_{t-1}$) will not affect the power of MRT as long as $\aaa$ is correctly specified, as evidenced by the lack of gradient shifts. In contrast, endogeneity based on previous outcome can result in MRT being under-powered or over-powered depending on the direction and magnitude of $\nu_4$ as represented by the positive slope trend observed across all three settings.

\begin{figure}[htbp]
    \centering
    \begin{subfigure}[b]{0.44\textwidth}
        \centering
        \includegraphics[width=\textwidth]{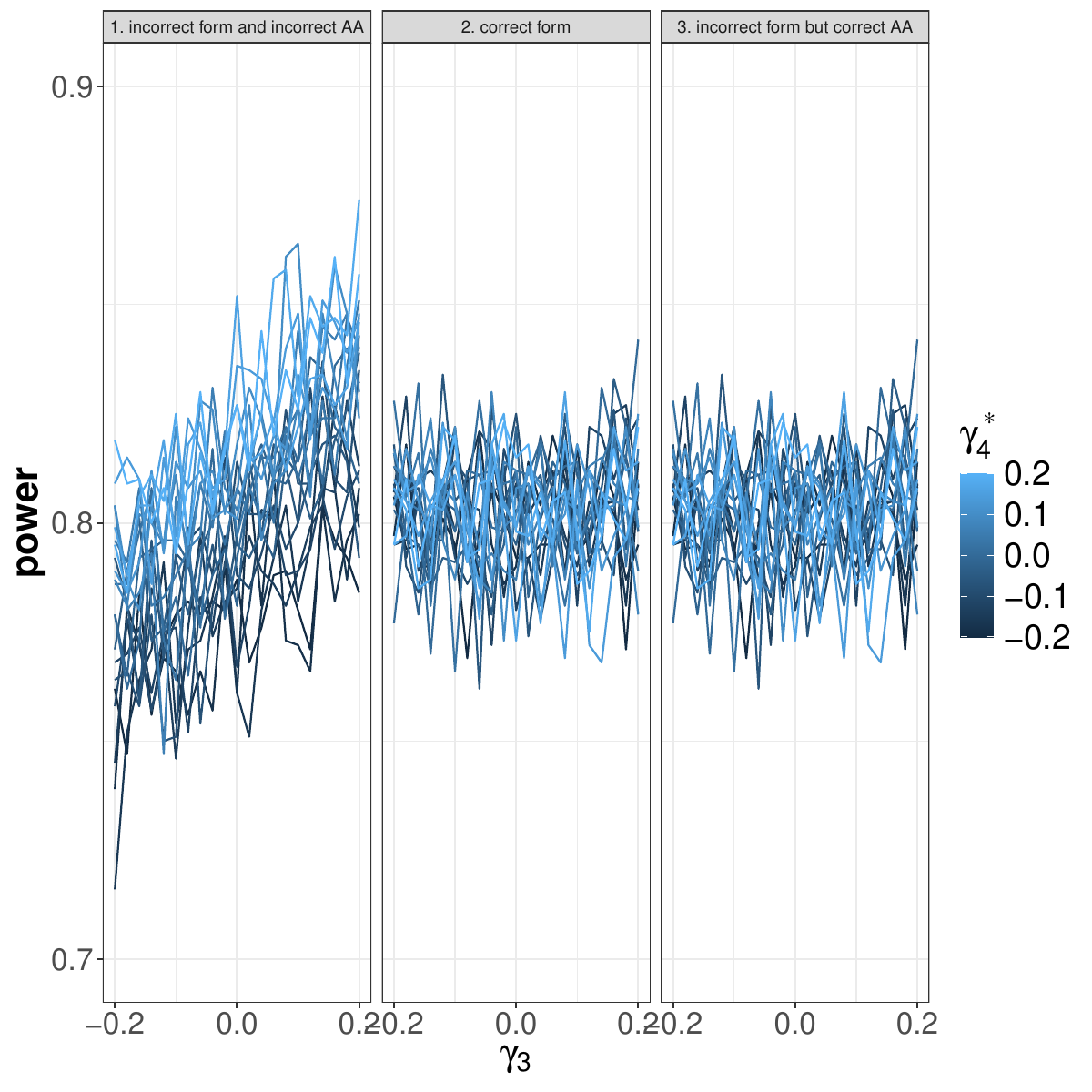}
        \caption{}
        \label{subfig:treatment_endogeneity}
    \end{subfigure}
    \hfill
    \begin{subfigure}[b]{0.44\textwidth}  
        \centering 
        \includegraphics[width=\textwidth]{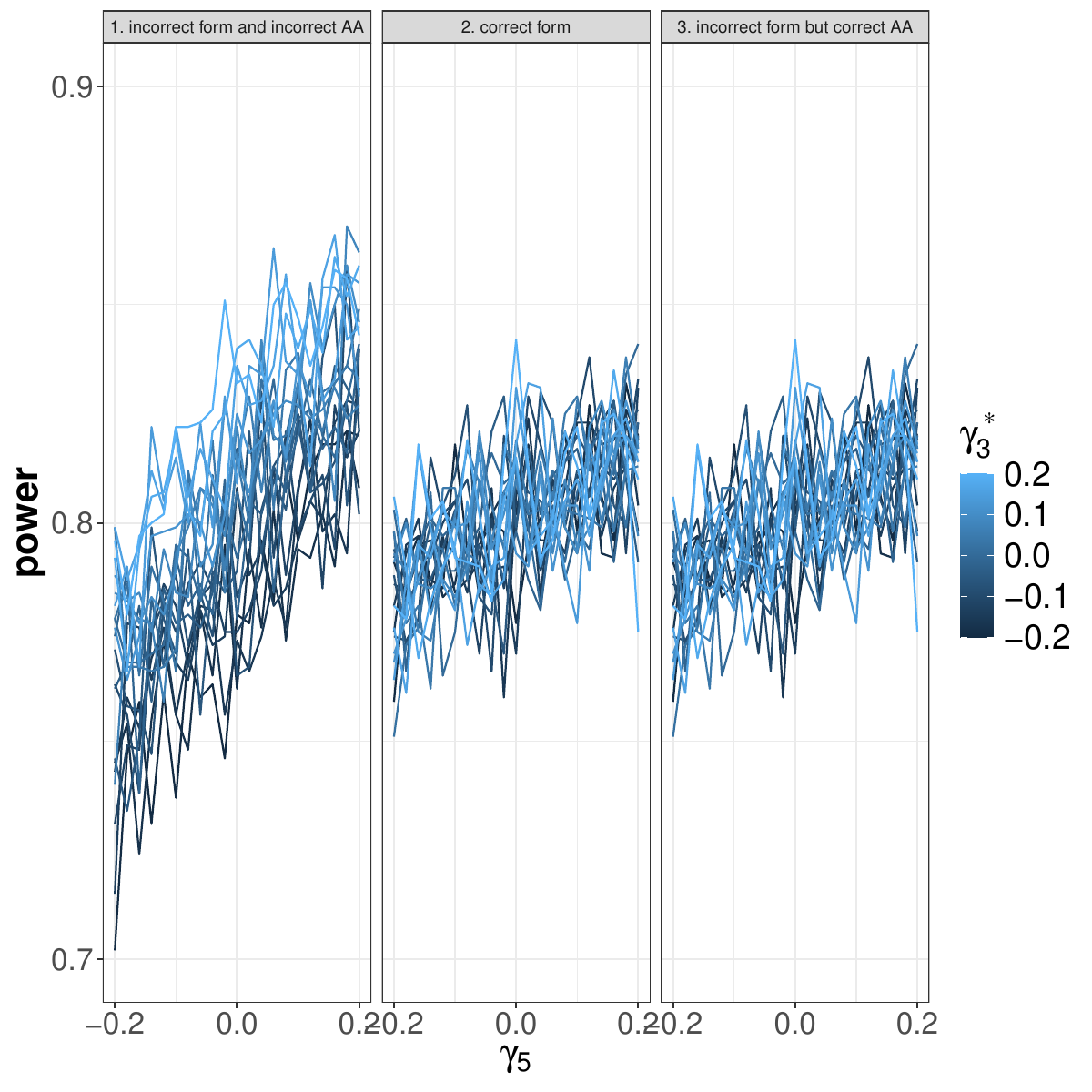}
        \caption{}  
        \label{subfig:one_treatment_previous_outcome_endogeneity}
    \end{subfigure}
    \begin{subfigure}[b]{0.54\textwidth}  
        \centering 
        \includegraphics[width=\textwidth]{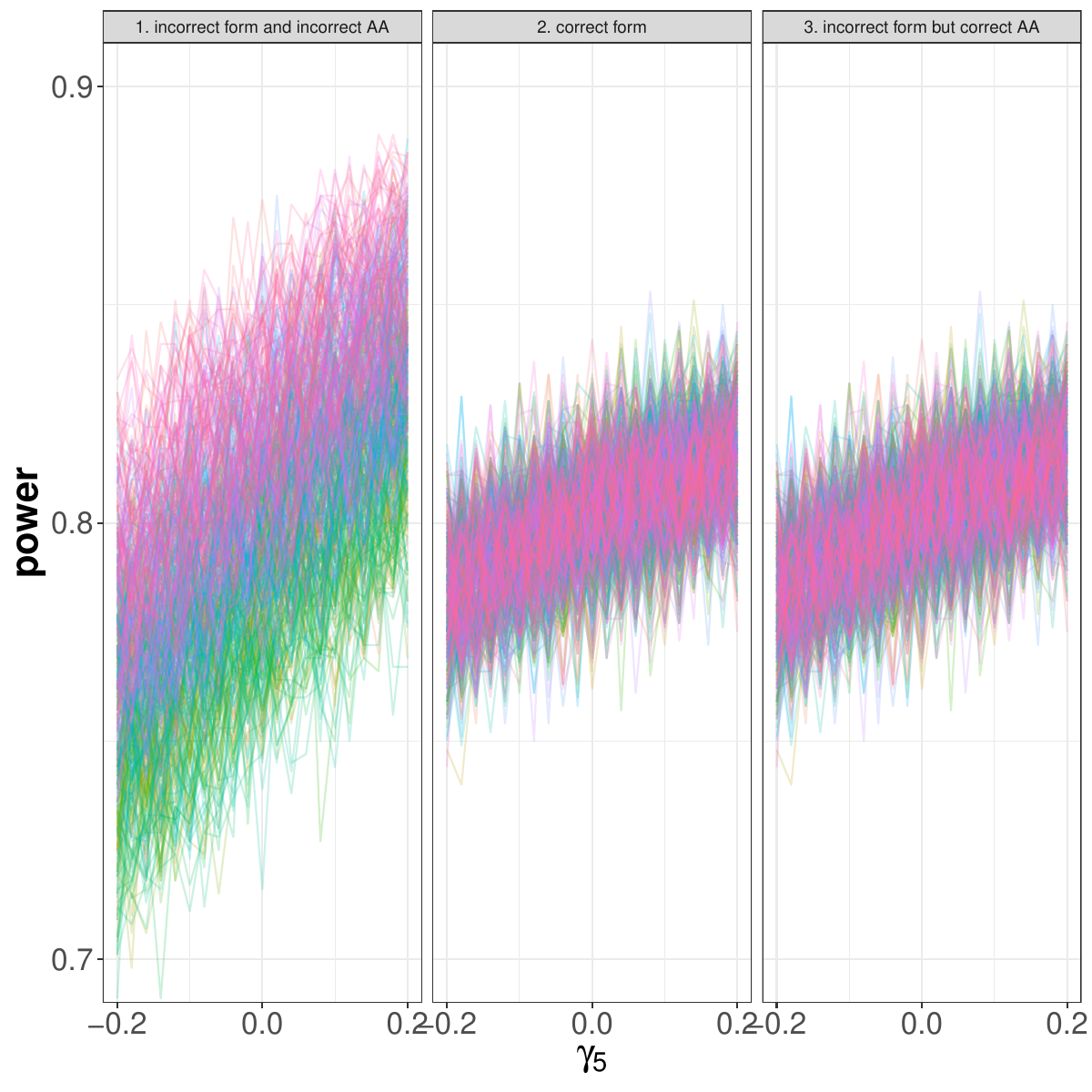}
        \caption{}  
        \label{subfig:combined_endogeneity}
    \end{subfigure}
    \caption{Power when (WA-e) is violated in that there is an ``endogeneity'' effect on the availability. \textbf{Panels (a):} Availability depends on the previous treatment, $\nu_2$ represents the effect of treatment 1 and $\nu_3$ represents the effect of treatment 2. \textbf{Panel (b):} Availability depends on the one of previous treatment and previous outcome, $\nu_2$ represents the effect of treatment 1 and $\nu_4$ represents the effect of the previous outcome. \textbf{Panel (c):} combination of \textbf{Panel (a)} and \textbf{Panel (b)}; the colors distinguish the combinations of $(\nu_2,\nu_3)$ values. In the axis and legend labels of all three panels, $\gamma_3$, $\gamma_4$ and $\gamma_5$ are the names of $\nu_2$, $\nu_3$ and $\nu_4$ in the simulation code.}
    \label{fig:endogeneity}
\end{figure}


\end{document}